\newcommand{\beq}[1]{\begin{equation}\label{#1}}
\newcommand{\eeq}{\end{equation}}
\newcommand{\beqn}[1]{\begin{eqnarray}\label{#1}}
\newcommand{\eeqn}{\end{eqnarray}}
\newtheorem{thmbody}{Theorem}
\newenvironment{thm}{
\begin{thmbody}
	}{
	\end{thmbody} 
	}
\newtheorem{dfnbody}{Definition}
\newtheorem{corbody}{Corollary}
\newtheorem{lemmabody}{Lemma}
\newenvironment{lemma}{
\begin{lemmabody}
	}{
	\end{lemmabody} 
	}
\newtheorem{propbody}{Proposition}
\newenvironment{proof}{
	{\it Proof:}
	}{
 $\Box$
	}
\begin{document}
\title{The Shannon Upper Bound for the Error Exponent}

\author{
\IEEEauthorblockN{Sergey Tridenski and Anelia Somekh-Baruch}\\
\IEEEauthorblockA{Faculty of Engineering\\Bar-Ilan University\\ Ramat-Gan, Israel\\
Email: tridens@biu.ac.il, somekha@biu.ac.il}
}

\maketitle
\begin{abstract}
For the discrete-time additive white generalized Gaussian noise channel with a generalized input power constraint,
with the 
respective shape and
power parameters $\geq 1$,
we derive an upper bound on the optimal block error exponent.
Explicit asymptotic upper bounds 
in the limit of a large block length $n$
are given for three special cases:
the Laplace noise channel and the Gaussian noise channel with the average absolute value constraint, 
and for the Laplace noise channel with the 
second 
power 
constraint. 
The derivation uses the method of types with finite alphabets of sizes 
depending on the block length $n$ and with the number of types sub-exponential in $n$.\footnote{
This work was supported by the Israel Science Foundation (ISF) grant \#1579/23.
}
\end{abstract}



%
%
%
%



\section{Introduction}\label{Int}


We study reliability of a generalized version of the additive white Gaussian noise (AWGN) channel, 
which is the discrete-time additive white generalized Gaussian noise (AWGGN) channel
with a generalized power constraint imposed on blocks of its inputs.
We assume the noise distribution symmetric and the generalized power parameters $\geq 1$ 
both for the noise distribution and the constraint.
As an example, consider the Laplace noise channel with a
conditional probability density function (PDF):
\begin{displaymath}
w(y \, | \, x) \; = \; \tfrac{\nu}{2}e^{-\nu|y \, - \, x|},
\;\;\;\;\;\; \nu > 0,
\end{displaymath}
and an average absolute value (i.e., power $1$) constraint imposed on blocks of $n$ inputs $(x_{1}, x_{2}, .\,.\,. \,, x_{n})$:
\begin{displaymath}
\frac{1}{n}\sum_{k \, = \, 1}^{n}|\,x_{k}\,| \; \leq \; s.
\end{displaymath}
The information capacity \cite[Eq.~9.8]{CoverThomas} 
of this channel is defined as the supremum
of the mutual information $I(\,{p\mathstrut}_{X}, \, w)$ over the channel input PDF ${p\mathstrut}_{X}$,
under the expected-absolute-value 
constraint $\mathbb{E}[|X|] \leq s$.
This capacity is upper-bounded by ${\log\mathstrut}_{\!2}(\nu s + 1)$ bits, as follows:
\begin{align}
\sup_{\substack{{p\mathstrut}_{X}:
\;
\mathbb{E}[|X|]\,\leq\,s
}}
I(\,{p\mathstrut}_{X}, \, w)
\;\; 
\underset{(a)}{\equiv} \;\; &
\sup_{\substack{{p\mathstrut}_{X}:
\;
\mathbb{E}[|X|]\,\leq\,s
}}
\;
\;
{\color{black}
\min_{\substack{\\ \eta \, > \, 0}}
}
\;
\Big\{
D\big(\, w \, \| \, \,{p\mathstrut}_{\eta} \, \, | \, \, {p\mathstrut}_{X}\big)
\, - \,
D\big(\,{p\mathstrut}_{Y} \, \| \, \,{p\mathstrut}_{\eta} \, \big)
\Big\}
\nonumber \\
\underset{(b)}{\leq} \;\; &
\sup_{\substack{{p\mathstrut}_{X}:
\;
\mathbb{E}[|X|]\,\leq\,s
}}
\;
\;
{\color{black}
\min_{\substack{\\ \eta \, > \, 0}}
}
\;
\,\,\,\,
D\big(\, w \, \| \, \,{p\mathstrut}_{\eta} \, \, | \, \, {p\mathstrut}_{X}\big)
\nonumber \\
\underset{(c)}{=} \;\; &
\sup_{\substack{{p\mathstrut}_{X}:
\;
\mathbb{E}[|X|]\,\leq\,s
}}
\;
\;
{\color{black}
\min_{\substack{\\ \eta \, > \, 0}}
}
\;
\,\,\,\,
\tfrac{1}{\ln 2}
\big[
\ln(\nu/\eta) \, + \, \eta 
\mathbb{E}_{\,{p\mathstrut}_{X}w}[|Y|] \, - \, 1
\big]
\nonumber \\
\underset{(d)}{=} \;\; &
\sup_{\substack{{p\mathstrut}_{X}:
\;
\mathbb{E}[|X|]\,\leq\,s
}}
\;
\;
{\color{white}
\min_{\substack{\\ \eta \, > \, 0}}
}
\;
\,\,\,\,
{\log\mathstrut}_{\!2}(\nu\mathbb{E}_{\,{p\mathstrut}_{X}w}[|Y|])
\label{eqShannonUpperB} \\
\underset{(e)}{=} \;\; &
\sup_{\substack{{p\mathstrut}_{X}:
\;
\mathbb{E}[|X|]\,\leq\,s
}}
\;
\;
{\color{white}
\min_{\substack{\\ \eta \, > \, 0}}
}
\;
\,\,\,\,
{\log\mathstrut}_{\!2}
\big(\nu\mathbb{E}\big[|X| + \tfrac{1}{\nu}e^{-\nu|X|}\big]
\big)
\;\; \underset{(f)}{=} \;\;
{\log\mathstrut}_{\!2}(\nu s + 1),
\label{eqExplicitSUB}
\end{align}
where in ($a$) we rewrite the mutual information equivalently using the Kullback–Leibler divergence between two PDFs,
one of which is ${p\mathstrut}_{\eta}(y) \triangleq \tfrac{\eta}{2}\exp\{-\eta|y|\}$
and another one is ${p\mathstrut}_{Y}(y) \equiv \int_{\mathbb{R}}{p\mathstrut}_{X}(x)w(y\,|\,x)dx$;
($b$) follows by the non-negativity of the divergence \cite[Eq.~8.57]{CoverThomas};
in ($c$) we assume that the mutual information and the divergences are defined in bits;
($d$) is the result of the minimization over $\eta > 0$;
in ($e$) we evaluate the conditional expectation given $X$ according to $w$;
and ($f$) follows as equality because the supremum is achieved asymptotically by the generalized density
${p\mathstrut}_{X}^{*}(x) \; = \; (1 - p)\delta(x) + p\delta(x - s/p)$, 
as $p \rightarrow 0$. The supremum of (\ref{eqShannonUpperB}) can be recognized as an example of the Shannon upper bound \cite[Theorem~18]{Shannon48}, 
\cite[Theorem~12.1.1]{CoverThomas}, \cite[Eq.~9b]{Dytso17}, \cite[Eq.~5]{Narayanan22}.
More specifically, (\ref{eqShannonUpperB})
is a special case of \cite[Eq.~9b]{Dytso17}. 
In this paper we use the technique leading to (\ref{eqShannonUpperB})
to derive an upper bound on the optimal exponent in the block error probability of
the AWGGN channel with a generalized power constraint.

In order to express the upper bound initially in terms of the mutual information and the divergence,
which are required for application of the above technique,
we extend the method of types \cite[Ch.~11.1]{CoverThomas}, \cite{Csiszar98},
as in \cite{TridenskiSomekhBaruh23}, to include
countable alphabets consisting of uniformly spaced real numbers, with the help of {\em generalized} power constraints on types.
The countable alphabets depend on the block length $n$ and the number of types
satisfying the generalized power constraints
is kept sub-exponential in $n$. 
As in \cite{TridenskiSomekhBaruh23}, in this paper the types correspond to empirical distributions
of
uniformly quantized real numbers 
in quantized versions of real channel input and output vectors
of length $n$. The quantized versions serve 
only for classification of actual real channel input and output vectors and not for 
the communication itself.
The uniform quantization step is 
different for the quantized versions of channel inputs and outputs,
and in both cases it
is chosen to be a decreasing function of $n$.

Our converse bound on the optimal error exponent of the AWGGN channel 
with a generalized power constraint for a given block length $n$,
presented in Theorem~\ref{thmErrorExp} of this paper,
contains an expression similar to that of the Shannon upper bound given by \cite[Eq.~9b]{Dytso17}, (\ref{eqShannonUpperB}), 
and recovers Shannon's sphere-packing converse bound on the error exponent of the AWGN channel with the 
second power constraint \cite[Eqs.~3,~4,~11]{Shannon59} 
in the limit of a large block length.
On the basis of Theorem~\ref{thmErrorExp}, we then derive three other asymptotic converse bounds, namely, for
the Laplace noise channel with the average absolute value constraint of the example above (Theorem~\ref{thmSUB11}),
for the AWGN channel with the average absolute value constraint (Theorem~\ref{thmSUB21}),
and for the Laplace noise channel with the 
second power constraint (Theorem~\ref{thmSUB12}).
As functions of the information rate $R$, all three explicit asymptotic bounds touch zero at 
$R$ equal to 
values of the Shannon upper bound on the information capacity,
such as (\ref{eqExplicitSUB}) in the case of Theorem~\ref{thmSUB11}.

In Sections~\ref{ComSys} and~\ref{Defs}, we describe the communication system and make preliminary
definitions. In Section~\ref{Main} we present the main results of the paper, which consist of the four
theorems with proofs, and illustrate them by graphs. 
The proof of Theorem~\ref{thmErrorExp} in Section~\ref{Main} relies on two lemmas developed in the later sections.
Section~\ref{MOT} provides an extension to the method of types. The
results of Section~\ref{MOT} are then applied in all the sections that follow. In Section~\ref{ConvLemma} we present
a converse lemma that is then used for derivation of an upper bound on the error
exponent in terms of types in Section~\ref{ErrExp}. Section~\ref{PDFtypePDF} connects between PDFs and types.

\section*{Notation}\label{Not}

Countable alphabets consisting of real numbers
are denoted by ${\cal X}_{n}$, ${\cal Y}_{n}$.
The set of types with denominator $n$ over 
${\cal X}_{n}$
is denoted by ${\cal P}_{n}({\cal X}_{n})$.
Capital `$P\,$' denotes probability mass functions, which are types:
${P\mathstrut}_{\!X}$, ${P\mathstrut}_{\!Y}$, ${P\mathstrut}_{\!XY}$, ${P\mathstrut}_{\!Y|X}$.
The type class and the support of a type ${P\mathstrut}_{\!X}$ are denoted by
$T({P\mathstrut}_{\!X})$ and ${\cal S}({P\mathstrut}_{\!X})$, respectively.
The expectation w.r.t. a probability distribution ${P\mathstrut}_{\!X}$ is denoted by $\mathbb{E}_{{P\mathstrut}_{\!X}}[\,\cdot\,]$.
Small `$p$' denotes probability density functions: ${p\mathstrut}_{X}$, ${p\mathstrut}_{Y}$, ${p\mathstrut}_{XY}$,
${p\mathstrut}_{Y|X}$.
Thin letters $x$, $y$ represent real values, while thick letters ${\bf x}$, ${\bf y}$ represent real vectors. 
Capital letters $X$, $Y$ 
represent random variables, 
boldface ${\bf Y}$ 
represents a random vector of length $n$.
The conditional type class of ${P\mathstrut}_{\!X|\,Y}$ given ${\bf y}$ is denoted by $T({P\mathstrut}_{\!X|\,Y}\,|\, {\bf y})$.
The 
quantized versions of variables are denoted by a superscript `$*$': $x^{*}_{k}$, ${\bf x}^{*}$, ${\bf Y}^{*}$.
Small $w$ stands for a conditional PDF, and
${W\mathstrut}_{\!n}$ stands for a discrete positive measure, which does not necessarily add up to $1$.
All information-theoretic quantities 
such as joint and conditional entropies
$H({P\mathstrut}_{\!XY})$, $H(Y\,|\,X)$, the mutual information
$I({P\mathstrut}_{\!XY})$,
$I\big({P\mathstrut}_{\!X}, {P\mathstrut}_{\!Y|X}\big)$,
$I\big({P\mathstrut}_{\!X}, \, {p\mathstrut}_{Y|X}\big)$, the Kullback-Leibler divergence
$D\big({P\mathstrut}_{\!Y|X}\,\|\, {W\mathstrut}_{\!n} \,|\,  {P\mathstrut}_{\!X}\big)$,
$D\big(\,{p\mathstrut}_{Y|X}\,\|\, \, w \, \,|\, {P\mathstrut}_{\!X}\big)$,
and the information rate $R$
are defined with respect to the
logarithm to a base $b>1$, denoted as
$\,{\log\mathstrut}_{\!b}(\cdot)$.
It is assumed that $0 \,{\log\mathstrut}_{\!b} (0) = 0$.
The natural logarithm is denoted as $\ln$.
The cardinality of a discrete set is denoted by $|\,\cdot\,|$, while the volume of a continuous region is denoted by $\text{vol}\,(\cdot)$.
The complementary set of a set $A$ is denoted by $A{\mathstrut}^{c}$.
Logical ``or'' and ``and'' are represented by the symbols $\lor$ and $\land$, respectively.
In Appendix B, $\widehat{p}{\mathstrut}_{XY}$ represents the rounded down version of the PDF ${p\mathstrut}_{XY}$.



\section{Communication system}\label{ComSys}


We consider communication over the time-discrete 
additive white {\em generalized} Gaussian noise
channel
with real channel inputs $x \in \mathbb{R}$ and channel outputs $y \in \mathbb{R}$
and a 
transition probability density
\begin{equation} \label{eqChannel}
w(y \, | \, x) \;\; \triangleq \;\; \tfrac{
q\nu^{1/q}}{2\Gamma(1/q)}\exp\big\{\!-\!\nu|y - x|^{q}\big\},
\;\;\;\;\;\; 0 < \nu \in \mathbb{R}, \;\;\; 1 \leq q \in \mathbb{R}.
\end{equation}

Communication is performed by blocks of $n$ channel inputs. Let $R > 0$ denote a nominal information rate.
Each block is used for transmission of one out of $M$ messages,
where $M = M(n, R) \triangleq \lfloor b^{\,n R}\rfloor$, for some logarithm base $b > 1$.
The encoder is a deterministic function $f\!\!: \{1, 2, .\,.\,.\, , \, M\} \rightarrow \mathbb{R}{\mathstrut}^{n}$,
which converts a message into a transmitted block,
such that
\begin{displaymath}
f(m) \; = \; {\bf x}(m)
\; = \; \big(
x_{1}(m), \,x_{2}(m), .\,.\,.\, , \, x_{n}(m)
\big),
\;\;\;\;\;\;\;\;\; m = 1, 2, .\,.\,.\, , \, M,
\end{displaymath}
where $x_{k}(m) \in \mathbb{R}$, for all $k = 1, 2, .\,.\,.\, , \, n$.
The set of all the 
codewords ${\bf x}(m)$, $m = 1, 2, .\,.\,.\, , \, M$, constitutes a codebook ${\cal C}$.
Each codeword ${\bf x}(m)$ in ${\cal C}$ satisfies the generalized power constraint with a real $r\geq 1$:
\begin{equation} \label{eqPowerConstraint}
\frac{1}{n}\sum_{k \, = \, 1}^{n}|\,x_{k}(m)\,|^{\,r} \; \leq \; s^{r},
\;\;\;\;\;\;\;\;\; m = 1, 2, .\,.\,.\, , \, M.
\end{equation}
The decoder is another deterministic function
$g\!\!: \mathbb{R}{\mathstrut}^{n} \rightarrow \{0, 1, 2, .\,.\,.\, , \, M\}$,
which converts the received block of $n$ channel outputs ${\bf y} \in \mathbb{R}{\mathstrut}^{n}$
into an estimated message, or, possibly, to 
a special error symbol `$0$':
\begin{equation} \label{eqDec}
g({\bf y})
\;\; = \;\;
\Bigg\{
\begin{array}{r l}
0, & \;\;\; {\bf y} \in \bigcap_{\,m \, = \, 1}^{\,M} {\cal D}{\mathstrut}_{m}^{c}, \\
m, & \;\;\; {\bf y} \in {\cal D}{\mathstrut}_{m}, \;\;\; 
m \in \{1, 2, .\,.\,.\, , \, M\},
\end{array}
\end{equation}
where each set ${\cal D}{\mathstrut}_{m} \subseteq \mathbb{R}{\mathstrut}^{n}$ is either an open region or the empty set, 
and the
regions are disjoint:
${\cal D}{\mathstrut}_{m} \cap \,{\cal D}{\mathstrut}_{m'} = \varnothing$
for $m \neq m'$. 
Observe that the maximum-likelihood decoder with open decision regions ${\cal D}{\mathstrut}_{m}^{*}\,$, defined for $m = 1, 2, .\,.\,.\, , \, M$
as
\begin{displaymath}
{\cal D}{\mathstrut}_{m}^{*}
\;\; = \;\;
\mathbb{R}{\mathstrut}^{n} \setminus
\bigcup_{
m':\;\;
(m' \, < \; m)
\; \lor \;
\big(\,m' \, > \, m \;\, \land \;\, {\bf x}(m') \, \neq \, {\bf x}(m)\,\big)
}
\Big\{
{\bf y}: \;
{\|
{\bf y} - {\bf x}(m')
\|\mathstrut}_{q}
\, \leq \,
{\|
{\bf y} - {\bf x}(m)
\|\mathstrut}_{q}
\Big\},
\end{displaymath}
is a special case of (\ref{eqDec}).
Note that the formal
definition of ${\cal D}{\mathstrut}_{m}^{*}$ includes the undesirable possibility of ${\bf x}(m') = {\bf x}(m)$ for $m' \neq m$.


\section{Definitions}\label{Defs}


For each $n$, we define two discrete countable alphabets ${\cal X}_{n}$ and ${\cal Y}_{n}$ 
as
one-dimensional lattices:
\begin{align}
&
\alpha, \beta, \gamma \in (0, 1), \;\;\;
\alpha + \beta + \gamma = 1,
\nonumber \\
&
\Delta_{\alpha,\,n} \; \triangleq \; 1/n^{\alpha},
\;\;\;
\Delta_{\beta,\,n} \; \triangleq \; 1/n^{\beta},
\;\;\;
\Delta_{\gamma,\,n} \; \triangleq \; 1/n^{\gamma},
\label{eqDelta} \\
& \Delta_{\alpha,\,n}\cdot \Delta_{\beta,\,n}\cdot \Delta_{\gamma,\,n} \;\; = \;\; 1/n,
\label{eqCube}
\end{align}
\begin{align}
&
{\cal X}_{n} \;\; \triangleq \;\;
\bigcup_{i \, \in \, \mathbb{Z}}\big\{i\Delta_{\alpha,\,n}\big\},
\;\;\;\;\;\;
{\cal Y}_{n} \;\; \triangleq \;\;
\bigcup_{i \, \in \, \mathbb{Z}}\big\{i\Delta_{\beta,\,n}\big\}.
\label{eqAlphabets}
\end{align}
For each $n$, we define also a discrete positive measure (not necessarily a distribution), which will approximate the channel $w$:
\begin{align}
{W\mathstrut}_{\!n}(y \, | \, x) \;\; & \triangleq \;\;
w(y \, | \, x)\cdot\Delta_{\beta,\,n},
\;\;\; \forall x \in {\cal X}_{n}, \; \forall y \in {\cal Y}_{n}.
\label{eqChanApprox}
\end{align}
The following set of Lipschitz-continuous functions will be used in the derivation of the error exponent:
\begin{align}
{\cal L} \; & \triangleq \;
\bigg\{
f\!\! : \mathbb{R} \rightarrow \mathbb{R}_{\,\geq\, 0}
\;\; \Big| \;\;
|f(y_{1}) - f(y_{2})| \leq  
K|y_{1} - y_{2}|,
\; \forall y_{1}, y_{2}\,;
\;
\int_{\mathbb{R}}f(y)dy = 1
\bigg\}, 
\label{eqLipschitz} \\
K \; & \triangleq \; 
\sup_{y\,<\,0}\frac{dw(y\,|\,0)}{dy}
\; = \; 
\Bigg\{
\begin{array}{l l}
\nu^{2}/2, & \;\;\; q = 1, \\
\tfrac{(q\nu)^{2/q}}{2\Gamma(1/q)}\big[\tfrac{(q\,-\,1)q}{e}\big]^{(q\,-\,1)/q}, & \;\;\; q > 1.
\end{array}
\nonumber
\end{align}
Note that ${\cal L}$ is a convex set and also 
each function $f \in {\cal L}$ is 
bounded and cannot exceed $\sqrt{K}$.
With parameters $\lambda$, $\kappa$, $\eta$, 
we define two auxiliary generalized Gaussian PDFs:
\begin{align}
{p\mathstrut}_{\lambda, \, \kappa}(y \, | \, x) 
\;\; & \triangleq \;\; 
\tfrac{q(\lambda\nu)^{1/q}}{2\Gamma(1/q)}\exp\big\{\!-\!\lambda\nu|y -\kappa x|^{q}\big\},
\;\;\;\;\;\; 0 < \lambda \leq 1, \;\;\; \kappa \in \mathbb{R},
\label{eqPlambdakappa} \\
{p\mathstrut}_{\eta}(y) 
\;\; & \triangleq \;\; 
\tfrac{r\eta^{1/r}}{2\Gamma(1/r)}\exp\big\{\!-\!\eta|y|^{r}\big\},
\;\;\;\;\;\;\;\;\;\;\;\;\;\;\;\;\;\, 0 < \eta \in \mathbb{R}.
\label{eqPeta}
\end{align}


\section{Main results}\label{Main}


In this section we present four theorems and their proofs.
Theorem~\ref{thmErrorExp} gives a converse bound on the optimal error exponent of the AWGGN channel 
with a generalized power constraint for a given block length $n$. 
The remaining three theorems give asymptotic converse bounds 
on the optimal error exponent in the limit of a large block length $n$  
for three special cases: the Laplace noise channel with the average absolute value constraint (Theorem~\ref{thmSUB11}),
the AWGN channel with the average absolute value constraint (Theorem~\ref{thmSUB21}),
and finally the Laplace noise channel with the  $2$nd power constraint (Theorem~\ref{thmSUB12}).
These results are followed by graphs depicting the three special cases at the end of the section.

The proof of the first theorem relies on Lemmas~\ref{LemAllCodebooks} and~\ref{LemPDFtoT},  
which appear in Sections~\ref{ErrExp} and~\ref{PDFtypePDF}, respectively.
The proofs of the subsequent Theorems~\ref{thmSUB11},~\ref{thmSUB21},~\ref{thmSUB12}, in turn, rely 
on Theorem~\ref{thmErrorExp}.

\bigskip

\begin{thm}[Shannon Upper Bound for the error exponent] \label{thmErrorExp}
{\em  Let the channel be given by (\ref{eqChannel}), let codebooks ${\cal C}$ satisfy (\ref{eqPowerConstraint}),
and let decoders $g$ be defined by (\ref{eqDec}).
Let $J \sim \text{Unif}\,\big(\{1, 2, .\,.\,.\, , \, M\}\big)$ be a random variable,
independent of the channel noise, and 
let ${\bf x}(J) \rightarrow {\bf Y}$ be the random channel-input and channel-output vectors, respectively. 
Let ${\cal X}_{n}$ be an alphabet defined as in (\ref{eqAlphabets}), (\ref{eqDelta})
with $\alpha \in \big(0, \tfrac{r}{1\,+\,2r}\big)$ 
and let $\alpha < \beta < \tfrac{r}{1\,+\,r}(1-\alpha)$.
Then for any $c_{XY}$ and $\epsilon > 0$ there exists $n_{0} = n_{0}(\alpha, \,\beta, \,s, \,c_{XY}, \,r, \,q, \,\nu, \, \epsilon)$
such that for any 
$n > n_{0}$}
\begin{align}
&
\sup_{{\cal C}}
\;
\sup_{g}
\;
\Big\{\!
- \tfrac{1}{n}\,{\log\mathstrut}_{\!b} \Pr \big\{
g({\bf Y}) \neq J \big\}
\Big\} \; \leq \;
\!\!
\max_{\substack{\\{P\mathstrut}_{\!X}:\\
{P\mathstrut}_{\!X}\,\in \, {\cal P}_{n}({\cal X}_{n}),
\\
\mathbb{E}_{{P\mathstrut}_{\!X}}[{|X|\mathstrut}^{r}] \; \leq \; s^{r} + \, \epsilon}}
{\color{black}
\sup_{\substack{\rho\,\geq\,0}}
}
\inf_{\substack{\\{\color{white}{P\mathstrut}_{\!X}}\lambda, \, \kappa:{\color{white}{P\mathstrut}_{\!X}}
\\
{\color{white}{P\mathstrut}_{\!X}}\lambda \, \in \, (0, \, 1], \; \kappa \, \in \, \mathbb{R}, {\color{white}{P\mathstrut}_{\!X}}
\\
\mathbb{E}_{{P\mathstrut}_{\!X}{p\mathstrut}_{\lambda, \, \kappa}}[{|Y-X|\mathstrut}^{q}] \; \leq \; c_{XY}
}}
\!\!\!\!
\Big\{\!
-\rho(R - \epsilon)
\, +
\nonumber \\
&
\tfrac{1}{\ln b}\Big[
\nu
\mathbb{E}_{{P\mathstrut}_{\!X}{p\mathstrut}_{\lambda, \, \kappa}}\big[{|Y-X|\mathstrut}^{q}\big]
\, + \,
\tfrac{\rho}{r}
\ln
\mathbb{E}_{{P\mathstrut}_{\!X}{p\mathstrut}_{\lambda, \, \kappa}}\big[{|Y|\mathstrut}^{r}\big]
\, + \,
\rho\ln
\tfrac{\Gamma(1 \, + \, 1/r)}{\Gamma(1 \, + \, 1/q)}
\, + \,
\tfrac{\rho}{q}\ln \nu
\, + \,
\tfrac{\rho\,+\,1}{q}\ln (\lambda/e)
\, + \,
\tfrac{\rho}{r}\ln(re)
\Big]\Big\}
\nonumber \\
&
\;\;\;\;\;\;\;\;\;\;\;\;\;\;\;\;\;\;\;\;\;\;\;\;\;\;\;\;\;\;\;\;\;\;\;\;\;\;\;\;\;\;\;\;\;\;\;\;\;\;\;\;\;\;\;\;\;\;\;\;
\;\;\;\;\;\;\;\;\;\;\;\;\;\;\;\;\;\;\;\;\;\;\;\;\;\;\;\;\;\;\;\;\;\;\;\;\;\;\;\;\;\;\;\;\;\;\;\;\;\;\;\;\;\;\;\;\;\;\;\;
\;\;\;\;\;\;\;\;\;\;\,\,
+ \, o(1),
\label{eqGenegalqr}
\end{align}
{\em where 
${p\mathstrut}_{\lambda, \, \kappa}$ are defined by (\ref{eqPlambdakappa}),
and $o(1)\rightarrow 0$, as $n\rightarrow \infty$, 
depending only on the parameters $\alpha$, $\beta$, $s$, $c_{XY}$, $\epsilon$, $r$, $q$, $\nu$.
The inequality holds
for all $R > 0$ with the possible exception of the single point on the $R$-axis
where the RHS of (\ref{eqGenegalqr}) as a function of $R$ may transition between $+\infty$ and a finite value.}
\end{thm}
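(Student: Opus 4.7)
The plan is to reduce the problem to a finite-type converse via Lemma~\ref{LemAllCodebooks}, to lift the inner optimization from empirical conditional distributions to PDFs via Lemma~\ref{LemPDFtoT}, and then to apply to the resulting divergence-plus-mutual-information expression the Shannon-style manipulation already illustrated in (\ref{eqShannonUpperB}) of the introduction.

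First, I would invoke Lemma~\ref{LemAllCodebooks} to upper-bound $-\tfrac{1}{n}{\log\mathstrut}_{\!b}\Pr\{g(\bY)\neq J\}$ by a Gallager/Arimoto-type expression of the form
\[
\max_{P_X\,\in\,\cP_n(\cX_n)}\;\sup_{\rho\,\geq\, 0}\;\inf_{P_{Y|X}}\Big\{\!-\rho(R-\epsilon)+D(P_{Y|X}\|W_n\,|\,P_X)+\rho\, I(P_X,P_{Y|X})\Big\}+o(1),
\]
where $P_X$ ranges over types on $\cX_n$ satisfying the slightly slackened power constraint $\mathbb{E}_{P_X}[|X|^r]\leq s^r+\epsilon$, the inner infimum runs over conditional types, and $W_n$ is the discrete channel measure of (\ref{eqChanApprox}). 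The $o(1)$ absorbs the sub-exponential polynomial type-counting correction, which is valid under $\alpha<r/(1+2r)$ because this hypothesis keeps the number of input types with bounded empirical $r$-th moment sub-exponential in $n$. I would then use Lemma~\ref{LemPDFtoT} to replace the inner infimum over conditional types by an infimum over conditional PDFs drawn from the Lipschitz class $\cL$ of (\ref{eqLipschitz}), exchanging $D(P_{Y|X}\|W_n\,|\,P_X)$ for the continuous $D(p_{Y|X}\|w\,|\,P_X)$ at the cost of a further $o(1)$ controlled by the Lipschitz constant $K$ and the shrinking mesh $\Delta_{\beta,n}$; this is where $\beta<\tfrac{r}{1+r}(1-\alpha)$ is needed. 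Restricting the infimum still further to the two-parameter subfamily $p_{\lambda,\kappa}$ of (\ref{eqPlambdakappa}) only enlarges the upper bound, and the constraint $\mathbb{E}_{P_X p_{\lambda,\kappa}}[|Y-X|^q]\leq c_{XY}$ confines $(\lambda,\kappa)$ to a compact region.

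The Shannon trick is then carried out on the $\rho I(P_X,p_{\lambda,\kappa})$ term: for each fixed $(\rho,P_X,\lambda,\kappa)$, I use the identity $I(P_X,p_{\lambda,\kappa})=\min_{\eta\,>\,0}\{D(p_{\lambda,\kappa}\|\,p_\eta\,|\,P_X)-D(p_Y\|\,p_\eta)\}$, drop the non-negative second divergence to get an upper bound, and evaluate $D(p_{\lambda,\kappa}\|\,p_\eta\,|\,P_X)$ in closed form against $p_\eta$ of (\ref{eqPeta}). The stationary point is $\eta^{*}=1/(r\,\mathbb{E}_{P_Xp_{\lambda,\kappa}}[|Y|^r])$, and substitution produces $\tfrac{\rho}{r}\ln\mathbb{E}[|Y|^r]+\tfrac{\rho}{r}\ln(re)$ together with the constants $\rho\ln\tfrac{\Gamma(1+1/r)}{\Gamma(1+1/q)}$ and $\tfrac{\rho}{q}\ln\nu$ arising from the Gaussian normalizers (after using $\Gamma(1/q)=q\Gamma(1+1/q)$). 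The base term $D(p_{\lambda,\kappa}\|w\,|\,P_X)$, also computed directly, contributes $\nu\mathbb{E}[|Y-X|^q]+\tfrac{1}{q}\ln(\lambda/e)$; combined with the $\tfrac{\rho}{q}\ln\lambda$ part from the Shannon step, the $\lambda$-dependent constants aggregate to $\tfrac{\rho+1}{q}\ln(\lambda/e)$, reproducing (\ref{eqGenegalqr}) exactly after multiplication by $1/\ln b$.

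The main obstacle is the three-layer approximation bookkeeping that must be uniform in $(\rho,P_X,\lambda,\kappa)$: (i) the input-type counting must remain sub-exponential in $n$ in spite of the constraint being on $\mathbb{E}[|X|^r]$ rather than on support size, which is exactly what dictates $\alpha<r/(1+2r)$; (ii) the continuity estimate $|D(p_{Y|X}\|w\,|\,P_X)-D(P_{Y|X}\|W_n\,|\,P_X)|=o(1)$ must hold uniformly over the admissible two-parameter family, which forces the intermediate relation $\alpha<\beta<\tfrac{r}{1+r}(1-\alpha)$; and (iii) the exceptional rate point mentioned in the statement is the threshold at which the $c_{XY}$-constrained set of $(\lambda,\kappa)$ collapses, so that the infimum transitions between $+\infty$ and a finite value at exactly one rate by a direct monotonicity-in-$R$ argument. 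Interchanging the outer $\max$--$\sup$--$\inf$ is harmless because for each $n$ the set of types is finite, so no measurability or duality issue arises.
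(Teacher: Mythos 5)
Your proposal follows the paper's own proof essentially step for step: Lemma~\ref{LemAllCodebooks} gives the type-based converse, Lemma~\ref{LemPDFtoT} lifts it to conditional PDFs, the infimum is restricted to the two-parameter family $p_{\lambda,\kappa}$, the Shannon trick $I(P_X, p_{\lambda,\kappa}) \le D(p_{\lambda,\kappa}\,\|\,p_\eta\,|\,P_X)$ is applied, and the explicit evaluation of the two divergences followed by the minimization over $\eta$ reproduces every term of (\ref{eqGenegalqr}); your algebra matches the paper's steps $(d)$--$(e)$ exactly.

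Two details of the bookkeeping are misattributed. You claim $\alpha < r/(1+2r)$ is what keeps the number of input types sub-exponential, but that is not where the constraint comes from: the type count of Lemma~\ref{LemNumofTypes} grows like $\exp\!\big(O\big(n^{(1+r\alpha)/(1+r)}\log n\big)\big)$, which is sub-exponential for every $\alpha\in(0,1)$. The restrictions $\alpha\in\big(0,\tfrac{r}{1+2r}\big)$ and $\alpha<\beta<\tfrac{r}{1+r}(1-\alpha)$ are what make the two quantization exponents $\delta=\min\{\beta,1-\beta\}-\alpha$ and $\delta_1 = \min\{\beta,\;(1-\alpha)r/(1+r)-\beta\}$ in the proof of Lemma~\ref{LemQuant} simultaneously positive; they are needed for Lemma~\ref{LemPDFtoT}, not for counting types. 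Second, the paper introduces the Lagrangian multiplier $\rho$ \emph{after} passing to PDFs (step~$(a)$: the rate-constrained infimum is convex in $R$ and the $\sup_\rho$ expression is its lower convex envelope, whence they agree except at the single transition point), and this is precisely where the exceptional rate in the theorem statement originates. You present the Lagrangianized form as if it came out of Lemma~\ref{LemAllCodebooks} directly and then apply Lemma~\ref{LemPDFtoT} to the already-relaxed inner infimum; that lemma, however, is stated for the rate-constrained minimum, so your ordering would require a relaxed version of it, whereas the paper's ordering applies the lemma verbatim. Your description of the exceptional point as ``the $c_{XY}$-constrained set of $(\lambda,\kappa)$ collapsing'' describes the location of the $+\infty$ threshold in the final bound but is not the mechanism the proof relies on.
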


\bigskip

\begin{proof}
Using Lemmas~\ref{LemAllCodebooks} and~\ref{LemPDFtoT} we can write 
with any $c_{XY}$ and $\epsilon > 0$
for $n$ sufficiently large:
\begin{align}
&
\sup_{{\cal C}}
\;
\sup_{g}
\;
\Big\{\!
- \tfrac{1}{n}\,{\log\mathstrut}_{\!b} \Pr \big\{
g({\bf Y}) \neq J \big\}
\Big\}
\nonumber \\
\overset{a}{\leq} \; &
\!
\max_{\substack{\\{P\mathstrut}_{\!X{\color{white}|}}\!\!:\\
{P\mathstrut}_{\!X}\,\in \, {\cal P}_{n}({\cal X}_{n}),
\\
\mathbb{E}_{{P\mathstrut}_{\!X}}[{|X|\mathstrut}^{r}] \; \leq \; s^{r} + \, \epsilon}}
\;\;
\min_{\substack{\\{P\mathstrut}_{\!Y|X}:\\
{P\mathstrut}_{\!XY}\,\in \, {\cal P}_{n}({\cal X}_{n}\,\times\, {\cal Y}_{n}),
\\
\mathbb{E}_{{P\mathstrut}_{\!XY}}[{|Y-X|\mathstrut}^{q}] \; \leq \; c_{XY} + \, \epsilon,
\\ I({P\mathstrut}_{\!X}, \, {P\mathstrut}_{\!Y|X}) \; \leq \; R \, - \, \epsilon
}}
\!\!\!\!\!\!\!
\Big\{
D\big({P\mathstrut}_{\!Y|X}\,\|\, {W\mathstrut}_{\!n} \,|\,  {P\mathstrut}_{\!X}\big)
\Big\} \; + \; o(1)
\label{eqCCExponent} \\
\overset{b}{\leq} \; &
\!
\max_{\substack{\\{P\mathstrut}_{\!X{\color{white}|}}\!\!:\\
{P\mathstrut}_{\!X}\,\in \, {\cal P}_{n}({\cal X}_{n}),
\\
\mathbb{E}_{{P\mathstrut}_{\!X}}[{|X|\mathstrut}^{r}] \; \leq \; s^{r} + \, \epsilon}}
\;\;
\inf_{\substack{{p\mathstrut}_{Y|X}:\\
{p\mathstrut}_{Y|X}(\, \cdot \, \,|\, x) \, \in \, {\cal L}, \; \forall x,
\\
\mathbb{E}_{{P\mathstrut}_{\!X}{p\mathstrut}_{Y|X}}
[{|Y-X|\mathstrut}^{q}] \; \leq \; c_{XY},
\\ I({P\mathstrut}_{\!X}, \,\, {p\mathstrut}_{Y|X}) \; \leq \; R \, - \, 2\epsilon
}}
\!\!\!\!\!\!\!\!
\Big\{
D\big(\,{p\mathstrut}_{Y|X}\,\|\, \, w \, \,|\, {P\mathstrut}_{\!X}\big)
\Big\} \; + \; o(1),
\label{eqInf11}
\end{align}
where ($a$) holds by Lemma~\ref{LemAllCodebooks};
then ($b$) holds by Lemma~\ref{LemPDFtoT} with $c_{X} = s^{r} \!+ \epsilon$,
for the alphabet parameters $\alpha \in \big(0, \tfrac{r}{1\,+\,2r}\big)$ and 
$\alpha < \beta < \tfrac{r}{1\,+\,r}(1-\alpha)$. 
Consider next the following upper bounds on the infimum of (\ref{eqInf11}) for a given ${P\mathstrut}_{\!X}$:
\begin{align}
&
{\color{white}
\sup_{\substack{\rho\,\geq\,0}}
}
\inf_{\substack{{p\mathstrut}_{Y|X}:\\
{p\mathstrut}_{Y|X}(\, \cdot \, \,|\, x) \, \in \, {\cal L}, \; \forall x,
\\
\mathbb{E}_{{P\mathstrut}_{\!X}{p\mathstrut}_{Y|X}}
[{|Y-X|\mathstrut}^{q}] \; \leq \; c_{XY},
\\ I({P\mathstrut}_{\!X}, \,\, {p\mathstrut}_{Y|X}) \; \leq \; R \, - \, 2\epsilon
}}
\!\!\!\!\!\!\!\!
\Big\{
D\big(\,{p\mathstrut}_{Y|X}\,\|\, \, w \, \,|\, {P\mathstrut}_{\!X}\big)
\Big\}
\label{eqInf1} \\
\overset{a}{=}
\;\; &
{\color{black}
\sup_{\substack{\rho\,\geq\,0}}
}
\inf_{\substack{{p\mathstrut}_{Y|X}:\\
{p\mathstrut}_{Y|X}(\, \cdot \, \,|\, x) \, \in \, {\cal L}, \; \forall x,
\\
\mathbb{E}_{{P\mathstrut}_{\!X}{p\mathstrut}_{Y|X}}
[{|Y-X|\mathstrut}^{q}] \; \leq \; c_{XY}{\color{white},}
}}
\!\!\!\!\!\!\!\!
\Big\{
D\big(\,{p\mathstrut}_{Y|X}\,\|\, \, w \, \,|\, {P\mathstrut}_{\!X}\big)
\, + \,
\rho
\Big[
I\big({P\mathstrut}_{\!X}, \, {p\mathstrut}_{Y|X}\big)
-
R
+
2\epsilon
\Big]
\Big\}
\label{eqInf2} \\
\overset{b}{\leq}  \;\; &
{\color{black}
\sup_{\substack{\rho\,\geq\,0}}
}
\inf_{\substack{\lambda, \, \kappa:\\
\lambda \, \in \, (0, \, 1], \; \kappa \, \in \, \mathbb{R},
\\
\,\mathbb{E}_{{P\mathstrut}_{\!X}{p\mathstrut}_{\lambda, \, \kappa}}[{|Y-X|\mathstrut}^{q}] \; \leq \; c_{XY}{\color{white},}\,
}}
\!\!\!\!\!\!\!\!
\Big\{
D\big(\,{p\mathstrut}_{\lambda, \, \kappa}\,\|\, \, w \, \,|\, {P\mathstrut}_{\!X}\big)
\, + \,
\rho
\Big[
I\big({P\mathstrut}_{\!X}, \, {p\mathstrut}_{\lambda, \, \kappa}\big)
 -
R
+
2\epsilon
\Big]
\Big\}
\nonumber \\
\overset{c}{\leq}  \;\; &
{\color{black}
\sup_{\substack{\rho\,\geq\,0}}
}
\inf_{\substack{\lambda, \, \kappa, \, \eta:\\
\lambda \, \in \, (0, \, 1], \; \kappa \, \in \, \mathbb{R},
\; \eta \, > \, 0, 
\\
\,\mathbb{E}_{{P\mathstrut}_{\!X}{p\mathstrut}_{\lambda, \, \kappa}}[{|Y-X|\mathstrut}^{q}] \; \leq \; c_{XY}{\color{white},}\,
}}
\!\!\!\!\!\!\!\!
\Big\{
D\big(\,{p\mathstrut}_{\lambda, \, \kappa}\,\|\, \, w \, \,|\, {P\mathstrut}_{\!X}\big)
\, + \, 
\rho
\Big[
D\big(\,{p\mathstrut}_{\lambda, \, \kappa}\,\|\, \, {p\mathstrut}_{\eta} \, \,|\, {P\mathstrut}_{\!X}\big)
-
R
+
2\epsilon
\Big]
\Big\}
\nonumber \\
\overset{d}{=}  \;\; &
{\color{black}
\sup_{\substack{\rho\,\geq\,0}}
}
\inf_{\substack{\lambda, \, \kappa, \, \eta:\\
\lambda \, \in \, (0, \, 1], \; \kappa \, \in \, \mathbb{R},
\; \eta \, > \, 0,
\\
\,\mathbb{E}_{{P\mathstrut}_{\!X}{p\mathstrut}_{\lambda, \, \kappa}}[{|Y-X|\mathstrut}^{q}] \; \leq \; c_{XY}{\color{white},}\,
}}
\!\!\!\!\!\!\!\!
\Big\{\tfrac{1}{\ln b}\Big[
\nu
\mathbb{E}_{{P\mathstrut}_{\!X}{p\mathstrut}_{\lambda, \, \kappa}}\big[{|Y-X|\mathstrut}^{q}\big]
\, + \,
\underbrace{
\rho\eta
\mathbb{E}_{{P\mathstrut}_{\!X}{p\mathstrut}_{\lambda, \, \kappa}}\big[{|Y|\mathstrut}^{r}\big]
\, - \,
\tfrac{\rho}{r}\ln\eta}_{\text{minimize over $\eta > 0$}}
\nonumber \\
&
\;\;\;\;\;\;\;\;\;\;\;\;\;\;\;\;\;\;\;\;\;\;\;\;\;\;\;\;\;\;\;\;\;\,\,\,
+ \,
\rho\ln
\tfrac{q\Gamma(1/r)}{r\Gamma(1/q)}
\, + \,
\tfrac{\rho}{q}\ln \nu
\, + \,
\tfrac{\rho\,+\,1}{q}\ln (\lambda/e)
\Big]
\, - \,
\rho(R - 2\epsilon)
\Big\}
\nonumber \\
\overset{e}{=}  \;\; &
{\color{black}
\sup_{\substack{\rho\,\geq\,0}}
}
\inf_{\substack{\lambda, \, \kappa:\\
\lambda \, \in \, (0, \, 1], \; \kappa \, \in \, \mathbb{R},
\\
\,\mathbb{E}_{{P\mathstrut}_{\!X}{p\mathstrut}_{\lambda, \, \kappa}}[{|Y-X|\mathstrut}^{q}] \; \leq \; c_{XY}{\color{white},}\,
}}
\!\!\!\!\!\!\!\!
\Big\{\tfrac{1}{\ln b}\Big[
\nu
\mathbb{E}_{{P\mathstrut}_{\!X}{p\mathstrut}_{\lambda, \, \kappa}}\big[{|Y-X|\mathstrut}^{q}\big]
\, + \,
\underbrace{
\tfrac{\rho}{r}
\ln\!\Big(re
\mathbb{E}_{{P\mathstrut}_{\!X}{p\mathstrut}_{\lambda, \, \kappa}}\big[{|Y|\mathstrut}^{r}\big]
\Big)}_{\text{$\sim$ Shannon Upper Bound}}
\nonumber \\
&
\;\;\;\;\;\;\;\;\;\;\;\;\;\;\;\;\;\;\;\;\;\;\;\;\;\;\;\;\;\;\;\;\;\,\,\,
+ \,
\rho\ln
\tfrac{q\Gamma(1/r)}{r\Gamma(1/q)}
\, + \,
\tfrac{\rho}{q}\ln \nu
\, + \,
\tfrac{\rho\,+\,1}{q}\ln (\lambda/e)
\Big]
\, - \,
\rho(R - 2\epsilon)
\Big\},
\label{eqShannonUB}
\end{align}
where 

($a$) holds because the infimum (\ref{eqInf1}) is a convex ($\cup$) function of $R$,
while the supremum (\ref{eqInf2}) as a function of $R$
is the closure of the lower convex envelope of the infimum (\ref{eqInf1}). 
Then (\ref{eqInf1}) and (\ref{eqInf2}) coincide
for all $R > 0$ with the possible exception of the {\em single point} on the $R$-axis
where the infimum (\ref{eqInf1}) may transition between $+\infty$ and a finite value.

($b$) holds because ${p\mathstrut}_{\lambda, \, \kappa} \in {\cal L}$ according to the definitions
 (\ref{eqLipschitz}) and (\ref{eqPlambdakappa}).

($c$) holds because 
$D\big(\,{p\mathstrut}_{\lambda, \, \kappa}\,\|\, \, {p\mathstrut}_{\eta} \, \,|\, {P\mathstrut}_{\!X}\big) \geq I\big({P\mathstrut}_{\!X}, \, {p\mathstrut}_{\lambda, \, \kappa}\big)$, with ${p\mathstrut}_{\eta}$ as defined in (\ref{eqPeta}).

($d$) holds as equality inside the infimum, separately for each $\lambda$, $\kappa$, $\eta$,
by the definitions (\ref{eqChannel}), (\ref{eqPlambdakappa}), (\ref{eqPeta}),
and the property of the gamma function $\Gamma(1 + t) = t\Gamma(t)$.

($e$) follows in the result of the minimization 
of a convex ($\cup$) function of $\eta > 0$.

We conclude that the inequality between (\ref{eqInf1}) and (\ref{eqShannonUB})
for a given ${P\mathstrut}_{\!X}$ 
holds for all $R > 0$ with the possible exception of the single point on the $R$-axis
where the supremum (\ref{eqShannonUB}) may transition between $+\infty$ and a finite value. 
This property 
is conserved also 
after both (\ref{eqInf1}) and (\ref{eqShannonUB})
are maximized over ${P\mathstrut}_{\!X}$ as in (\ref{eqInf11}). 
\end{proof}

Substitution of \cite[Eq.~13]{TridenskiSomekhBaruh23}
for ${p\mathstrut}_{\lambda, \, \kappa}$ for each $\rho \geq 0$ in (\ref{eqGenegalqr})
with $q = 2$, $r = 2$, and $c_{XY}\geq \frac{1}{2\nu} + s^{2} + \epsilon$
leads to Shannon's sphere-packing converse bound on the error exponent of the AWGN channel with the power constraint \cite[Eqs.~3,~4,~11]{Shannon59} in the limit of a large block length.
Next we turn to other special cases.

\bigskip

\begin{thm}[Case $q = 1$, $r = 1$] \label{thmSUB11}
{\em Let the channel be given by (\ref{eqChannel}) with $q = 1$, 
let codebooks ${\cal C}$ satisfy (\ref{eqPowerConstraint}) with $r = 1$,
and let decoders $g$ be defined by (\ref{eqDec}). 
Let $J \sim \text{Unif}\,\big(\{1, 2, .\,.\,.\, , \, M\}\big)$ be a random variable,
independent of the channel noise, and let ${\bf x}(J) \rightarrow {\bf Y}$ 
be the random channel-input and channel-output vectors, respectively. Then
}
\begin{align}
&
\limsup_{\substack{n\,\rightarrow\,\infty}}
\;
\sup_{{\cal C}}
\;
\sup_{g}
\;
\Big\{\!
- \tfrac{1}{n}\,{\log\mathstrut}_{\!b} \Pr \big\{
g({\bf Y}) \neq J \big\}
\Big\}
\nonumber \\
\leq
\;\;
&
\sup_{\substack{\rho\,\geq\,0}}
\;
\Big\{
\tfrac{1}{\ln b}
\min
\Big\{
\nu s,\;\;
\rho
\ln\!
\big(\lambda(\nu s, \rho)\nu s + 1\big)
\, + \,
\tfrac{1}{\lambda(\nu s, \, \rho)}
\, + \,
\ln\lambda(\nu s, \rho)
\, - \, 1
\Big\}
\, - \,
\rho R
\Big\},
\label{eqLaplace} \\
&
\;\;\;\;\;\;\;\;\;\;\;\;\;\;\;\;\;\;\;\;\;\;\;\;\;\;\;\;\;\;\;\;\;\;\;\;\;\;
\lambda(\nu s, \rho) \; = \;
\tfrac{1}{2(1\,+\,\rho)\nu s}
\big[
\nu s - 1
+
\sqrt{
(\nu s - 1)^{2}
+
4(1 + \rho)\nu s
}
\big].
\nonumber
\end{align}
\end{thm}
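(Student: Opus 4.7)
The plan is to apply Theorem~\ref{thmErrorExp} with $q = r = 1$ and upper bound the inner infimum over $(\lambda, \kappa)$ by evaluating it at two carefully chosen feasible points; one point will produce the constant $\nu s$ term inside the $\min\{\cdot\}$ of (\ref{eqLaplace}), and the other will produce the $\rho$-dependent term.

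First, I would specialize the bracketed expression in (\ref{eqGenegalqr}) to $q = r = 1$: the factor $\Gamma(1+1/r)/\Gamma(1+1/q)$ equals $1$, the term $\tfrac{\rho}{r}\ln(re)$ becomes $\rho$, and collecting terms yields $-\rho(R-\epsilon) + \tfrac{1}{\ln b}\big[\nu\,\mathbb{E}[|Y-X|] + \rho\ln\mathbb{E}[|Y|] + \rho\ln\nu + (\rho+1)\ln\lambda - 1\big]$. Under the Laplace density ${p\mathstrut}_{\lambda, \, \kappa}(\cdot | x)$ with location $\kappa x$ and scale $1/(\lambda\nu)$, the identity $\mathbb{E}[|V+a|] = |a| + b\,e^{-|a|/b}$ for $V \sim \mathrm{Laplace}(0, b)$ supplies the conditional moments needed for the two choices below.

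Next, I would evaluate at two specific points. \emph{Point~A:} $(\lambda, \kappa) = (1, 0)$, so that $\nu\,\mathbb{E}[|Y-X|] = \nu\,\mathbb{E}[|X|] + \mathbb{E}[e^{-\nu|X|}] \leq \nu(s+\epsilon) + 1$ (using ${P\mathstrut}_{\!X}$'s constraint together with $e^{-\nu|X|} \leq 1$), while $\mathbb{E}[|Y|] = 1/\nu$; the bracket collapses to $\nu(s+\epsilon)$, yielding the first argument of the $\min$. \emph{Point~B:} $\kappa = 1$ with $\lambda$ free, giving $\nu\,\mathbb{E}[|Y-X|] = 1/\lambda$ and $\mathbb{E}[|Y|] \leq (s+\epsilon) + 1/(\lambda\nu)$ (using $\mathbb{E}[e^{-\lambda\nu|X|}] \leq 1$). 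After cancellation the bracket becomes $1/\lambda + \rho\ln(\lambda\nu(s+\epsilon) + 1) + \ln\lambda - 1$, and setting the $\lambda$-derivative to zero gives the quadratic $(1+\rho)\nu(s+\epsilon)\lambda^2 + (1 - \nu(s+\epsilon))\lambda - 1 = 0$ whose positive root is $\lambda(\nu(s+\epsilon), \rho)$. Checking the derivative at $\lambda = 1$ confirms the minimizer lies in $(0, 1]$.

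Finally, I would take the minimum of the two evaluations; since each bound is ${P\mathstrut}_{\!X}$-independent after invoking the moment constraints, the outer $\max_{{P\mathstrut}_{\!X}}$ is absorbed trivially. Taking $\limsup_n$ eliminates the $o(1)$, and $\epsilon \to 0$ completes the passage to (\ref{eqLaplace}). The main obstacle is the feasibility constraint $\mathbb{E}_{{P\mathstrut}_{\!X}{p\mathstrut}_{\lambda, \, \kappa}}[|Y-X|] \leq c_{XY}$: Point~B's minimizer $\lambda^*(\rho)$ tends to zero as $\rho \to \infty$, so $1/(\lambda^*\nu)$ can exceed any fixed $c_{XY}$. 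I would resolve this by observing that for such large $\rho$ Point~B's value already dominates $\nu s$ (since $1/\lambda^* \sim \sqrt{(\rho+1)\nu s}$), hence the $\min$ is controlled by Point~A; formally, since Theorem~\ref{thmErrorExp} holds for any $c_{XY}$, one lets $c_{XY} \to \infty$ after the $\limsup_n$.
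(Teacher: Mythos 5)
Your proposal is correct and follows essentially the same route as the paper: specialize Theorem~\ref{thmErrorExp} to $q=r=1$, upper-bound the infimum by evaluating at the two points $(\lambda,\kappa)=(1,0)$ and $(\kappa=1,\lambda=\lambda^*(\rho))$ (which the paper justifies as the actual minimizers via concavity in $\kappa$), upper-bound the remaining expectations using the power constraint so the bound becomes ${P\mathstrut}_{\!X}$-independent, and handle feasibility of the Point-B minimizer by noting that when $\lambda^*(\rho)$ becomes too small, Point B's value already exceeds $\nu s$ so Point A controls the $\min$. The paper pins this down by fixing $c_{XY}=s'+1/\nu$ and checking $\lambda^*(\rho)>\lambda(\bar\rho)=\tfrac{1}{1+\nu s'}$ whenever Point B is competitive, rather than passing $c_{XY}\to\infty$, but that is a cosmetic difference — your observation is the identical core argument.
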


\bigskip

\begin{proof}
Let us rewrite the maximum of (\ref{eqGenegalqr}) for $q = 1$, $r = 1$, with $s + \epsilon = s'$
and $c_{XY} = s' + 1/\nu$:
\begin{align}
&
{\color{black}
\max_{\substack{\\{P\mathstrut}_{\!X}:\\
{P\mathstrut}_{\!X}\,\in \, {\cal P}_{n}({\cal X}_{n}),
\\
\mathbb{E}_{{P\mathstrut}_{\!X}}[|X|] \; \leq \; s'}}
}
\;\;
{\color{black}
\sup_{\substack{\rho\,\geq\,0}}
}
\;
{\color{black}
\inf_{\substack{\\{\color{white}{P\mathstrut}_{\!X}}\lambda, \, \kappa:{\color{white}{P\mathstrut}_{\!X}}
\\
{\color{white}{P\mathstrut}_{\!X}}\lambda \, \in \, (0, \, 1], \; \kappa \, \in \, \mathbb{R}, {\color{white}{P\mathstrut}_{\!X}}
\\
\mathbb{E}_{{P\mathstrut}_{\!X}{p\mathstrut}_{\lambda, \, \kappa}}[|Y-X|] \; \leq \; s' \, + \, 1/\nu
}}
}
\Big\{
\!-\!\rho(R - \epsilon)
\, +
\nonumber \\
&
\;\;\;\;\;\;\;\;\;\;\;\;\;\;\;\;\;\;\;\;\;\;\;\;\;\;\;\;\;\;\;\;\;\;\;\;\;\;\;\;\;\;\;\,\,
\tfrac{1}{\ln b}\Big[
\nu
\mathbb{E}_{{P\mathstrut}_{\!X}{p\mathstrut}_{\lambda, \, \kappa}}\big[|Y-X|\big]
\, + \,
\rho
\ln\!
\big(\lambda\nu
\mathbb{E}_{{P\mathstrut}_{\!X}{p\mathstrut}_{\lambda, \, \kappa}}\big[|Y|\big]
\big)
\, + \,
\ln\lambda
\, - \, 1
\Big]\Big\}
\nonumber \\
\overset{a}{=} \;\;
&
{\color{black}
\max_{\substack{\\{P\mathstrut}_{\!X}:\\
{P\mathstrut}_{\!X}\,\in \, {\cal P}_{n}({\cal X}_{n}),
\\
\mathbb{E}_{{P\mathstrut}_{\!X}}[|X|] \; \leq \; s'}}
}
\;\;
{\color{black}
\sup_{\substack{\rho\,\geq\,0}}
}
\;
{\color{black}
\inf_{\substack{\\{\color{white}{P\mathstrut}_{\!X}}\lambda, \, \kappa:{\color{white}{P\mathstrut}_{\!X}}
\\
{\color{white}{P\mathstrut}_{\!X}}\lambda \, \in \, (0, \, 1], \; \kappa \, \in \, \mathbb{R}, {\color{white}{P\mathstrut}_{\!X}}
\\
\mathbb{E}_{{P\mathstrut}_{\!X}{p\mathstrut}_{\lambda, \, \kappa}}[|Y-X|] \; \leq \; s' \, + \, 1/\nu
}}
}
\Big\{
\!-\!\rho(R - \epsilon)
\, +
\nonumber \\
&
\;\;
\tfrac{1}{\ln b}\Big[
\nu
\mathbb{E}_{{P\mathstrut}_{\!X}}\!\Big[|(\kappa - 1)X| + \tfrac{1}{\lambda\nu}e^{-\lambda\nu|(\kappa \, - \, 1)X|}\Big]
\, + \,
\rho
\ln\!
\Big(\lambda\nu
\mathbb{E}_{{P\mathstrut}_{\!X}}\!\Big[|\kappa X| + \tfrac{1}{\lambda\nu}e^{-\lambda\nu|\kappa X|}\Big]
\Big)
\, + \,
\ln\lambda
\, - \, 1
\Big]\Big\}
\nonumber \\
\overset{b}{\leq} \;\;
&
{\color{black}
\max_{\substack{\\{P\mathstrut}_{\!X}:\\
{P\mathstrut}_{\!X}\,\in \, {\cal P}_{n}({\cal X}_{n}),
\\
\mathbb{E}_{{P\mathstrut}_{\!X}}[|X|] \; \leq \; s'}}
}
\;\;
{\color{black}
\sup_{\substack{\rho\,\geq\,0}}
}
\;
{\color{black}
\inf_{\substack{\\{\color{white}{P\mathstrut}_{\!X}}\lambda, \, \kappa:{\color{white}{P\mathstrut}_{\!X}}
\\
{\color{white}{P\mathstrut}_{\!X}}\lambda \, \in \, (0, \, 1], \; \kappa \, \in \, \mathbb{R}, {\color{white}{P\mathstrut}_{\!X}}
\\
\mathbb{E}_{{P\mathstrut}_{\!X}{p\mathstrut}_{\lambda, \, \kappa}}[|Y-X|] \; \leq \; s' \, + \, 1/\nu
}}}
\Big\{
\!-\!\rho(R - \epsilon)
\, +
\nonumber \\
&
\;\;\;\;\;\;\;\;\;\;\;\;\;\;\;\;\;\;\;\;\;\;\;\;\;\;\;\;\;\;\;\;\;\;\;\;\;\;\;\;\;\;\;\;\;\;
\;\;\;\;\;\;\;\;\;\;\;
\tfrac{1}{\ln b}\Big[
\nu|\kappa - 1|s' \, + \, \tfrac{1}{\lambda}
\, + \,
\rho
\ln\!
\big(\lambda\nu|\kappa|s' + 1\big)
\, + \,
\ln\lambda
\, - \, 1
\Big]\Big\}
\label{eqBeforeMin} \\
\overset{c}{=} \;\;
&
\;\;\;\;\;\;\;\;\;\;\;\;\;\;\,\,\,\,\,\,
{\color{black}
\sup_{\substack{\rho\,\geq\,0}}
}
\;\;
\;\;\,\,
\Big\{
\! -\!
\rho(R - \epsilon)
\, + \,
\tfrac{1}{\ln b}
\min
\Big\{
\nu s',\;\;
\rho
\ln\!
\big(\lambda(\rho)\nu s' + 1\big)
\, + \,
\tfrac{1}{\lambda(\rho)}
\, + \,
\ln\lambda(\rho)
\, - \, 1
\Big\}
\Big\},
\label{eqWithEps} \\
&
\;\;\;\;\;\;\;\;\;\;\;\;\;\;\;\;\;\;\;\;\;\;\;\;\;\;\,\,
\;\;\,\,
\lambda(\rho) \; = \;
\tfrac{1}{2(1\,+\,\rho)\nu s'}
\big[
\nu s' - 1
+
\sqrt{
(\nu s' - 1)^{2}
+
4(1 + \rho)\nu s'
}
\big],
\label{eqLambdaEps}
\end{align}
where in ($a$) we take first the conditional expectations 
$\mathbb{E}_{{P\mathstrut}_{\!X}{p\mathstrut}_{\lambda, \, \kappa}}\big[|Y - X|\,|\,X\big]$   
and 
$\mathbb{E}_{{P\mathstrut}_{\!X}{p\mathstrut}_{\lambda, \, \kappa}}\big[|Y|\,|\,X\big]$ 
according to ${p\mathstrut}_{\lambda, \, \kappa}$ of (\ref{eqPlambdakappa}) and then 
in ($b$) we upper-bound the remaining expectations using the constraint $s'$ of the $\max$.
In order to justify $(c)$, we first minimize the expression of (\ref{eqBeforeMin})
over all $\lambda > 0$ and $\kappa \in \mathbb{R}$ and then check that 
the minimizing solutions $(\lambda^{*}, \kappa^{*})$ satisfy
the conditions 
under the $\inf$ of (\ref{eqBeforeMin}), 
as follows:
\begin{align}
\inf_{\substack{\\\lambda, \, \kappa:
\\
\lambda \, > \, 0, \; \kappa \, \in \, \mathbb{R}
}}
\,
&
\Big\{
\nu|\kappa - 1|s' 
\, + \,
\rho
\ln(\lambda\nu|\kappa|s' + 1)
\, + \, \tfrac{1}{\lambda}
\, + \,
\ln\lambda
\, - \, 1
\Big\}
\label{eqCocaveKappa} \\
\overset{c1}{=}
\;\;\;\;
\min
\;\;\;\;
&
\Big\{
\inf_{\substack{\\\lambda \, > \, 0}}
\big\{
\underbrace{
\nu s' \, + \, \tfrac{1}{\lambda}
\, + \,
\ln\lambda
\, - \, 1
}_{\kappa \, = \, 0}
\big\},\;\;
\inf_{\substack{\\\lambda \, > \, 0}}
\big\{
\underbrace{
\rho
\ln(\lambda\nu s' + 1)
\, + \,
\tfrac{1}{\lambda}
\, + \,
\ln\lambda
\, - \, 1
}_{\kappa \, = \, 1}
\big\}
\Big\}
\label{eqTwoInfimums} \\
\overset{c2}{=}
\;\;\;\;
\min
\;\;\;\;
&
\Big\{
\underset{\substack{\\ \\ \kappa \, = \, 0, \; \lambda \, = \, 1}}
{\nu s',}\;\;
\underbrace{
\rho
\ln\!
\big(\lambda(\rho)\nu s' + 1\big)
\, + \,
\tfrac{1}{\lambda(\rho)}
\, + \,
\ln\lambda(\rho)
\, - \, 1
}_{\kappa \, = \, 1, \; \lambda \, = \, \lambda(\rho)}
\Big\},
\label{eqAfterMin}
\end{align}
where for ($c1$) we observe that the expression in (\ref{eqCocaveKappa})
as a function of $\kappa$
is higher for $\kappa < 0$ comparing to its value at $\kappa = 0$ and 
is higher for $\kappa > 1$ comparing to its value at $\kappa = 1$,
while it is a concave ($\cap$) function of $\kappa \in [0, 1]$;
in ($c2$) we minimize both expressions over $\lambda > 0$ and obtain $\lambda(\rho) > 0$
as in (\ref{eqLambdaEps}).

Next we check if the solutions $(\lambda^{*}, \kappa^{*})$ of (\ref{eqAfterMin}) satisfy the conditions under the $\inf$ of (\ref{eqBeforeMin}). 
It can be verified that $\lambda(\rho)$ is a monotonically decreasing function with $\lambda(0) = 1$.
We conclude that $\lambda(\rho) \in (0, 1]$. 
Observe that the second infimum of (\ref{eqTwoInfimums}) 
grows with $\rho$.
Therefore the second term in (\ref{eqAfterMin}) grows with $\rho$ from zero
until it exceeds $\nu s'$.
In particular, substituting $\bar{\rho} =  1 + 2\nu s'$ in (\ref{eqLambdaEps}) we obtain $\lambda(\bar{\rho}) = \tfrac{1}{1 \, + \, \nu s'}$
and the second infimum higher than $\nu s'$:
\begin{align}
\bar{\rho}
\ln\!
\big(\lambda(\bar{\rho})\nu s' + 1\big)
+ 
\tfrac{1}{\lambda(\bar{\rho})}
+ \ln\lambda(\bar{\rho})
- 1
\, & = \,
(1 + 2\nu s')\ln(1 + 2\nu s')  -  2 (1 + \nu s')\ln(1 + \nu s') + \nu s' 
\, > \, \nu s'.
\nonumber
\end{align}
So if for some $\rho$ the second term in (\ref{eqAfterMin}) is smaller than $\nu s'$, 
then the achieving $\lambda^{*} = \lambda(\rho)$ is greater than $\lambda(\bar{\rho})$.
Then
\begin{align}
\mathbb{E}_{{P\mathstrut}_{\!X}{p\mathstrut}_{\lambda^{*}\!, \, \kappa^{*}}}\big[|Y-X|\big]
\; & \leq \;
|\kappa^{*} - 1|\,
\mathbb{E}_{{P\mathstrut}_{\!X}{p\mathstrut}_{\lambda^{*}\!, \, \kappa^{*}}}\big[|X|\big]
\, + \,
\mathbb{E}_{{P\mathstrut}_{\!X}{p\mathstrut}_{\lambda^{*}\!, \, \kappa^{*}}}\big[|Y \!- \kappa^{*} X|\big]
\nonumber \\
& \leq \;
\Bigg\{
\begin{array} {r l}
\!\! \tfrac{1}{{\lambda\mathstrut}^{*}\nu} \, < \, \tfrac{1}{\lambda(\bar{\rho})\nu} \, = \, s' + \tfrac{1}{\nu}, 
& \;\;\; \kappa^{*} = 1, \; \lambda^{*} = \lambda(\rho), \\
\!\! s' + \tfrac{1}{\nu}, & \;\;\; \kappa^{*} = 0, \; \lambda^{*} = 1.
\end{array}
\nonumber
\end{align}

In conclusion, in the case of $q = 1$, $r = 1$
the maximum of (\ref{eqGenegalqr}) can be replaced with (\ref{eqWithEps}). 
The resulting inequality holds for all $R > 0$,
except possibly for the single point $R = \epsilon$,
where the jump to $+\infty$ in (\ref{eqWithEps}) occurs.
Taking $\limsup_{\substack{\,n\,\rightarrow\,\infty}}$ on both sides of (\ref{eqGenegalqr}) and $\epsilon \rightarrow 0$
on the RHS,
we obtain (\ref{eqLaplace}).
\end{proof}

\bigskip

\begin{thm}[Case $q = 2$, $r = 1$] \label{thmSUB21}
{\em Let the channel be given by (\ref{eqChannel}) with $q = 2$, 
let codebooks ${\cal C}$ satisfy (\ref{eqPowerConstraint}) with $r = 1$,
and let decoders $g$ be defined by (\ref{eqDec}). 
Let $J \sim \text{Unif}\,\big(\{1, 2, .\,.\,.\, , \, M\}\big)$ be a random variable,
independent of the channel noise, and let ${\bf x}(J) \rightarrow {\bf Y}$ 
be the random channel-input and channel-output vectors, respectively. Then
}
\begin{align}
&
\limsup_{\substack{n\,\rightarrow\,\infty}}
\;
\sup_{{\cal C}}
\;
\sup_{g}
\;
\Big\{\!
- \tfrac{1}{n}\,{\log\mathstrut}_{\!b} \Pr \big\{
g({\bf Y}) \neq J \big\}
\Big\}
\label{eqGaussAbs} \\
\leq
\;\;
&
\sup_{\substack{\rho\,\geq\,0}}
\;
\Big\{
\tfrac{1}{\ln b}
\Big[
\rho
\ln\!
\big(s\sqrt{\nu\pi\lambda(\nu, s, \rho)} + 1\big)
\, + \,
\tfrac{1}{2}
\big[
\tfrac{1}{\lambda(\nu, \, s, \, \rho)} \, + \, \ln\lambda(\nu, s, \rho) \, - \, 1
\big]
\, + \,
\rho\ln\!
\big(2\sqrt{e}/\pi\big)
\Big]
\, - \,
\rho R
\Big\},
\nonumber
\end{align}
{\em where $\lambda(\nu, s, \rho)$ is the solution of the equation
$\rho = \Big[\frac{1}{\lambda} - 1\Big]\Big[\frac{1}{s\sqrt{\nu\pi\lambda}} + 1\Big]$
for $\lambda \in (0, 1]$.
}
\end{thm}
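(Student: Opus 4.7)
My plan is to mirror the reduction carried out in Theorem~\ref{thmSUB11}, adapted to Gaussian noise ($q=2$) paired with the first--power (absolute--value) constraint ($r=1$). I specialize Theorem~\ref{thmErrorExp} to $q=2$, $r=1$, write $s+\epsilon = s'$, and leave $c_{XY}$ free for the moment. The auxiliary density ${p\mathstrut}_{\lambda,\kappa}(y\,|\,x)$ defined in (\ref{eqPlambdakappa}) is then Gaussian with mean $\kappa x$ and variance $1/(2\lambda\nu)$. Rather than minimize jointly over $(\lambda,\kappa)$, I select $\kappa=1$ directly (the analogue of the nontrivial branch in (\ref{eqAfterMin})); this immediately gives $\nu\mathbb{E}_{{P\mathstrut}_{\!X}{p\mathstrut}_{\lambda,1}}[|Y-X|^{2}] = 1/(2\lambda)$. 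Choosing $\kappa=0$ would instead introduce an uncontrolled $\nu\mathbb{E}_{{P\mathstrut}_{\!X}}[X^{2}]$ term, so the min--of--two structure of (\ref{eqAfterMin}) collapses here to a single branch and no second term enters the bound.

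The next step is to bound $\mathbb{E}_{{P\mathstrut}_{\!X}{p\mathstrut}_{\lambda,1}}[|Y|]$. Writing $Y = X + Z$ with $Z\sim \mathcal{N}(0,\,1/(2\lambda\nu))$ independent of $X$, the triangle inequality combined with $\mathbb{E}[|Z|] = 1/\sqrt{\pi\lambda\nu}$ yields $\mathbb{E}_{{P\mathstrut}_{\!X}{p\mathstrut}_{\lambda,1}}[|Y|] \leq s' + 1/\sqrt{\pi\lambda\nu}$. Substituting both estimates into the right--hand side of (\ref{eqGenegalqr}) with $\Gamma(3/2) = \sqrt{\pi}/2$ and $\Gamma(2) = 1$, the algebra I expect to carry out is the rewrite $\rho\ln\bigl(s' + 1/\sqrt{\pi\lambda\nu}\bigr) = \rho\ln\bigl(s'\sqrt{\pi\lambda\nu}+1\bigr) - \tfrac{\rho}{2}\ln(\pi\lambda\nu)$ followed by the cancellation
\begin{displaymath}
-\tfrac{\rho}{2}\ln(\pi\lambda\nu) + \rho\ln\bigl(2/\sqrt{\pi}\bigr) + \tfrac{\rho}{2}\ln\nu + \tfrac{\rho+1}{2}\ln(\lambda/e) + \rho \;=\; \rho\ln\bigl(2\sqrt{e}/\pi\bigr) + \tfrac{1}{2}\ln\lambda - \tfrac{1}{2},
\end{displaymath}
which produces exactly $\rho\ln\bigl(s'\sqrt{\pi\lambda\nu}+1\bigr) + \tfrac{1}{2}\bigl(\tfrac{1}{\lambda} + \ln\lambda - 1\bigr) + \rho\ln\bigl(2\sqrt{e}/\pi\bigr)$ inside the square brackets of (\ref{eqGenegalqr}), divided by $\ln b$ and shifted by $-\rho(R-\epsilon)$.

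I then minimize this expression in $\lambda$ at fixed $\rho \geq 0$. The stationarity condition simplifies to $\rho = \bigl[\tfrac{1}{\lambda}-1\bigr]\bigl[\tfrac{1}{s'\sqrt{\pi\lambda\nu}}+1\bigr]$; the right--hand side is continuous and strictly decreasing in $\lambda$ on $(0,1]$, vanishing at $\lambda=1$ and diverging as $\lambda\downarrow 0$, so a unique root $\lambda^{*}(\nu,s',\rho) \in (0,1]$ exists for every $\rho \geq 0$. This root coincides with the $\lambda(\nu,s,\rho)$ of the theorem statement once $\epsilon\to 0$. Taking $\limsup_{n\to\infty}$ in (\ref{eqGenegalqr}) and then $\epsilon\to 0$ on the right--hand side delivers (\ref{eqGaussAbs}).

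The step I expect to be the main obstacle is verifying that the choice $(\kappa,\lambda)=(1,\lambda^{*}(\rho))$ is admissible for the inner infimum of (\ref{eqGenegalqr}), which requires $1/(2\lambda^{*}(\rho)\nu) \leq c_{XY}$. Unlike in Theorem~\ref{thmSUB11}, where the $\kappa=1$ branch kept $\lambda$ bounded away from $0$ by $\lambda(1 + 2\nu s') = 1/(1+\nu s')$, here $\lambda^{*}(\rho)\downarrow 0$ as $\rho\uparrow \infty$, so no single $c_{XY}$ admits the point for all $\rho$. My intended resolution is to exploit that Theorem~\ref{thmErrorExp} is valid for every $c_{XY}$ separately: for each $R$ at which the target supremum is finite, the outer $\sup_{\rho\geq 0}$ is attained at some finite $\rho^{*}(R)$, and picking $c_{XY}$ large enough to admit $\lambda^{*}(\rho^{*}(R))$ (using $\lambda=1/(2\nu c_{XY})$ on the constraint boundary for $\rho$ outside this range) suffices; sending $c_{XY}\to\infty$ at the end of the argument removes any residual slack.
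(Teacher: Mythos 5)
Your proposal follows essentially the same route as the paper's proof of Theorem~\ref{thmSUB21}: fix $\kappa=1$, substitute $\mathbb{E}_{{P\mathstrut}_{\!X}{p\mathstrut}_{\lambda,1}}[(Y-X)^2]=\tfrac{1}{2\lambda\nu}$, bound $\mathbb{E}[|Y|]\leq s'+1/\sqrt{\pi\lambda\nu}$ by the triangle inequality, simplify using $\Gamma(3/2)=\sqrt{\pi}/2$, minimize over $\lambda$ to get the stationarity equation $\rho = [\tfrac{1}{\lambda}-1][\tfrac{1}{s'\sqrt{\nu\pi\lambda}}+1]$, and send $\epsilon\to 0$, $c_{XY}\to\infty$ at the end; the algebra matches and the unique root $\lambda^{*}\in(0,1]$ is correctly justified. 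The only place where you are noticeably thinner than the paper is the final limiting argument: the paper works through the boundary point (\ref{eqBoundaryPoint}) and a uniform-continuity argument on a bounded $\rho$-range to establish convergence of (\ref{eqMaxCXYEps}) to the RHS of (\ref{eqGaussAbs}), whereas you only observe that $\sup_{\rho\geq 0}$ is attained at a finite $\rho^{*}(R)$ and that large $c_{XY}$ admits $\lambda^{*}(\rho^{*}(R))$. That observation alone is not quite sufficient: you also need to rule out the possibility that the constrained $\sup_{\rho}$ is achieved at some $\rho$ beyond the threshold where $\lambda(\rho) < (2\nu c_{XY})^{-1}$ (where the constrained inner minimum strictly exceeds the unconstrained one). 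The missing step, which the paper's boundary-point argument delivers, is to note that for $R>R_{\min}(c_{XY},\epsilon)$ the objective evaluated at $\lambda=(2\nu c_{XY})^{-1}$ is strictly decreasing in $\rho$, so the constrained $\sup$ coincides with the unconstrained $\sup$ over $\rho\leq\rho^{\mathrm{th}}$, which for $c_{XY}$ large enough captures the unconstrained maximizer. With that added, your sketch closes the same way the paper does.
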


\bigskip

\begin{proof}
With $q = 2$ and $r = 1$ we further upper-bound the infimum of (\ref{eqGenegalqr}) 
restricting it to $\kappa = 1$:
\begin{align}
&
{\color{black}
\max_{\substack{\\{P\mathstrut}_{\!X}:\\
{P\mathstrut}_{\!X}\,\in \, {\cal P}_{n}({\cal X}_{n}),
\\
\mathbb{E}_{{P\mathstrut}_{\!X}}[|X|] \; \leq \; s'}}
}
\;\;
{\color{black}
\sup_{\substack{\rho\,\geq\,0}}
}
\;
{\color{black}
\inf_{\substack{\\{\color{white}{P\mathstrut}_{\!X}}\lambda:{\color{white}{P\mathstrut}_{\!X}}
\\
{\color{white}{P\mathstrut}_{\!X}}\lambda \, \in \, (0, \, 1], {\color{white}{P\mathstrut}_{\!X}}
\\
\mathbb{E}_{{P\mathstrut}_{\!X}{p\mathstrut}_{\lambda, \, 1}}[(Y-X)^{2}] \; \leq \; c_{XY}
}}
}
\Big\{\!
-\rho(R - \epsilon)
\, +
\nonumber \\
&
\;\;\;\;\;\;\;\;\,\,
\tfrac{1}{\ln b}\Big[
\nu
\mathbb{E}_{{P\mathstrut}_{\!X}{p\mathstrut}_{\lambda, \, 1}}\big[(Y-X)^{2}\big]
\, + \,
\rho
\ln
\mathbb{E}_{{P\mathstrut}_{\!X}{p\mathstrut}_{\lambda, \, 1}}\big[|Y|\big]
\, + \,
\tfrac{\rho\,+\,1}{2}\ln\lambda
\, - \,
\tfrac{1}{2}
\, + \,
\rho\ln\!
\big(2\sqrt{e\nu/\pi}\big)
\Big]\Big\}
\nonumber \\
\overset{a}{=} \;\;
&
{\color{black}
\max_{\substack{\\{P\mathstrut}_{\!X}:\\
{P\mathstrut}_{\!X}\,\in \, {\cal P}_{n}({\cal X}_{n}),
\\
\mathbb{E}_{{P\mathstrut}_{\!X}}[|X|] \; \leq \; s'}}
}
\;\;
{\color{black}
\sup_{\substack{\rho\,\geq\,0}}
}
\;\;\,\,\,
{\color{black}
\min_{\substack{\\
\lambda:
\;
(2\nu c_{XY}\!)^{-1} \, \leq \, \lambda \, \leq \, 1
}}
}
\;\;\;
\Big\{\!
-\rho(R - \epsilon)
\, +
\nonumber \\
&
\;\;\;\;\;\;\;\;\;\;\;\;\;\;\;\;\;\;\;\;\;\;\;\;\;\;\;\;\;\;\;\;\;\;\,\,\,\,
\tfrac{1}{\ln b}\Big[
\tfrac{1}{2\lambda}
\, + \,
\rho
\ln
\mathbb{E}_{{P\mathstrut}_{\!X}{p\mathstrut}_{\lambda, \, 1}}\big[|Y|\big]
\, + \,
\tfrac{\rho\,+\,1}{2}\ln\lambda
\, - \,
\tfrac{1}{2}
\, + \,
\rho\ln\!
\big(2\sqrt{e\nu/\pi}\big)
\Big]\Big\}
\nonumber \\
\overset{b}{\leq} \;\;
&
\;\;\;\;\;\;\;\;\;\;\;\;\;\;\;\;\;\,
{\color{black}
\sup_{\substack{\rho\,\geq\,0}}
}
\;\;\,\,\,
{\color{black}
\min_{\substack{\\
\lambda:
\;
(2\nu c_{XY}\!)^{-1} \, \leq \, \lambda \, \leq \, 1
}}
}
\;\;\;
\Big\{\!
-\rho(R - \epsilon)
\, +
\nonumber \\
&
\;\;\;\;\;\;\;\;\;\;\;\;\;\;\;\;\;\;\;\;\;\;\;\;\;\;\;\;\;\;\;\;\,\,
\tfrac{1}{\ln b}\Big[
\tfrac{1}{2\lambda}
\, + \,
\rho
\ln\!
\big(s' + 1/\sqrt{\lambda\nu\pi}\big)
\, + \,
\tfrac{\rho\,+\,1}{2}\ln \lambda
\, - \,
\tfrac{1}{2}
\, + \,
\rho\ln\!
\big(2\sqrt{e\nu/\pi}\big)
\Big]\Big\}
\label{eqTight} \\
\overset{c}{=} \;\;
&
\;\;\;\;\;\;\;\;\;\;\;\;\;\;\;\;\;\,
{\color{black}
\sup_{\substack{\rho\,\geq\,0}}
}
\;\;\,\,\,
{\color{white}
\min_{\substack{\\
\lambda:
\;
(2\nu c_{XY}\!)^{-1} \, \leq \, \lambda \, \leq \, 1
}}
}
\;\;\;
\Big\{\!
-\rho(R - \epsilon)
\, +
\nonumber \\
&
\,
\tfrac{1}{\ln b}\Big[
\tfrac{1}{2\lambda(c_{XY}\!, \, \rho)}
\, + \,
\rho
\ln\!
\big(s' + 1/\sqrt{\lambda(c_{XY}\!, \rho)\nu\pi}\big)
\, + \,
\tfrac{\rho\,+\,1}{2}\ln\lambda(c_{XY}\!, \rho)
\, - \, \tfrac{1}{2}
\, + \,
\rho\ln\!
\big(2\sqrt{e\nu/\pi}\big)
\Big]\Big\},
\label{eqMaxCXYEps}
\end{align}
where $s' = s + \epsilon$,
in ($a$) we substitute 
$\mathbb{E}_{{P\mathstrut}_{\!X}{p\mathstrut}_{\lambda, \, 1}}\big[(Y-X)^{2}\big] = \frac{1}{2\lambda\nu}$, 
while in ($b$) we use
$|Y| \leq |X|+|Y-X|$ 
with $\mathbb{E}_{{P\mathstrut}_{\!X}{p\mathstrut}_{\lambda, \, 1}}\big[|X|\big]\leq s'$
and
$\mathbb{E}_{{P\mathstrut}_{\!X}{p\mathstrut}_{\lambda, \, 1}}\big[|Y-X|\big] = 1/\sqrt{\lambda\nu\pi}$.
For ($c$) we assume $c_{XY} \geq \tfrac{1}{2\nu}$. 
Observe that the objective function of (\ref{eqTight}) tends to $+\infty$ as $\lambda \rightarrow 0$,
while its derivative w.r.t. $\lambda$ is positive for $\lambda > 1$:
\begin{equation} \label{eqDerivativeLambda}
\frac{1}{2\lambda} \, - \, \frac{1}{2\lambda^{2}} \, + \, 
\frac{\rho}{2\lambda} \, - \, \frac{\rho}{2\lambda \, + \, 2s'\sqrt{\nu\pi}\lambda^{3/2}}
\; > \; 0 , \;\;\;\;\;\; \lambda > 1.
\end{equation}
We conclude that the objective function of (\ref{eqTight}) attains its minimum over $\lambda > 0$
when its derivative is zero, which occurs by the expression for the derivative in (\ref{eqDerivativeLambda}) at $\lambda = \lambda(\rho)\in (0, 1]$ solving the equation
\begin{equation} \nonumber
\rho \; = \; \bigg[\frac{1}{\lambda} - 1\bigg]\bigg[\frac{1}{s'\sqrt{\nu\pi\lambda}} + 1\bigg],
\;\;\;\;\;\; \lambda \in (0, 1].
\end{equation}
It follows that the minimizer of (\ref{eqTight}) itself for $c_{XY} \geq \tfrac{1}{2\nu}$
is $\lambda^{*} = \lambda(c_{XY}\!, \rho) \triangleq \max\big\{\tfrac{1}{2\nu c_{XY}}, \;\lambda(\rho)\big\}$.

We conclude that in the case of $q = 2$, $r = 1$
the maximum of (\ref{eqGenegalqr}) can be replaced with the supremum (\ref{eqMaxCXYEps}). 
The resulting inequality holds for all $R > 0$,
except possibly for the boundary point
\begin{equation} \label{eqBoundaryPoint}
R \; = \; R_{\min} \; \triangleq \; {\log\mathstrut}_{\!b}
\big(s'\sqrt{\pi/(2c_{XY})} + 1\big) \, + \, 
{\log\mathstrut}_{\!b} \big(2\sqrt{e}/\pi\big) \, + \,
\epsilon,
\end{equation}
where the supremum (\ref{eqMaxCXYEps}) as a function of $R$ transitions between $+\infty$ and a finite value. 
With $\limsup_{\substack{\,n\,\rightarrow\,\infty}}$ on both sides of (\ref{eqGenegalqr}) we obtain that
the supremum (\ref{eqMaxCXYEps}) can replace also the RHS of (\ref{eqGaussAbs}).
The RHS of (\ref{eqGaussAbs}) itself corresponds to the minimization in (\ref{eqTight})
over $\lambda \in (0, 1]$ with $\epsilon = 0$. 
The boundary point (\ref{eqBoundaryPoint}) can be used 
to verify that (\ref{eqMaxCXYEps}) converges to
the RHS of (\ref{eqGaussAbs}) as $\epsilon \rightarrow 0$ and $c_{XY}\rightarrow +\infty$.
For this, consider any point $(R_{1}, E_{1})$ on the $R$-$E$-plane,
with $R_{1} > {\log\mathstrut}_{\!b} \big(2\sqrt{e}/\pi\big)$, 
that lies above the graph of the supremum (\ref{eqGaussAbs})
as a function of $R$.
By (\ref{eqBoundaryPoint}) for a sufficiently small $\epsilon > 0$ and a sufficiently large $c_{XY}$ there exists
a point $(R_{2}, E_{2})$, 
with $R_{\min} < R_{2} < R_{1}$ and $E_{2} > E_{1}$, that lies above the graph of the supremum 
(\ref{eqMaxCXYEps}) 
with the parameters $(\epsilon, c_{XY})$.
Note that for $\rho > \frac{E_{2}\,- \,E_{1}}{R_{1} \, - \, R_{2}}$ the inner minimum of (\ref{eqTight})
with the same $(\epsilon, c_{XY})$ is {\em both} lower than $E_{2}$ at $R = R_{2}$ {\em and} lower than 
$E_{1}$ at $R = R_{1}$. 
Since 
this minimum 
does not increase
when $\epsilon$ decreases and $c_{XY}$ increases,
while for $\rho \leq \frac{E_{2}\,- \,E_{1}}{R_{1} \, - \, R_{2}}$
it is a uniformly continuous function of $\epsilon \geq 0$ and $c_{XY}\geq \tfrac{1}{2\nu}$,
we find that there exist $\tilde{\epsilon} < \epsilon$ and $\tilde{c}_{XY} > c_{XY}$ such that both points $(R_{2}, E_{2})$
and $(R_{1}, E_{1})$ lie above the graph of 
(\ref{eqMaxCXYEps}) 
with the parameters $(\tilde{\epsilon}, \tilde{c}_{XY})$.
\end{proof}

\bigskip

{\em Remark:} There is no loss in 
($b$) of (\ref{eqBeforeMin}) in the proof of Theorem~\ref{thmSUB11}, and no loss in ($b$) of (\ref{eqTight})
 in the proof of Theorem~\ref{thmSUB21}.
One type asymptotically achieving equalities in both steps is:
\begin{displaymath}
{P\mathstrut}_{\!X}(0) \,=\, 1 - 1/n, \;\;\;\;\;\;
{P\mathstrut}_{\!X}\big(\Delta_{\alpha,\,n}\cdot \lfloor s'n/\Delta_{\alpha,\,n}\rfloor\big) \,=\, 1/n.
\end{displaymath}
This is similar to the step ($f$) of (\ref{eqExplicitSUB}).

\bigskip

\begin{thm}[Case $q = 1$, $r = 2$] \label{thmSUB12}
{\em Let the channel be given by (\ref{eqChannel}) with $q = 1$, 
let codebooks ${\cal C}$ satisfy (\ref{eqPowerConstraint}) with $r = 2$,
and let decoders $g$ be defined by (\ref{eqDec}). 
Let $J \sim \text{Unif}\,\big(\{1, 2, .\,.\,.\, , \, M\}\big)$ be a random variable,
independent of the channel noise, and let ${\bf x}(J) \rightarrow {\bf Y}$ 
be the random channel-input and channel-output vectors, respectively. Then
}
\begin{align}
&
\limsup_{\substack{n\,\rightarrow\,\infty}}
\;
\sup_{{\cal C}}
\;
\sup_{g}
\;
\Big\{\!
- \tfrac{1}{n}\,{\log\mathstrut}_{\!b} \Pr \big\{
g({\bf Y}) \neq J \big\}
\Big\}
\;\;
\leq
\;\;\,
\sup_{\substack{\rho\,\geq\,0}}
\;\;
\min_{\substack{\\i\,=\, 1, \,2}}
\;
\Big\{\! -\!\rho R \; +
\label{eqLapSq} \\
&
\tfrac{1}{\ln b}
\Big[
\tfrac{\rho}{2}
\ln\!
\Big(
\tfrac{1}{2}\big[\nu s\kappa_{i}(\nu s, \rho)\lambda_{i}(\nu s, \rho)\big]^{2} + 1\Big)
\, + \,
\big[1 - \kappa_{i}(\nu s, \rho)\big]\nu s
\, + \,
\tfrac{1}{\lambda_{i}(\nu s, \, \rho)} \, + \, \ln\lambda_{i}(\nu s, \rho) \, - \, 1
\, + \,
\tfrac{\rho}{2}\ln\tfrac{\pi}{e}
\Big]
\Big\},
\nonumber
\end{align}
{\em where with $\rho_{1} \triangleq 4 + 2\sqrt{6}$ and $\rho_{2} \triangleq \frac{2}{\nu s} + 4 + 3\nu s$
we define two competing solutions $i = 1, 2$ for each $\rho$:}
\begin{align}
\lambda_{1}(\nu s, \rho) \; & = \;
\Bigg\{
\begin{array}{l r}
\tfrac{1}{1 \, + \, \nu s}, & 
\;\;\;\;\;\;\,\, 
0 \, \leq \, \rho \, \leq \, \rho_{1}, \\
1 \, - \, \nu s\min\big\{\phi(\nu s, \rho), \; \tfrac{1}{1 \, + \, \nu s}\big\}, & \rho \, > \, \rho_{1},
\end{array}
\nonumber \\
\kappa_{1}(\nu s, \rho) \; & = \;
\Bigg\{
\begin{array}{l r}
1, & \;\;\; 0 \, \leq \, \rho \, \leq \, \rho_{1}, \\
\lambda_{1}^{-1}(\nu s, \rho)\min\big\{\phi(\nu s, \rho), \; \tfrac{1}{1 \, + \, \nu s}\big\}, & \rho \, > \, \rho_{1},
\end{array}
\nonumber \\
\phi(\nu s, \rho) \; & = \;
\tfrac{1}{2(1\,+\,\rho)\nu s}
\big[
\rho - \sqrt{\rho^{2} - 8(1 + \rho)}
\big],
\nonumber \\
\lambda_{2}(\nu s, \rho) \; & = \;
\Bigg\{
\begin{array}{l r}
\lambda(\nu s, \rho), & 
\;\;\;\;\;\;\;\;\;\;\;\;\;\;\;\;\;\;\;\;\;\;\;\;\;\;\;\;\;\;\;\;\;\;\;\;\;\;\,\,
0 \, \leq \, \rho \, \leq \, \rho_{2}, \\
\tfrac{1}{1 \, + \, \nu s}, & \rho \, > \, \rho_{2},
\end{array}
\label{eqCubic2}
\end{align}
{\em 
$\kappa_{2}(\nu s, \rho) \equiv 1$, and 
$\lambda(\nu s, \rho)$ in (\ref{eqCubic2}) is the solution of the equation
$\rho = \Big[\frac{1}{\lambda} - 1\Big]\Big[\frac{2}{(\nu s\lambda)^{2}} + 1\Big]$
for $\lambda \in (0, 1]$.
The inequality holds
for all $R > 0$ with the possible exception of the single point $R = \frac{1}{2}\,{\log\mathstrut}_{\!b} \,\frac{\pi}{e}$
where the RHS of (\ref{eqLapSq}) as a function of $R$ transitions between $+\infty$ and $\nu s/\ln b$.
}
\end{thm}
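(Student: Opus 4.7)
The plan is to specialize Theorem~\ref{thmErrorExp} to $q=1$, $r=2$, evaluate the conditional expectations under the Laplace density ${p\mathstrut}_{\lambda,\kappa}$, replace them by explicit functions of $(\lambda,\kappa,s)$ via the type constraint, and then carry out the calculus minimization in $(\lambda,\kappa)$ to identify the two candidate branches appearing in (\ref{eqLapSq}).

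First, for $q=1$ the density ${p\mathstrut}_{\lambda,\kappa}(y\,|\,x) = \tfrac{\lambda\nu}{2}\exp(-\lambda\nu|y-\kappa x|)$ is Laplace with mean $\kappa x$ and variance $2/(\lambda\nu)^2$, so conditionally on $X$,
\[
\mathbb{E}_{{p\mathstrut}_{\lambda,\kappa}}\!\big[|Y-X|\,\big|\,X\big] = |(\kappa-1)X| + \tfrac{1}{\lambda\nu}e^{-\lambda\nu|(\kappa-1)X|}, \qquad \mathbb{E}_{{p\mathstrut}_{\lambda,\kappa}}\!\big[Y^2\,\big|\,X\big] = (\kappa X)^2 + \tfrac{2}{(\lambda\nu)^2}.
\]
Using the type constraint $\mathbb{E}_{{P\mathstrut}_{\!X}}[X^2]\leq s^2+\epsilon$, Jensen yields $\mathbb{E}_{{P\mathstrut}_{\!X}}[|X|]\leq \sqrt{s^2+\epsilon}=:s'$, and restricting $\kappa$ to $[0,1]$ by the same monotonicity argument used in step~($c1$) of the proof of Theorem~\ref{thmSUB11} (the objective grows for $\kappa<0$ relative to $\kappa=0$ and for $\kappa>1$ relative to $\kappa=1$), the bracket in (\ref{eqGenegalqr}) is upper bounded by $-\rho(R-\epsilon) + (1/\ln b)\,f(\lambda,\kappa)$ with
\[
f(\lambda,\kappa) \;\triangleq\; (1-\kappa)\nu s' + \tfrac{1}{\lambda} + \tfrac{\rho}{2}\ln\!\Big(\tfrac{1}{2}(\nu s'\kappa\lambda)^2 + 1\Big) + \ln\lambda - 1 + \tfrac{\rho}{2}\ln\!\tfrac{\pi}{e},
\]
after collecting the constants $\rho\ln[\Gamma(3/2)/\Gamma(2)]$, $\rho\ln\nu$, $(\rho+1)\ln(\lambda/e)$ and $(\rho/2)\ln(2e)$ supplied by Theorem~\ref{thmErrorExp} into the two logarithmic terms.

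Second, I would carry out the minimization of $f$ over $(\lambda,\kappa)\in(0,1]\times[0,1]$. With the substitution $t = \nu s'\kappa\lambda$, the first-order conditions $\partial_\kappa f = 0$ and $\partial_\lambda f = 0$ combine to give $t = 1-\lambda$ (hence $\kappa = (1-\lambda)/(\nu s'\lambda)$) and $\rho = ((1-\lambda)^2 + 2)/[\lambda(1-\lambda)]$; writing $u = 1-\lambda$, this is the quadratic $(1+\rho)u^2 - \rho u + 2 = 0$ whose discriminant $\rho^2 - 8(1+\rho)$ vanishes at $\rho = \rho_1 = 4+2\sqrt{6}$. For $\rho\geq\rho_1$ the smaller root $u_- = \nu s'\,\phi(\nu s',\rho)$ produces the interior candidate $(\lambda_1, \kappa_1)$, clamped by $\lambda_1\geq 1/(1+\nu s')$ and $\kappa_1\leq 1$, which engage simultaneously when $\phi\geq 1/(1+\nu s')$. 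On the boundary $\kappa = 1$, the minimizer in $\lambda$ satisfies $\rho = (\tfrac{1}{\lambda}-1)\big(1 + \tfrac{2}{(\nu s'\lambda)^2}\big)$, giving the candidate $(\lambda_2,\kappa_2)=(\lambda(\nu s',\rho),1)$, clamped to $\lambda_2 = 1/(1+\nu s')$ for $\rho\geq\rho_2$; the threshold $\rho_2 = 2/(\nu s') + 4 + 3\nu s'$ is identified by substituting $\lambda=1/(1+\nu s')$ into the $\kappa=1$ stationarity equation. The corner $\kappa=0$ (value $\nu s'$ at $\lambda=1$) is dominated by the $\kappa=1$ branch for small $\rho$, so the global minimum of $f$ on the rectangle equals $\min\{f(\lambda_1,\kappa_1), f(\lambda_2,\kappa_2)\}$.

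Finally, I would verify that both candidate minimizers satisfy the auxiliary condition $\mathbb{E}_{{P\mathstrut}_{\!X}{p\mathstrut}_{\lambda,\kappa}}[|Y-X|]\leq c_{XY}$ of Theorem~\ref{thmErrorExp} for $c_{XY}$ chosen large enough in terms of $s,\nu$ and the range of $\rho$ of interest, along the lines of the bookkeeping in the proof of Theorem~\ref{thmSUB11}. Taking $\limsup_{n\to\infty}$ on both sides of (\ref{eqGenegalqr}) together with $\epsilon\to 0$ (and $c_{XY}\to\infty$ as in Theorem~\ref{thmSUB21} if required) then yields (\ref{eqLapSq}); the exceptional point $R=\tfrac{1}{2}\log_b(\pi/e)$ appears as the jump of $\sup_\rho$ between $\nu s/\ln b$ and $+\infty$, read off from the $\kappa=0$, $\lambda=1$ branch (value $\nu s'$) in the limit $\rho\to\infty$. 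I expect the main obstacle to be the bookkeeping in the second step: identifying the two candidate critical points together with their clamping, pinning down the thresholds $\rho_1$ (from the discriminant of the interior quadratic) and $\rho_2$ (from the meeting of the interior and $\kappa=1$ branches at $\lambda=1/(1+\nu s')$), and ruling out any further local minimum across $\rho\geq 0$.
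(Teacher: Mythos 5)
Your proposal is correct and follows essentially the same route as the paper: specialize Theorem~\ref{thmErrorExp} to $q=1$, $r=2$, evaluate the conditional expectations under ${p\mathstrut}_{\lambda,\kappa}$, pass to $\kappa\in[0,1]$ and upper-bound $\mathbb{E}[|X|]$, $\mathbb{E}[X^2]$ via the power constraint, collect the gamma and logarithm constants into $\tfrac{\rho}{2}\ln(\pi/e)$, reduce to minimizing $f(\lambda,\kappa)$ on $(0,1]\times[0,1]$, and read off the two competing branches together with $\rho_1=4+2\sqrt{6}$ and $\rho_2=2/(\nu s)+4+3\nu s$. The one genuine organizational difference is in how you carry out that minimization. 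You form the joint first-order conditions in $(\lambda,\kappa)$, combine them to get $\nu s'\kappa\lambda=1-\lambda$ and the quadratic $(1+\rho)u^2-\rho u+2=0$, and then treat the $\kappa=1$ face as a separate one-variable problem, which requires you to explicitly rule out additional critical points across $\rho\ge0$. The paper instead reparametrizes with $\zeta\triangleq\kappa\lambda$, notes that the feasible region $\kappa\le1$ becomes $\lambda\ge\zeta$, minimizes first over $\lambda\ge\zeta$ for fixed $\zeta$ (yielding $\lambda^*=\max\{1-\nu\tilde s\zeta,\ \zeta\}$, with the crossover precisely at $\zeta=1/(1+\nu\tilde s)$), and only then minimizes over $\zeta$, splitting the interval at that crossover. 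The sequential scheme makes the two branches (interior $\kappa<1$ and boundary $\kappa=1$) emerge automatically and makes it structurally evident that no further local minima need to be hunted for, so it buys a cleaner completeness argument; your joint-FOC approach is more direct but places the burden of exhaustiveness on the calculus. Both arrive at the same candidates, the same thresholds, and, after letting $\epsilon\to0$, $c_{XY}\to\infty$ as in Theorem~\ref{thmSUB21}, the same exceptional point $R=\tfrac12\log_b(\pi/e)$; you flag the $\epsilon$/$c_{XY}$ bookkeeping as remaining, and indeed that is exactly the remaining work to complete the argument in the paper's style.
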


\bigskip

\begin{proof}
Let us rewrite the maximum of (\ref{eqGenegalqr}) for $q = 1$, $r = 2$, with 
$s^{2} \!+ \epsilon = \tilde{s}^{2}$:
\begin{align}
&
{\color{black}
\max_{\substack{\\{P\mathstrut}_{\!X}:\\
{P\mathstrut}_{\!X}\,\in \, {\cal P}_{n}({\cal X}_{n}),
\\
\mathbb{E}_{{P\mathstrut}_{\!X}}[X^{2}] \; \leq \; \tilde{s}^{2}}}
}
\;\;\;\;
{\color{black}
\sup_{\substack{\rho\,\geq\,0}}
}
\;\;\;\;\;\;\,\,
{\color{black}
\inf_{\substack{\\{\color{white}{P\mathstrut}_{\!X}}\lambda, \, \kappa:{\color{white}{P\mathstrut}_{\!X}}
\\
{\color{white}{P\mathstrut}_{\!X}}\lambda \, \in \, (0, \, 1], \; \kappa \, \in \, \mathbb{R}, {\color{white}{P\mathstrut}_{\!X}}
\\
\mathbb{E}_{{P\mathstrut}_{\!X}{p\mathstrut}_{\lambda, \, \kappa}}[|Y-X|] \; \leq \; c_{XY}
}}
}
\;\;\;\;\;\;\;
\Big\{\!
-\!\rho(R - \epsilon)
\, +
\nonumber \\
&
\;\;\;\;\;\;\;\;\;\;\;\;\;\;\;\;\;\;\;\;\,\,\,\,
\;\;\;\;\;\;\;\;\;\;\;\;\;\;\;\;\;\;\;\;\;\;\;\;\;
\tfrac{1}{\ln b}\Big[
\nu
\mathbb{E}_{{P\mathstrut}_{\!X}{p\mathstrut}_{\lambda, \, \kappa}}\big[|Y-X|\big]
\, + \,
\tfrac{\rho}{2}
\ln\!
\Big(
\tfrac{\pi}{2e}\lambda^{2}\nu^{2}
\mathbb{E}_{{P\mathstrut}_{\!X}{p\mathstrut}_{\lambda, \, \kappa}}\big[Y^{2}\big]
\Big)
\, + \,
\ln\lambda
\, - \, 1
\Big]\Big\}
\nonumber \\
\overset{a}{\leq} \;\;
&
{\color{black}
\max_{\substack{\\{P\mathstrut}_{\!X}:\\
{P\mathstrut}_{\!X}\,\in \, {\cal P}_{n}({\cal X}_{n}),
\\
\mathbb{E}_{{P\mathstrut}_{\!X}}[X^{2}] \; \leq \; \tilde{s}^{2}}}
}
\;\;\;\;
{\color{black}
\sup_{\substack{\rho\,\geq\,0}}
}
{\color{black}
\inf_{\substack{\\{\color{white}{P\mathstrut}_{\!X}}\lambda, \, \kappa:{\color{white}{P\mathstrut}_{\!X}}
\\
{\color{white}{P\mathstrut}_{\!X}}\lambda \, \in \, (0, \, 1], \; \kappa \, \in \, [0, \, 1], {\color{white}{P\mathstrut}_{\!X}}
\\
\mathbb{E}_{{P\mathstrut}_{\!X}{p\mathstrut}_{\lambda, \, \kappa}}[|(\kappa - 1)X| \, + \, |Y \!- \kappa X|] \; \leq \; c_{XY}
}}
}
\Big\{\!
-\!\rho(R - \epsilon)
\, +
\label{eqObjFunc01} \\
&
\,\,
\tfrac{1}{\ln b}\Big[
\nu
\mathbb{E}_{{P\mathstrut}_{\!X}{p\mathstrut}_{\lambda, \, \kappa}}
\big[|(\kappa - 1)X| + |Y \!- \kappa X|\big]
\, + \,
\tfrac{\rho}{2}
\ln\!
\Big(
\tfrac{\pi}{2e}\lambda^{2}\nu^{2}
\mathbb{E}_{{P\mathstrut}_{\!X}{p\mathstrut}_{\lambda, \, \kappa}}
\big[\kappa^{2}X^{2} + (Y \!- \kappa X)^{2}\big]
\Big)
\, + \,
\ln\lambda
\, - \, 1
\Big]\Big\}
\nonumber \\
\overset{b}{\leq} \;\;
&
{\color{white}
\max_{\substack{\\{P\mathstrut}_{\!X}:\\
{P\mathstrut}_{\!X}\,\in \, {\cal P}_{n}({\cal X}_{n}),
\\
\mathbb{E}_{{P\mathstrut}_{\!X}}[X^{2}] \; \leq \; \tilde{s}^{2}}}
}
\;\;\;\;
{\color{black}
\sup_{\substack{\rho\,\geq\,0}}
}
\;\;\;\;\;\;\;\;\,
{\color{black}
\inf_{\substack{\\{\color{white}{P\mathstrut}_{\!X}}\lambda, \, \kappa:{\color{white}{P\mathstrut}_{\!X}}
\\
{\color{white}{P\mathstrut}_{\!X}}
\kappa \, \in \, [0, \, 1], 
{\color{white}{P\mathstrut}_{\!X}}
\\
[\nu (c_{XY}- \,\tilde{s})]^{-1} \, \leq \, \lambda \, \leq \, 1
}}
}
\;\;\;\;\;\;\;\;\,
\Big\{\!
-\!\rho(R - \epsilon)
\, +
\nonumber \\
&
\;\;\;\;\;\;\;\;\;\;\;\;\;\;\;\;\;\;\;\;\;\;\;\;
\;\;\;\;\;\;\;\;\;\;\;\;\;\;\;\;\;\;\,\,
\;\;\;\,\,\,
\tfrac{1}{\ln b}\Big[
(1 - \kappa) \nu\tilde{s} 
\, + \, \tfrac{1}{\lambda}
\, + \,
\tfrac{\rho}{2}
\ln\!
\big(\tfrac{1}{2}\kappa^{2}\lambda^{2}\nu^{2}\tilde{s}^{2} + 1\big)
\, + \,
\tfrac{\rho}{2}\ln\tfrac{\pi}{e}
\, + \,
\ln\lambda
\, - \, 1
\Big]\Big\}
\nonumber \\
\overset{c}{\leq} \;\;
&
\;\;\;\;\;\;\;\;\;\;\;\;\;\;\;\;\;\,\,\,\,\,
{\color{black}
\sup_{\substack{\rho\,\geq\,0}}
}
\;\;\;\;\;\;
{\color{black}
\min_{\substack{\\
\zeta: \; \epsilon_{0} \, \leq \, \zeta \, \leq\, 1
}}
}
\;\;\;
{\color{black}
\min_{\substack{\\
\lambda: \;
\zeta \, \leq\, \lambda \, \leq \, 1
}}
}
\;\;\;\;\;\;\;
\Big\{\!
-\!\rho(R - \epsilon)
\, +
\label{eqMinoverZeta} \\
&
\;\;\;\;\;\;\;\;\;\;\;\;\;\;\;\;\;\;\;\;\;\;\;\;
\;\;\;\;\;\;\;\;\;\;\;\;\;\;\;\;\;\,\,\,\,
\;\;\;\,\,\,
\tfrac{1}{\ln b}\Big[
(1 - \zeta/\lambda)\nu\tilde{s} 
\, + \, \tfrac{1}{\lambda}
\, + \,
\tfrac{\rho}{2}
\ln\!
\big(\tfrac{1}{2}\zeta^{2}\nu^{2}\tilde{s}^{2} + 1\big)
\, + \,
\tfrac{\rho}{2}\ln\tfrac{\pi}{e}
\, + \,
\ln\lambda
\, - \, 1
\Big]\Big\}
\nonumber \\
\overset{d}{=} \;\;
&
\;\;\;\;\;\;\;\;\;\;\;\;\;\;\;\;\;\,\,\,\,\,
{\color{black}
\sup_{\substack{\rho\,\geq\,0}}
}
\;\;\;\;\;\;\;\;\;
{\color{black}
\min_{\substack{\\
i \, = \, 1, \, 2
}}
}
\;\;\;\;\;\;\;\;\;\;\;\;\;\;\;\;\;\;\;\;\;\;\;\;\;\,
\Big\{\!
-\!\rho(R - \epsilon)
\, +
\label{eqTwoMins} \\
&
\!\!\!\!
\tfrac{1}{\ln b}\Big[
\Big(1 - \tfrac{\zeta_{i}(\epsilon_{0}, \, \nu\tilde{s}, \, \rho)}{\lambda_{i}(\epsilon_{0}, \,\nu\tilde{s}, \,\rho)}\Big)\nu\tilde{s} 
\, + \, \tfrac{1}{\lambda_{i}(\epsilon_{0}, \,\nu\tilde{s}, \,\rho)}
\, + \,
\tfrac{\rho}{2}
\ln\!
\big(\tfrac{1}{2}\zeta^{2}_{i}(\epsilon_{0}, \nu\tilde{s}, \rho)\nu^{2}\tilde{s}^{2} + 1\big)
\, + \,
\tfrac{\rho}{2}\ln\tfrac{\pi}{e}
\, + \,
\ln\lambda_{i}(\epsilon_{0}, \nu\tilde{s}, \rho)
\, - \, 1
\Big]\Big\},
\nonumber
\end{align}
where the inequalities can be justified as follows:

($a$) In the second expectation we use  the equality 
$\mathbb{E}_{{P\mathstrut}_{\!X}{p\mathstrut}_{\lambda, \, \kappa}}\big[X(Y \!- \kappa X)\big] = 0$
according to (\ref{eqPlambdakappa});
in the first expectation and inside the condition under the $\inf$ we use the inequality 
$|Y - X| \leq |(\kappa-1)X|+|Y\!-\kappa X|$. This gives both a higher objective function and a stricter minimization condition. 
At this point, both the objective function and the expectation under the $\inf$ of (\ref{eqObjFunc01})
can only be larger for $\kappa < 0$ and for $\kappa > 1$, 
comparing to their values at $\kappa = 0$ and $\kappa = 1$, respectively.
So it suffices to minimize over $\kappa \in [0, 1]$.

($b$) Follows by the power constraint, 
by Jensen's inequality: 
$\mathbb{E}_{{P\mathstrut}_{\!X}}\big[|X|\big] \leq \big(\mathbb{E}_{{P\mathstrut}_{\!X}}\big[X^{2}\big]\big)^{1/2} \leq \tilde{s}$, 
and by the properties of the Laplace distribution: 
$\mathbb{E}_{{P\mathstrut}_{\!X}{p\mathstrut}_{\lambda, \, \kappa}}
\big[|Y \!- \kappa X|\big] = \tfrac{1}{\lambda\nu}$ and 
$\mathbb{E}_{{P\mathstrut}_{\!X}{p\mathstrut}_{\lambda, \, \kappa}}
\big[(Y \!- \kappa X)^{2}\big] = \tfrac{2}{\lambda^{2}\nu^{2}}$. 
Accordingly, we also apply a stricter minimization condition:
\begin{displaymath}
\mathbb{E}_{{P\mathstrut}_{\!X}{p\mathstrut}_{\lambda, \, \kappa}}
\big[|(\kappa - 1)X| + |Y \!- \kappa X|\big]
\; \leq \; \tilde{s} \, + \, \tfrac{1}{\lambda\nu}
\; \leq \; c_{XY},
\end{displaymath}
assuming in the formulation of the condition $c_{XY}$ large enough such that $c_{XY} > \tilde{s}$ and
$[\nu (c_{XY} \! - \tilde{s})]^{-1} \leq 1$.

($c$) 
Denoting $\epsilon_{0} \triangleq [\nu (c_{XY} \! - \tilde{s})]^{-1}$ 
and $\zeta \triangleq \kappa\lambda$, we apply a slightly stricter minimization condition
$\lambda \geq \zeta \geq \epsilon_{0}$
and minimize first over $\lambda \geq \zeta$ given $\zeta$, and then over $\zeta \geq \epsilon_{0}$.

For ($d$) we assume $\epsilon_{0} \leq  \tfrac{1}{1\,+\,\nu \tilde{s}}$ and 
split the minimization over $\zeta$ in (\ref{eqMinoverZeta}) in two parts.
We minimize separately over $\zeta \in \big[\epsilon_{0}, \; \tfrac{1}{1\,+\,\nu \tilde{s}}\big]$
and 
$\zeta \in \big[\tfrac{1}{1\,+\,\nu \tilde{s}}, \; 1\big]$,
and then take the minimum between the two results.

For the minimum over $\zeta \in \big[\epsilon_{0}, \; \tfrac{1}{1\,+\,\nu \tilde{s}}\big]$ we find:
\begin{align}
&
{\color{black}
\min_{\substack{\\
\zeta: \\
\epsilon_{0} \, \leq \, \zeta \, \leq\, 1/(1 \, + \, \nu \tilde{s})
}}
}
\;\;
{\color{black}
\min_{\substack{\\
{\color{white}\zeta}
\lambda:
{\color{white}\zeta}
\\
{\color{white}/}
\zeta \, \leq\, \lambda \, \leq \, 1
{\color{white}/}
}}
}
\,\,\,
\Big\{
\tfrac{\rho}{2}
\ln\!
\big(\tfrac{1}{2}\nu^{2}\tilde{s}^{2}\zeta^{2} + 1\big)
\, + \,
\ln\lambda
\, + \,
(1 - \nu\tilde{s}\zeta)/\lambda
\Big\}
\nonumber \\
\overset{d1}{=} \;
&
{\color{black}
\min_{\substack{\\
\zeta: \\
\epsilon_{0} \, \leq \, \zeta \, \leq\, 1/(1 \, + \, \nu \tilde{s})
}}
}
\;\;
{\color{white}
\min_{\substack{\\
{\color{white}\zeta}
\lambda:
{\color{white}\zeta}
\\
{\color{white}/}
\zeta \, \leq\, \lambda \, \leq \, 1
{\color{white}/}
}}
}
\,\,\,
\Big\{
\tfrac{\rho}{2}
\ln\!
\big(\tfrac{1}{2}\nu^{2}\tilde{s}^{2}\zeta^{2} + 1\big)
\, + \,
\ln(1 - \nu\tilde{s}\zeta)
\, + \,
1
\Big\}
\label{eqExpressionZeta} \\
\overset{d2}{=} \;
&
\;\;\;\;\;\;
\min_{\substack{\\
i \, = \, 1, \, 2
}}
\;\;
\;\;\;\;\;\;\;\;\;\;\;\;\;\;\;\;\;\;\;\;
\Big\{
\tfrac{\rho}{2}
\ln\!
\big(\tfrac{1}{2}\nu^{2}\tilde{s}^{2}\zeta^{2}_{i}(\epsilon_{0}, \nu\tilde{s}, \rho) + 1\big)
\, + \,
\ln\!\big(1 - \nu\tilde{s}\zeta_{i}(\epsilon_{0}, \nu\tilde{s}, \rho)\big)
\, + \,
1
\Big\},
\label{eqMinFirst} \\
& 
\;\;
\zeta_{1}(\epsilon_{0}, \nu\tilde{s}, \rho) \; = \;
\Bigg\{
\begin{array} {l r}
\tfrac{1}{1 \, + \, \nu\tilde{s}},
& \;\;\; 0 \, \leq \, \rho \, \leq \, 4 + 2\sqrt{6},  \\
\min\big\{\max\{\epsilon_{0}, \; \phi(\nu \tilde{s}, \rho)\}, \; \tfrac{1}{1 \, + \, \nu \tilde{s}}\big\},
& \rho \, > \, 4 + 2\sqrt{6},
\end{array}
\label{eqRho1} \\
& 
\;\;\;\;\;\;\,\,
\phi(\nu \tilde{s}, \rho) \; = \;
\tfrac{1}{2(1\,+\,\rho)\nu \tilde{s}}
\big[
\rho - \sqrt{\rho^{2} - 8(1 + \rho)}
\big],
\label{eqLocMin} \\
& 
\;\;
\zeta_{2}(\epsilon_{0}, \nu\tilde{s}, \rho) \; = \; \tfrac{1}{1 \, + \, \nu \tilde{s}},
\nonumber
\end{align}
where in ($d1$) the minimizer is $\lambda^{*} = 1 - \nu\tilde{s}\zeta$.
For ($d2$), consider the expression in (\ref{eqExpressionZeta})
as a function of $\zeta \in \big[0, \tfrac{1}{\nu \tilde{s}}\big)$.
This function equals $1$ at $\zeta = 0$ and tends to $-\infty$
as $\zeta \rightarrow \tfrac{1}{\nu \tilde{s}}$.
Its derivative is negative at $\zeta = 0$. 
Equating the derivative to zero, we obtain an equation:
\begin{equation} \label{eqZeros}
\rho\; = \; \frac{2 + \nu^{2}\tilde{s}^{2}\zeta^{2}}{\nu\tilde{s}\zeta(1 - \nu\tilde{s}\zeta)},
\;\;\;\;\;\; \zeta \in \big(0, \tfrac{1}{\nu\tilde{s}}\big).
\end{equation}
It has no solutions for $\rho < \rho_{1} \triangleq 4 + 2\sqrt{6}$
and a single solution for $\rho = \rho_{1}$,
so that the derivative is non-positive and the minimizer of (\ref{eqExpressionZeta}) is 
$\zeta_{1}(\epsilon_{0}, \nu\tilde{s}, \rho) = \frac{1}{1\,+\,\nu \tilde{s}}$ in these cases.
For $\rho > \rho_{1}$ there are two distinct solutions to (\ref{eqZeros}). 
We conclude that the smaller of the two solutions, which is given by (\ref{eqLocMin}) as $\phi(\nu \tilde{s}, \rho)$, 
corresponds to a local minimum,
while the larger solution corresponds to a local maximum.
It follows that for $\rho > \rho_{1}$ the minimizer of (\ref{eqExpressionZeta}) is either 
$\zeta_{1}(\epsilon_{0}, \nu\tilde{s}, \rho) = \min\big\{\max\{\epsilon_{0}, \; \zeta(\nu \tilde{s}, \rho)\}, \; \tfrac{1}{1 \, + \, \nu \tilde{s}}\big\}$
or $\zeta_{2}(\epsilon_{0}, \nu\tilde{s}, \rho) = \frac{1}{1\,+\,\nu \tilde{s}}$.
The latter possibility,  
which 
comes with $\lambda^{*} = \tfrac{1}{1 \, + \, \nu \tilde{s}}$,
can be discarded as it gives an
upper bound on the minimum over $\zeta \in \big[\tfrac{1}{1\,+\,\nu \tilde{s}}, \; 1\big]$,
which is considered next:
\begin{align}
&
{\color{black}
\min_{\substack{\\
\zeta: \\
1/(1 \, + \, \nu \tilde{s}) \, \leq \, \zeta \, \leq\, 1
}}
}
\;\;
{\color{black}
\min_{\substack{\\
{\color{white}\zeta}
\lambda:
{\color{white}\zeta}
\\
{\color{white}/}
\zeta \, \leq \, \lambda \, \leq \, 1
{\color{white}/}
}}
}
\;
\,\,\,
\Big\{
\tfrac{\rho}{2}
\ln\!
\big(\tfrac{1}{2}\nu^{2}\tilde{s}^{2}\zeta^{2} + 1\big)
\, + \,
\ln\lambda
\, + \,
(1 - \nu\tilde{s}\zeta)/\lambda
\Big\}
\nonumber \\
\overset{d3}{=} \;
&
{\color{black}
\min_{\substack{\\
\zeta: \\
1/(1 \, + \, \nu \tilde{s}) \, \leq \, \zeta \, \leq\, 1
}}
}
\;\;
{\color{white}
\min_{\substack{\\
{\color{white}\zeta}
\lambda:
{\color{white}\zeta}
\\
{\color{white}/}
\zeta \, \leq\, \lambda \, \leq \, 1
{\color{white}/}
}}
}
\;
\,\,\,
\Big\{
\tfrac{\rho}{2}
\ln\!
\big(\tfrac{1}{2}\nu^{2}\tilde{s}^{2}\zeta^{2} + 1\big)
\, + \,
\ln\zeta
\, + \,
1/\zeta
\, - \,
\nu\tilde{s}
\Big\}
\label{eqOF} \\
\overset{d4}{=} \; &
\;\;\;\;\;\,\,\,\,
\tfrac{\rho}{2}
\ln\!
\big(\tfrac{1}{2}\nu^{2}\tilde{s}^{2}\zeta_{2}^{2}(\nu \tilde{s}, \rho) + 1\big)
\, + \,
\ln\zeta_{2}(\nu \tilde{s}, \rho)
\, + \,
1/\zeta_{2}(\nu \tilde{s}, \rho)
\, - \,
\nu\tilde{s},
\label{eqMinSecond} \\
&
\;\;\;\;\;\;\;\;\;\;\;\;\;\;\;\;\;\;\;\;\;\,\,\,\,
\zeta_{2}(\nu \tilde{s}, \rho) \; = \;
\Bigg\{
\begin{array} {l r}
\zeta(\nu \tilde{s}, \rho),
& \;\;\; 0 \, \leq \, \rho \, \leq \, \tfrac{2}{\nu\tilde{s}} + 4 + 3\nu\tilde{s},  \\
\tfrac{1}{1 \, + \, \nu \tilde{s}},
& \rho \, > \, \tfrac{2}{\nu\tilde{s}} + 4 + 3\nu\tilde{s},
\end{array}
\label{eqRho2}
\end{align}
where in ($d3$) the minimizer is $\lambda^{*} \!= \zeta$.
For ($d4$), observe that the objective function of (\ref{eqOF})
tends to $+\infty$ as $\zeta\rightarrow 0$,
while its derivative is positive for $\zeta > 1$:
\begin{equation} \label{eqDerZeta}
\rho \,\frac{\nu^{2}\tilde{s}^{2}\zeta}{2 + \nu^{2}\tilde{s}^{2}\zeta^{2}}
\, + \,
\frac{1}{\zeta} \, - \, \frac{1}{\zeta^{2}}
\; > \; 0 , \;\;\;\;\;\; \zeta > 1.
\end{equation}
We conclude that the objective function of (\ref{eqOF}) attains its minimum over $\zeta > 0$
when its derivative is zero, which occurs by the expression for the derivative in (\ref{eqDerZeta}) at 
$\zeta = \zeta(\nu \tilde{s}, \rho) \in (0, 1]$ solving the equation
\begin{equation} \nonumber
\rho \; = \; \bigg[\frac{1}{\zeta} - 1\bigg]\bigg[\frac{2}{\nu^{2}\tilde{s}^{2}\zeta^{2}} + 1\bigg],
\;\;\;\;\;\; \zeta \in (0, 1].
\end{equation}
It follows that the minimizer of (\ref{eqOF}) itself is 
$\zeta_{2}(\nu \tilde{s}, \rho) = \max\big\{\tfrac{1}{1 \, + \, \nu \tilde{s}}, \;\zeta(\nu \tilde{s}, \rho)\big\}$,
equal to (\ref{eqRho2}).

By (\ref{eqMinFirst}) and (\ref{eqMinSecond}) 
we conclude that in the case of $q = 1$ and $r = 2$
the maximum of (\ref{eqGenegalqr}) can be replaced with the supremum (\ref{eqTwoMins})
with (\ref{eqRho1}), (\ref{eqRho2}), 
$\lambda_{1}(\epsilon_{0}, \nu\tilde{s}, \rho) = 1 - \nu\tilde{s}\zeta_{1}(\epsilon_{0}, \nu\tilde{s}, \rho)$,
and $\lambda_{2}(\epsilon_{0}, \nu\tilde{s}, \rho) = \zeta_{2}(\nu \tilde{s}, \rho)$.
The resulting inequality holds for all $R > 0$,
except possibly for the boundary point
\begin{equation} \nonumber 
R \; = \; 
\tfrac{1}{2}\,{\log\mathstrut}_{\!b}
\big(
\tfrac{1}{2}\nu^{2}\tilde{s}^{2}
\epsilon^{2}_{0} + 1\big) 
\, + \, 
\tfrac{1}{2}\,{\log\mathstrut}_{\!b} \,\tfrac{\pi}{e}
\, + \,
\epsilon,
\end{equation}
where the supremum (\ref{eqTwoMins}) as a function of $R$ transitions between $+\infty$ and a finite value. 
With $\limsup_{\substack{\,n\,\rightarrow\,\infty}}$ on both sides of (\ref{eqGenegalqr}) we obtain that
the supremum (\ref{eqTwoMins}) can replace also the RHS of (\ref{eqLapSq}).
The RHS of (\ref{eqLapSq}) itself corresponds to the minimization in (\ref{eqMinoverZeta})
over $\zeta \in (0, 1]$ with $\epsilon = 0$.
By the same reasoning as at the end of the proof of Theorem~\ref{thmSUB21},
we conclude that (\ref{eqTwoMins}) converges to the RHS of (\ref{eqLapSq}) 
as $\epsilon \rightarrow 0$ and $\epsilon_{0}\rightarrow 0$.
\end{proof}

Examples of the explicit bounds of Theorems~\ref{thmSUB11},~\ref{thmSUB21},~\ref{thmSUB12}
as functions of the information rate $R$ are shown on Fig.~\ref{graphs12} and Fig.~\ref{graph3}.
The bounds reach zero at $R(0)$ equal to the Shannon upper bound in each case.

\begin{figure}
\centering
\begin{minipage}{.5\textwidth}
  \centering
  \includegraphics[width=1\textwidth]{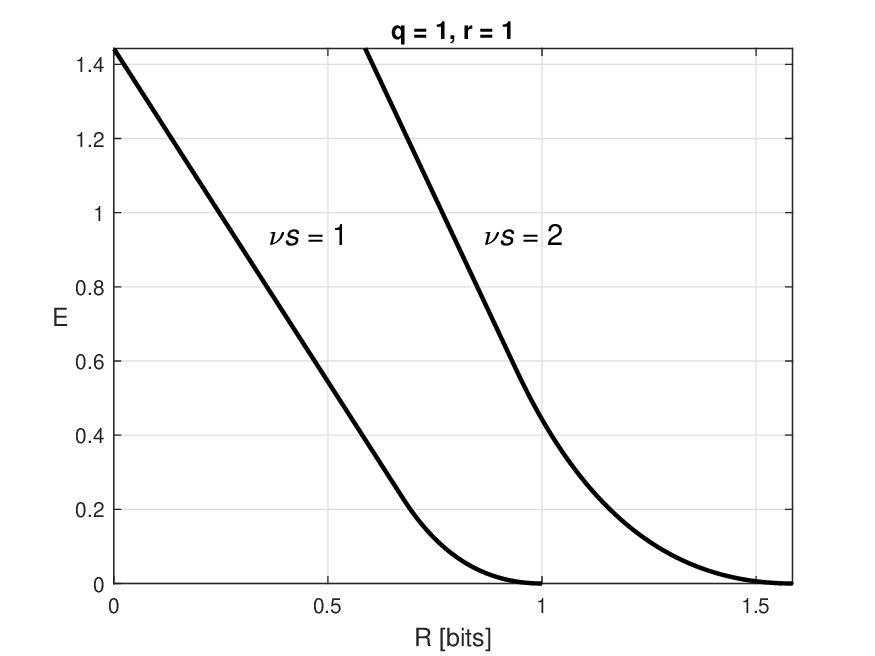}
\end{minipage}%
\begin{minipage}{.5\textwidth}
  \centering
  \includegraphics[width=1\textwidth]{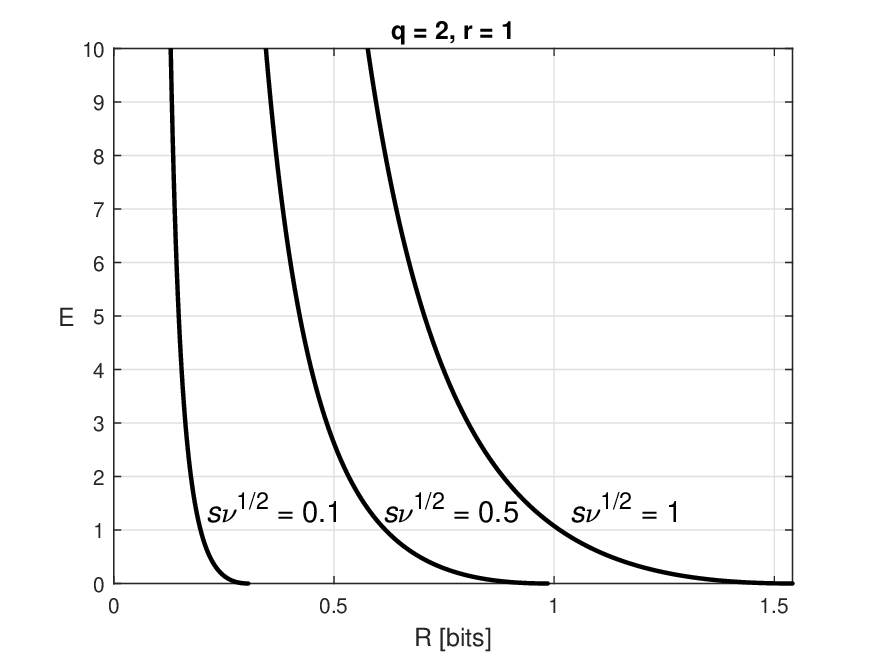}
\end{minipage}
\caption{Left: $E$ vs. $R$ [bits] 
for $q = 1$, $r = 1$ 
by (\ref{eqLaplace}) with $\nu s = 1$, $2$. 
Here $E(0) = \nu s/\ln 2$ and $R(0) = {\log\mathstrut}_{\!2}(1 + \nu s)$.\newline
Right: $E$ vs. $R$ [bits] 
for $q = 2$, $r = 1$  
by (\ref{eqGaussAbs})
with $s\sqrt{\nu} = 0.1$, $0.5$, $1$. Here $E\big({\log\mathstrut}_{\!2}\frac{2\sqrt{e}}{\pi}\!+\!\big) = +\infty$ and
$R(0) = {\log\mathstrut}_{\!2}(1 + s\sqrt{\nu\pi}) + {\log\mathstrut}_{\!2}\frac{2\sqrt{e}}{\pi}$.}
\label{graphs12}
\end{figure}

\begin{figure}[t]
\centering
\includegraphics[width=.50\textwidth]{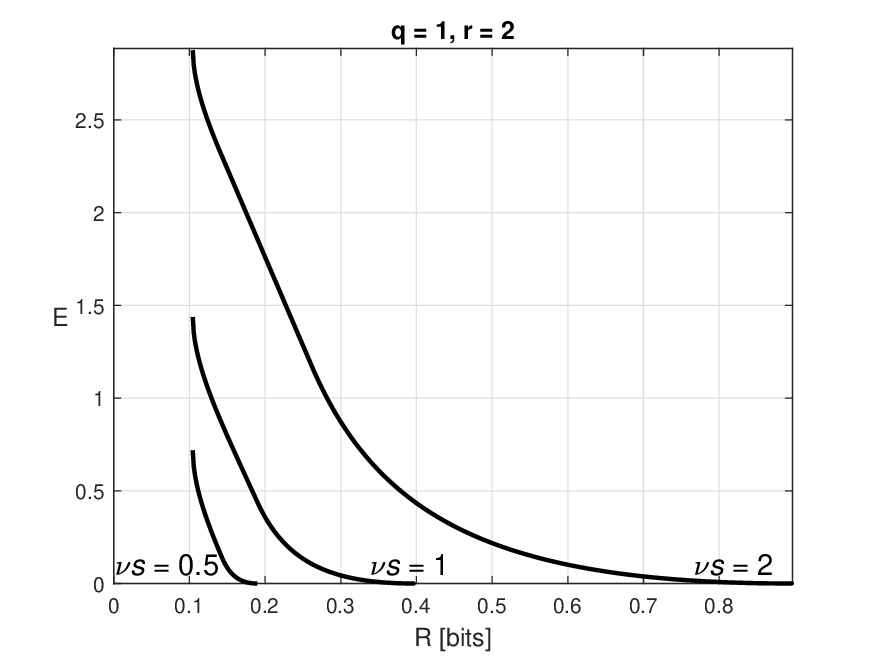}  
\caption{$E$ vs. $R$ [bits]  for $q = 1$, $r = 2$ by (\ref{eqLapSq}) 
with $\nu s = 0.5$, $1$, $2$. Here $E\big(\tfrac{1}{2}{\log\mathstrut}_{\!2}\tfrac{\pi}{e}\!+\big) = \frac{\nu s}{\ln 2}$ and 
$R(0) = \tfrac{1}{2}{\log\mathstrut}_{\!2}\big(1 + \tfrac{1}{2}\nu^{2}s^{2}\big) + \tfrac{1}{2}{\log\mathstrut}_{\!2}\tfrac{\pi}{e}$.}
\label{graph3}
\end{figure}

\section{Method of types}\label{MOT}


In this section we extend the method of types \cite{CoverThomas} to include the countable alphabets of uniformly spaced reals
(\ref{eqAlphabets})
by using 
generalized $p$th power constraints on the types, with a real $p \geq 1$. 
This is a generalization of \cite[Section~5]{TridenskiSomekhBaruh23},
where the $2$nd power is used. 
The method of types in the form of the results of this section is then used in the rest of the paper. 
It allows us to establish converse bounds in terms of types in Sections~\ref{ConvLemma} and~\ref{ErrExp} 
and is used in the Appendix B
dedicated to the proof of Lemma~\ref{LemQuant} of Section~\ref{PDFtypePDF}, connecting between PDFs and types.


\subsection*{\underline{Alphabet size}}



Consider all the types ${P\mathstrut}_{\!X} \in {\cal P}_{n}({\cal X}_{n})$
satisfying the generalized power constraint $\mathbb{E}_{{P\mathstrut}_{\!X}}\!\big[{|X|\mathstrut}^{p}\big] \leq c_{X}$ with a real $p\geq 1$.
Let ${\cal X}_{n}(c_{X}) \subseteq {\cal X}_{n}$ denote the subset of the 
alphabet
used by these types. In particular, every letter $x = i\Delta_{\alpha,\,n} \in {\cal X}_{n}(c_{X})$ must 
satisfy 
${P\mathstrut}_{\!X}(x){|x|\mathstrut}^{p} \leq c_{X}$ with ${P\mathstrut}_{\!X}(x)> 0$ for some ${P\mathstrut}_{\!X} \in {\cal P}_{n}({\cal X}_{n})$, while by the definition of a type we have ${P\mathstrut}_{\!X}(x)\geq 1/n$. This gives 
$|\, i\Delta_{\alpha,\,n}\,| \,\leq\, (n c_{X})^{1/p}$.
Then ${\cal X}_{n}(c_{X})$ is finite and by (\ref{eqDelta}) we obtain


\begin{lemma}[Alphabet size] \label{LemAlphabetSize}
{\em $\;\;|\, {\cal X}_{n}(c_{X}) \,| \;\; \leq \;\; 2c_{X}^{1/p} n^{1/p \, + \, \alpha} + 1 \;\; \leq \;\; (2c_{X}^{1/p} + 1) n^{1/p \, + \, \alpha}$.}
\end{lemma}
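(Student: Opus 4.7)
The plan is to simply execute the counting argument that is already sketched in the paragraph immediately preceding the lemma, making the inequality chain explicit and quantitative. The key observation is that membership of a lattice point $x = i\Delta_{\alpha,\,n}$ in $\mathcal{X}_{n}(c_{X})$ forces a hard upper bound on $|i|$, so the cardinality reduces to counting integers in an interval.

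First, I will fix $x = i\Delta_{\alpha,\,n} \in \mathcal{X}_{n}(c_{X})$ and argue that $|x|^{p} \leq nc_{X}$. Indeed, by the definition of $\mathcal{X}_{n}(c_{X})$, there exists some type $P_{\!X} \in \mathcal{P}_{n}(\mathcal{X}_{n})$ with $\mathbb{E}_{P_{\!X}}[|X|^{p}] \leq c_{X}$ and $P_{\!X}(x) > 0$. Since types have denominator $n$, we get $P_{\!X}(x) \geq 1/n$. Dropping all other (nonnegative) terms from the expectation yields $\tfrac{1}{n}|x|^{p} \leq P_{\!X}(x)|x|^{p} \leq c_{X}$, hence $|x| \leq (nc_{X})^{1/p}$.

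Next, using $x = i\Delta_{\alpha,\,n}$ together with $\Delta_{\alpha,\,n} = 1/n^{\alpha}$ from (\ref{eqDelta}), this bound becomes $|i| \leq (nc_{X})^{1/p} \cdot n^{\alpha} = c_{X}^{1/p}\, n^{1/p \,+\, \alpha}$. The set of integers $i$ satisfying this inequality has cardinality at most $2 c_{X}^{1/p}\, n^{1/p \,+\, \alpha} + 1$, which establishes the first inequality of the lemma. The second inequality is immediate since $n^{1/p \,+\, \alpha} \geq 1$ for $n \geq 1$, giving $1 \leq n^{1/p \,+\, \alpha}$ and hence $2 c_{X}^{1/p}\, n^{1/p \,+\, \alpha} + 1 \leq (2 c_{X}^{1/p} + 1)\, n^{1/p \,+\, \alpha}$.

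There is no real obstacle here: the argument is one short chain of inequalities based on the minimum positive mass $1/n$ of a type and the lattice spacing $1/n^{\alpha}$. The only point to present cleanly is the transition from the expectation bound to the pointwise bound on a single letter, which uses nothing beyond nonnegativity of $|X|^{p}$ and the lower bound $P_{\!X}(x) \geq 1/n$.
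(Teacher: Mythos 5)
Your proof is correct and follows essentially the same route as the paper: the paper's own derivation, in the paragraph immediately preceding the lemma, is precisely the chain $P_{\!X}(x) \geq 1/n \Rightarrow |x| \leq (nc_X)^{1/p} \Rightarrow |i| \leq c_X^{1/p} n^{1/p+\alpha}$, followed by counting integers in the interval and using $n^{1/p+\alpha} \geq 1$. You have simply made the steps explicit; there is no substantive difference.
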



\subsection*{\underline{Size of a type class}}


For 
${P\mathstrut}_{\!XY} \in {\cal P}_{n}({\cal X}_{n}\times {\cal Y}_{n})$ let us define
\begin{align}
{\cal S}({P\mathstrut}_{\!XY}) \;\; & \triangleq \;\;
\big\{
(x, y) \in {\cal X}_{n}\times {\cal Y}_{n} : \;\; {P\mathstrut}_{\!XY}(x, y) > 0
\big\}.
\nonumber \\
{\cal S}({P\mathstrut}_{\!X}) \;\; & \triangleq \;\;
\big\{
x \in {\cal X}_{n}: \;\; {P\mathstrut}_{\!X}(x) > 0
\big\},
\;\;\;\;\;\;
{\cal S}({P\mathstrut}_{\!Y}) \;\; \triangleq \;\;
\big\{
y \in {\cal Y}_{n}: \;\; {P\mathstrut}_{\!Y}(y) > 0
\big\},
\nonumber
\end{align}


\begin{lemma}[Support of a type] \label{LemSupportXY}
{\em 
Let ${P\mathstrut}_{\!XY} \in {\cal P}_{n}({\cal X}_{n}\times {\cal Y}_{n})$ be a joint type,
such that $\mathbb{E}_{{P\mathstrut}_{\!X}}\!\big[{|X|\mathstrut}^{p}\big] \leq c_{X}$
and $\mathbb{E}_{{P\mathstrut}_{\!Y}}\!\big[{|Y|\mathstrut}^{p}\big] \leq c_{Y}$ with a real $p \geq 1$. 
Then}
\begin{align}
|\, {\cal S}({P\mathstrut}_{\!XY}) \,| \;\; & \leq \;\;
c_{1} \cdot n^{(2 \, + \, p(\alpha\,+\,\beta))/(2\,+\,p)},
\label{eqSXY} \\
|\, {\cal S}({P\mathstrut}_{\!X}) \,| \;\; & \leq \;\; 
c_{2}\cdot n^{(1 \, + \, p\alpha)/(1\,+\,p)},
\label{eqSX} \\
|\, {\cal S}({P\mathstrut}_{\!Y}) \,| \;\; & \leq \;\; 
c_{3}\cdot n^{(1 \, + \, p\beta)/(1\,+\,p)},
\label{eqSY}
\end{align}
{\em where $c_{1} \triangleq \Big[\tfrac{3\sqrt{\pi}}{2}\Big(2^{\max\!\big\{\!0, \, \frac{1}{2} - \frac{1}{p}\!\big\}}\cdot(c_{X} + c_{Y})^{1/p} \, + \, \tfrac{1}{\sqrt{6}}\Big) \Big]^{\frac{2p}{2\,+\,p}}$,
$c_{2} \triangleq (4c_{X}^{1/p} + 1)^{\frac{p}{1\,+\,p}}$, $c_{3} \triangleq (4c_{Y}^{1/p} + 1)^{\frac{p}{1\,+\,p}}$.}
\end{lemma}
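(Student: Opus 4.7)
\medskip
\noindent\emph{Proof plan:}
The unifying principle is that every support point of a type with denominator $n$ carries mass at least $1/n$, which converts each power constraint into a deterministic upper bound on a sum of powers over the support. This bound then intersects with the geometric constraint that distinct support points must lie on a prescribed grid, and in each of the three cases the intersection yields the desired estimate.

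First I would establish (\ref{eqSX}) and (\ref{eqSY}). Let $N \triangleq |\,{\cal S}({P\mathstrut}_{\!X})\,|$ and observe that, since ${P\mathstrut}_{\!X}(x) \geq 1/n$ on the support,
\[
\tfrac{1}{n}\!\sum_{x \,\in\, {\cal S}({P\mathstrut}_{\!X})}\!\!{|x|\mathstrut}^{p} \;\leq\; \sum_{x} {P\mathstrut}_{\!X}(x){|x|\mathstrut}^{p} \;\leq\; c_{X}.
\]
The support is an $N$-element subset of the lattice $\Delta_{\alpha,\,n}\mathbb{Z}$, so the left-hand sum is minimized over all such subsets by taking the $N$ points nearest to the origin. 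Evaluating this extremal configuration and using the elementary inequality $\sum_{k=1}^{m} k^{p} \geq m^{p+1}/(p+1)$ gives
\[
\sum_{x \,\in\, {\cal S}({P\mathstrut}_{\!X})}{|x|\mathstrut}^{p} \;\geq\; C_{p}\,(N-1)^{p+1}\,\Delta_{\alpha,\,n}^{p}
\]
for an explicit constant $C_{p}$. Combining these two displayed bounds and substituting $\Delta_{\alpha,\,n} = n^{-\alpha}$ yields $N \leq c_{2}\,n^{(1+p\alpha)/(1+p)}$ after collecting constants; the bound (\ref{eqSY}) follows by the same argument with $\beta$, $c_{Y}$, $c_{3}$ in place of $\alpha$, $c_{X}$, $c_{2}$.

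For the joint bound (\ref{eqSXY}) I would repeat the idea in two dimensions. Using ${P\mathstrut}_{\!XY}(x,y)\geq 1/n$ together with the two marginal power constraints,
\[
\tfrac{1}{n}\!\!\sum_{(x,\,y)\,\in\,{\cal S}({P\mathstrut}_{\!XY})}\!\!\!\big(\,{|x|\mathstrut}^{p} + {|y|\mathstrut}^{p}\big) \;\leq\; \mathbb{E}_{{P\mathstrut}_{\!X}}\!\big[{|X|\mathstrut}^{p}\big] + \mathbb{E}_{{P\mathstrut}_{\!Y}}\!\big[{|Y|\mathstrut}^{p}\big] \;\leq\; c_{X} + c_{Y}.
\]
The $\ell^{p}$–$\ell^{2}$ norm equivalence on $\mathbb{R}^{2}$, namely ${|x|\mathstrut}^{p}+{|y|\mathstrut}^{p} \geq 2^{\min(0,\,1-p/2)}(x^{2}+y^{2})^{p/2}$, converts the left-hand side into an upper bound on $\sum_{{\cal S}} R^{p}$ with $R \triangleq \sqrt{x^{2}+y^{2}}$. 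The $N$ support points are distinct points of the rectangular lattice $\Delta_{\alpha,\,n}\mathbb{Z}\times\Delta_{\beta,\,n}\mathbb{Z}$ of cell area $n^{-(\alpha+\beta)}$, so a Gauss-circle-type count gives that the $k$-th closest such lattice point to the origin has $R$ at least on the order of $\sqrt{k\,\Delta_{\alpha,\,n}\Delta_{\beta,\,n}/\pi}$. Ordering the support in increasing $R$ and summing via $\sum_{k=1}^{N}k^{p/2} \geq 2N^{(p+2)/2}/(p+2)$ produces a lower bound of the form $C'_{p}\,N^{(p+2)/2}(\Delta_{\alpha,\,n}\Delta_{\beta,\,n})^{p/2}$ on $\sum_{{\cal S}}R^{p}$; combining this with the previous upper bound and solving for $N$ yields both the exponent $(2+p(\alpha+\beta))/(2+p)$ and the dependence on $(c_{X}+c_{Y})^{2/(2+p)}$ claimed in (\ref{eqSXY}).

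The main obstacle is the planar lattice count in the joint case: a plain area-based estimate recovers the correct exponent on $n$ but loses the precise constant. Producing the specific $c_{1}$, with the factors $3\sqrt{\pi}/2$ and $1/\sqrt{6}$, requires a careful boundary correction in the Gauss-circle count (in the spirit of $|N(R)-\pi R^{2}| = O(R)$), and this is where most of the bookkeeping sits. The factor $2^{\max\{0,\,1/2-1/p\}}$ multiplying $(c_{X}+c_{Y})^{1/p}$ inside $c_{1}$ traces precisely to the $\ell^{p}$–$\ell^{2}$ norm-equivalence constant used above, raised to the power $1/p$ in the final inversion that isolates $N$.
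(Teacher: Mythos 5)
Your proposal is correct in spirit and recovers the right exponents, but it takes a genuinely different route from the paper's. The paper proves this lemma in one line as a corollary of Lemma~\ref{LemSupportXk} in Appendix~A, which is a general $k$-dimensional statement proved by a probabilistic rearrangement argument: one places a discrete-uniform random vector $\mathbf{U}$ on the support, shows via Jensen and the $\ell^p$--$\ell^2$ norm comparison that $\mathbb{E}[\|\mathbf{U}\|_2]$ is small, then \emph{convolves} $\mathbf{U}$ with an independent continuous-uniform $\mathbf{D}$ over a single lattice cell so that $\widetilde{\mathbf{U}}=\mathbf{U}+\mathbf{D}$ is uniform on the union $A$ of cells, and finally lower-bounds $\mathbb{E}[\|\widetilde{\mathbf{U}}\|_2]$ by replacing $A$ with a centered ball $B$ of equal volume. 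This continuous rearrangement handles boundary effects automatically and yields the stated constants exactly: specializing $\sigma(k,p,c)$ to $k=1$ gives $c_2,c_3$ and to $k=2$ gives $c_1$, including the additive $1/\sqrt{6}$ and $1/4$ terms that come from $\mathbb{E}[\|\mathbf{D}\|_2]$.

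You instead work directly with the discrete lattice: fix $N=|{\cal S}|$, note the sum is minimized at the $N$ nearest lattice points, and estimate via $\sum_{j\le m} j^p \ge m^{p+1}/(p+1)$ in 1D and a Gauss-circle count in 2D. This is more elementary and transparent and it nails the exponents in $n$ in both cases, which is what matters downstream. The two arguments are close relatives --- both are rearrangements to a ball of the same measure --- but yours does it on the lattice itself while the paper's does it after a smoothing step. The one real gap, which you flag for $c_1$ but which also applies to $c_2,c_3$, is that your argument does not reproduce the specific constants claimed. For instance, for $p=2$, $c_X=1$, your 1D estimate yields a constant of the form $[2^p(p+1)c_X]^{1/(p+1)}+1\approx 3.29$, whereas the lemma asserts $c_2=(4c_X^{1/p}+1)^{p/(1+p)}=5^{2/3}\approx 2.92$; these are not the same expression. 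Since Lemma~\ref{LemSupportXY} is only used to establish polynomial-in-$n$ (hence sub-exponential) behavior and $o(1)$ corrections, the precise numerical constants are immaterial to the paper's results, so I would call this a cosmetic rather than substantive mismatch --- but you should either weaken the statement to ``there exist constants depending only on $p,c_X,c_Y$'' or carry through the extremal-configuration bookkeeping and accept that your constants differ from the paper's. Also note that to make the 2D Gauss-circle step rigorous you need an \emph{upper} bound on the number of lattice points of $\Delta_{\alpha,\,n}\mathbb{Z}\times\Delta_{\beta,\,n}\mathbb{Z}$ in a disk of radius $R$ (to lower-bound $R_k$), which for an anisotropic lattice has boundary error of order $R(\Delta_{\alpha,\,n}^{-1}+\Delta_{\beta,\,n}^{-1})$; the paper's convolution trick sidesteps exactly this.
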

\begin{proof}
Follows as a corollary from Lemma~\ref{LemSupportXk} of the Appendix A.
\end{proof}

\bigskip

\begin{lemma}[Size of a type class] \label{LemTypeSize}
{\em Let ${P\mathstrut}_{\!XY} \in {\cal P}_{n}({\cal X}_{n}\times {\cal Y}_{n})$ be a joint type,
such that $\mathbb{E}_{{P\mathstrut}_{\!X}}\!\big[{|X|\mathstrut}^{p}\big] \leq c_{X}$
and $\mathbb{E}_{{P\mathstrut}_{\!Y}}\!\big[{|Y|\mathstrut}^{p}\big] \leq c_{Y}$ with a real $p \geq 1$. 
Then}
\begin{align}
H({P\mathstrut}_{\!XY})
\; - \;
c_{1}\,
\frac{{\log\mathstrut}_{\!b}\,(n + 1)}{n^{\gamma p/(2\,+\,p)}}
\;\; & \leq \;\;
\frac{1}{n}\,{\log\mathstrut}_{\!b}\,|\, T({P\mathstrut}_{\!XY}) \,|
\;\; \leq \;\;
H({P\mathstrut}_{\!XY}),
\label{eqJointExp} \\
H({P\mathstrut}_{\!X})
\; - \;
c_{2}\,
\frac{{\log\mathstrut}_{\!b}\,(n + 1)}{n^{(1\,-\,\alpha)p/(1\,+\,p)}}
\;\; & \leq \;\;
\frac{1}{n}\,{\log\mathstrut}_{\!b}\,|\, T({P\mathstrut}_{\!X}) \,|
\;\;
\;\;
\leq \;\;
H({P\mathstrut}_{\!X}),
\label{eqMargExpX} \\
H({P\mathstrut}_{\!Y})
\; - \;
c_{3}\,
\frac{{\log\mathstrut}_{\!b}\,(n + 1)}{n^{(1\,-\,\beta)p/(1\,+\,p)}}
\;\; & \leq \;\;
\frac{1}{n}\,{\log\mathstrut}_{\!b}\,|\, T({P\mathstrut}_{\!Y}) \,|
\;\;
\;\,\,
\leq \;\;
H({P\mathstrut}_{\!Y}),
\label{eqMargExpY}
\end{align}
{\em where $c_{1}$, $c_{2}$, and $c_{3}$ are defined as in Lemma~\ref{LemSupportXY}.}
\end{lemma}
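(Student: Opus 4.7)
The plan is to reduce this lemma to the classical Csisz\'ar--K\"orner size-of-a-type-class estimate applied to each of the three types $P_{\!XY}$, $P_{\!X}$, $P_{\!Y}$, and then plug in the support bounds of Lemma~\ref{LemSupportXY} to get the explicit rates of convergence in $n$.

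The upper bounds in (\ref{eqJointExp})--(\ref{eqMargExpY}) are immediate from the standard observation that, for any type $P$ of denominator $n$,
\begin{equation*}
1 \;\geq\; P^{n}\!\big(T(P)\big) \;=\; |T(P)|\cdot b^{-nH(P)},
\end{equation*}
which carries over verbatim to our countable-alphabet setting because each type has finite support. For the lower bounds I would invoke the classical inequality $P^{n}(T(P)) \geq P^{n}(T(Q))$ for every other type $Q$ with support contained in ${\cal S}(P)$ (proved by the usual factorial-ratio argument, which uses nothing beyond the support being finite). Since the number of types of denominator $n$ with support in ${\cal S}(P)$ is at most $(n+1)^{|{\cal S}(P)|}$, summing gives $P^{n}(T(P)) \geq (n+1)^{-|{\cal S}(P)|}$, hence
\begin{equation*}
|T(P)| \;\geq\; (n+1)^{-|{\cal S}(P)|}\, b^{nH(P)},
\qquad
\tfrac{1}{n}{\log\mathstrut}_{\!b}|T(P)| \;\geq\; H(P) \, - \, \tfrac{|{\cal S}(P)|}{n}\,{\log\mathstrut}_{\!b}(n+1).
\end{equation*}

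It remains to substitute the three support bounds of Lemma~\ref{LemSupportXY} and verify the exponents. For the joint type, using $\alpha+\beta+\gamma=1$,
\begin{equation*}
\tfrac{|{\cal S}({P\mathstrut}_{\!XY})|}{n} \;\leq\; c_{1}\, n^{(2+p(\alpha+\beta))/(2+p)\,-\,1} \;=\; c_{1}\, n^{-p(1-\alpha-\beta)/(2+p)} \;=\; c_{1}\, n^{-p\gamma/(2+p)},
\end{equation*}
which gives exactly the correction term in (\ref{eqJointExp}). The marginal cases are analogous: $(1+p\alpha)/(1+p)-1 = -p(1-\alpha)/(1+p)$ and $(1+p\beta)/(1+p)-1 = -p(1-\beta)/(1+p)$ produce the exponents in (\ref{eqMargExpX}) and (\ref{eqMargExpY}), respectively.

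There is essentially no obstacle: the only nontrivial ingredient, Lemma~\ref{LemSupportXY}, has already been established, and the inequality $P^{n}(T(P)) \geq P^{n}(T(Q))$ is a textbook fact that does not require finiteness of the ambient alphabet. The one small thing to double-check is the algebraic identity $(2+p(\alpha+\beta))/(2+p)-1 = -p\gamma/(2+p)$, which is a direct consequence of $\alpha+\beta+\gamma=1$, and the analogous one-variable identities for the marginals.
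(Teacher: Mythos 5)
Your proposal is correct and follows essentially the same route as the paper: both use the standard type-size bounds in the form $(n+1)^{-|{\cal S}(P)|}\,b^{nH(P)} \leq |T(P)| \leq b^{nH(P)}$ (the key observation being that the exponent of $(n+1)$ depends only on the support size, not on any ambient alphabet size), then substitute the support estimates of Lemma~\ref{LemSupportXY}. You supply a short derivation of the classical bound where the paper simply cites \cite{CsiszarKorner}, \cite{CoverThomas}, but the argument and the algebraic identities are identical.
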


\bigskip

\begin{proof}
As in the proof of \cite[Lemma~5]{TridenskiSomekhBaruh23}, we
observe that the standard type-size bounds (see, e.g., \cite[Lemma~2.3]{CsiszarKorner}, \cite[Eq.~11.16]{CoverThomas}) can be rewritten as
\begin{equation} \label{eqMOTSup}
\frac{1}{(n + 1)^{|\, {\cal S}({P\mathstrut}_{\!XY}) \,|}}\,b^{\,n H({P\mathstrut}_{\!XY})}
\;\; \leq \;\;
|\, T({P\mathstrut}_{\!XY}) \,|
\;\; \leq \;\;
b^{\,n H({P\mathstrut}_{\!XY})}.
\end{equation}
Here
$|\, {\cal S}({P\mathstrut}_{\!XY}) \,|$ 
can be replaced
with its upper bound (\ref{eqSXY}) of Lemma~\ref{LemSupportXY}.
This gives (\ref{eqJointExp}).
The remaining bounds of (\ref{eqMargExpX}) and (\ref{eqMargExpY})
are obtained similarly using respectively (\ref{eqSX}) and (\ref{eqSY}) of Lemma~\ref{LemSupportXY}.
\end{proof}


Since it holds for any ${\bf y} \in T({P\mathstrut}_{\!Y})$ that
$|\, T({P\mathstrut}_{\!X|\,Y}\,|\, {\bf y})\,| \, = \, |\, T({P\mathstrut}_{\!XY}) \,| \, / \, |\, T({P\mathstrut}_{\!Y}) \,|$,
and similarly for ${\bf x} \in T({P\mathstrut}_{\!X})$,
as a corollary of the previous lemma we also obtain

\bigskip

\begin{lemma}[Size of a conditional type class] \label{LemCondTypeSize}
{\em Let ${P\mathstrut}_{\!XY} \in {\cal P}_{n}({\cal X}_{n}\times {\cal Y}_{n})$ be a joint type,
such that $\mathbb{E}_{{P\mathstrut}_{\!X}}\!\big[{|X|\mathstrut}^{p}\big] \leq c_{X}$
and $\mathbb{E}_{{P\mathstrut}_{\!Y}}\!\big[{|Y|\mathstrut}^{p}\big] \leq c_{Y}$ with a real $p \geq 1$. 
Then}
\begin{align}
H(X\,|\,Y)
\; - \;
c_{1}\,
\frac{{\log\mathstrut}_{\!b}\,(n + 1)}{n^{\gamma p/(2\,+\,p)}}
\;\; & \leq \;\;
\frac{1}{n}\,{\log\mathstrut}_{\!b}\,|\, T({P\mathstrut}_{\!X|\,Y}\,|\, {\bf y}) \,|
\;\; \leq \;\;
H(X\,|\,Y)
\; + \;
c_{3}\,
\frac{{\log\mathstrut}_{\!b}\,(n + 1)}{n^{(1\,-\,\beta)p/(1\,+\,p)}},
\label{eqXgivenY} \\
H(Y\,|\,X)
\; - \;
c_{1}\,
\frac{{\log\mathstrut}_{\!b}\,(n + 1)}{n^{\gamma p/(2\,+\,p)}}
\;\; & \leq \;\;
\frac{1}{n}\,{\log\mathstrut}_{\!b}\,|\, T({P\mathstrut}_{\!Y|X}\,|\, {\bf x}) \,|
\,
\;\; \leq \;\;
H(Y\,|\,X)
\; + \;
c_{2}\,
\frac{{\log\mathstrut}_{\!b}\,(n + 1)}{n^{(1\,-\,\alpha)p/(1\,+\,p)}},
\label{eqYgivenX}
\end{align}
{\em for ${\bf y} \in T({P\mathstrut}_{\!Y})$ and ${\bf x} \in T({P\mathstrut}_{\!X})$ respectively, where $c_{1}$, $c_{2}$, and $c_{3}$ are defined as in Lemma~\ref{LemSupportXY}.}
\end{lemma}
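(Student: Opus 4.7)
The plan is to derive Lemma~\ref{LemCondTypeSize} directly as a corollary of Lemma~\ref{LemTypeSize} by using the standard combinatorial identity relating joint and conditional type classes. Specifically, for any ${\bf y} \in T({P\mathstrut}_{\!Y})$, all $\bf y$'s in the same type class $T({P\mathstrut}_{\!Y})$ have the same number of conditional $\bf x$-completions, so
\begin{equation*}
|\, T({P\mathstrut}_{\!X|\,Y}\,|\, {\bf y})\,| \;\; = \;\; \frac{|\, T({P\mathstrut}_{\!XY}) \,|}{|\, T({P\mathstrut}_{\!Y}) \,|},
\end{equation*}
and symmetrically $|\, T({P\mathstrut}_{\!Y|\,X}\,|\, {\bf x})\,| = |\, T({P\mathstrut}_{\!XY}) \,|\,/\,|\, T({P\mathstrut}_{\!X}) \,|$ for ${\bf x} \in T({P\mathstrut}_{\!X})$. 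Taking $\tfrac{1}{n}\,{\log\mathstrut}_{\!b}$ of both sides turns these ratios into differences, reducing the lemma to a straightforward combination of (\ref{eqJointExp}), (\ref{eqMargExpX}), and (\ref{eqMargExpY}).

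For the upper bound in (\ref{eqXgivenY}), I would combine the upper bound on $\tfrac{1}{n}\,{\log\mathstrut}_{\!b}\,|\, T({P\mathstrut}_{\!XY}) \,|$, which is $H({P\mathstrut}_{\!XY})$ by (\ref{eqJointExp}), with the lower bound on $\tfrac{1}{n}\,{\log\mathstrut}_{\!b}\,|\, T({P\mathstrut}_{\!Y}) \,|$, namely $H({P\mathstrut}_{\!Y}) - c_{3}\,{\log\mathstrut}_{\!b}(n+1)/n^{(1\,-\,\beta)p/(1\,+\,p)}$ by (\ref{eqMargExpY}). Subtracting gives exactly $H(X\,|\,Y) + c_{3}\,{\log\mathstrut}_{\!b}(n+1)/n^{(1\,-\,\beta)p/(1\,+\,p)}$, since $H({P\mathstrut}_{\!XY}) - H({P\mathstrut}_{\!Y}) = H(X\,|\,Y)$ by the chain rule. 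For the lower bound, I would instead pair the lower bound on the joint type class size (losing $c_{1}\,{\log\mathstrut}_{\!b}(n+1)/n^{\gamma p/(2\,+\,p)}$) with the upper bound $H({P\mathstrut}_{\!Y})$ on the marginal, yielding $H(X\,|\,Y) - c_{1}\,{\log\mathstrut}_{\!b}(n+1)/n^{\gamma p/(2\,+\,p)}$. This matches (\ref{eqXgivenY}) exactly.

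The bounds in (\ref{eqYgivenX}) follow by the same argument with the roles of $X$ and $Y$ interchanged: the upper bound uses (\ref{eqJointExp}) and (\ref{eqMargExpX}) to produce the correction term $c_{2}\,{\log\mathstrut}_{\!b}(n+1)/n^{(1\,-\,\alpha)p/(1\,+\,p)}$, while the lower bound again loses the joint-type correction $c_{1}\,{\log\mathstrut}_{\!b}(n+1)/n^{\gamma p/(2\,+\,p)}$.

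There is no real obstacle here, as this is a purely mechanical consequence of Lemma~\ref{LemTypeSize} and the chain rule for entropy; the only thing to be careful about is pairing the right-direction bound on the numerator with the opposite-direction bound on the denominator so that both correction terms accumulate with the correct sign, and verifying that the constants $c_{1}$, $c_{2}$, $c_{3}$ used match those of Lemma~\ref{LemSupportXY} (which they do, since we are applying the preceding lemma verbatim).
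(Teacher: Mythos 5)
Your proof is correct and follows essentially the same approach as the paper, which also invokes the identity $|\, T({P\mathstrut}_{\!X|\,Y}\,|\, {\bf y})\,| = |\, T({P\mathstrut}_{\!XY}) \,| \,/\, |\, T({P\mathstrut}_{\!Y}) \,|$ and deduces the lemma as a direct corollary of Lemma~\ref{LemTypeSize}. The detailed pairing of numerator/denominator bounds you spell out is exactly what the paper leaves implicit.
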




\subsection*{\underline{Number of types}}


Let ${\cal P}_{n}\big({\cal X}_{n}, \,c_{X}\big)$
be the set of all the types ${P\mathstrut}_{\!X} \in {\cal P}_{n}({\cal X}_{n})$
satisfying the generalized power constraint $\mathbb{E}_{{P\mathstrut}_{\!X}}\!\big[{|X|\mathstrut}^{p}\big] \leq c_{X}$
with a real parameter $p\geq 1$. 
Then its cardinality can be upper-bounded as follows:
\begin{equation} \label{eqNumberofTypes}
\big|\,  {\cal P}_{n}\big({\cal X}_{n}, \,c_{X}\big) \,\big|
\;\; \overset{a}{\leq} \;\;
\big|\,  {\cal P}_{n}\big({\cal X}_{n}(c_{X})\big) \,\big|
\;\; \overset{b}{\leq} \;\;
(n + 1)^{|\, {\cal X}_{n}(c_{X})\,|}
\;\; \overset{c}{\leq} \;\;
(n + 1)^{(2c_{X}^{1/p} \, + \, 1) n^{1/p \, + \, \alpha}},
\end{equation}
where ($a$) follows by the definition of ${\cal X}_{n}(c_{X})$
preceding Lemma~\ref{LemAlphabetSize}, ($b$) follows by \cite[Eq.~11.6]{CoverThomas},
and ($c$) follows by Lemma~\ref{LemAlphabetSize}.
This bound is sub-exponential in $n$ if $\alpha < 1 - 1/p$.
This can be further improved and made sub-exponential in $n$ for all $\alpha \in (0, 1)$
using (\ref{eqSX}) of Lemma~\ref{LemSupportXY}, as follows.


\begin{lemma}[Number of types] \label{LemNumofTypes}
\begin{equation} \label{eqImprovement}
\big|\,  {\cal P}_{n}\big({\cal X}_{n}, \,c_{X}\big) \,\big| \;\; \leq \;\;
\big((n + 1)c\big)^{\tilde{c}\, n^{(1\,+\, p\alpha)/(1\,+\,p)}},
\end{equation}
{\em where $c \,\triangleq\, (2c_{X}^{1/p} + 1)^{1/(1 \, + \, 1/p \, + \, \alpha)}$
and $\tilde{c} \,\triangleq\, (1 + 1/p + \alpha)(4 c_{X}^{1/p} + 1)^{p/(1\,+\,p)}$.}
\end{lemma}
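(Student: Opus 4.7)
The plan is to exploit the polynomial-in-$n$ bound on the support size $|{\cal S}({P\mathstrut}_{\!X})|$ provided by Lemma~\ref{LemSupportXY} equation (\ref{eqSX}) to count types much more sharply than the naive bound (\ref{eqNumberofTypes}) allows. The key point is that although the effective alphabet ${\cal X}_{n}(c_{X})$ has size on the order of $n^{1/p+\alpha}$ (by Lemma~\ref{LemAlphabetSize}), any individual type ${P\mathstrut}_{\!X} \in {\cal P}_{n}({\cal X}_{n}, c_{X})$ is supported on at most $k^{*} \triangleq c_{2}\, n^{(1+p\alpha)/(1+p)}$ letters, where $c_{2} = (4c_{X}^{1/p}+1)^{p/(1+p)}$. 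Thus, rather than raising the alphabet size to the $n$-th power as in (\ref{eqNumberofTypes}), I would encode each type by listing only its non-zero (letter, count) pairs, of which there are few.

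Concretely, I fix the natural order on ${\cal X}_{n} \subset \mathbb{R}$ and, to each ${P\mathstrut}_{\!X} \in {\cal P}_{n}({\cal X}_{n}, c_{X})$ with support ${\cal S}({P\mathstrut}_{\!X}) = \{x_{1} < x_{2} < \cdots < x_{k}\}$ and integer counts $n\cdot {P\mathstrut}_{\!X}(x_{i}) \in \{1, 2, \ldots, n\}$, I associate the ordered $k^{*}$-tuple
\[
\bigl((x_{1}, n\cdot {P\mathstrut}_{\!X}(x_{1})), \ldots, (x_{k}, n\cdot {P\mathstrut}_{\!X}(x_{k})), *, \ldots, *\bigr),
\]
padded with a fixed placeholder symbol $*$ to total length $k^{*}$. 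This map is injective, and each slot offers at most $|{\cal X}_{n}(c_{X})|\cdot n + 1$ options (a letter--count pair from ${\cal X}_{n}(c_{X})\times\{1,\ldots,n\}$ or the placeholder), so
\[
\bigl|{\cal P}_{n}({\cal X}_{n}, c_{X})\bigr| \;\leq\; \bigl(|{\cal X}_{n}(c_{X})|\cdot n + 1\bigr)^{k^{*}} \;\leq\; \bigl(|{\cal X}_{n}(c_{X})|\cdot (n+1)\bigr)^{k^{*}}.
\]

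It then remains to substitute Lemma~\ref{LemAlphabetSize}, $|{\cal X}_{n}(c_{X})| \leq (2c_{X}^{1/p}+1)\, n^{1/p+\alpha}$, and to absorb $n^{1/p+\alpha}$ into $(n+1)^{1/p+\alpha}$, so that
\[
|{\cal X}_{n}(c_{X})|\cdot (n+1) \;\leq\; (2c_{X}^{1/p}+1)(n+1)^{1+1/p+\alpha} \;=\; \bigl[(n+1)\, c\bigr]^{1+1/p+\alpha},
\]
with $c = (2c_{X}^{1/p}+1)^{1/(1+1/p+\alpha)}$. Raising to the $k^{*}$-th power turns the outer exponent into $(1+1/p+\alpha)\,k^{*} = \tilde{c}\, n^{(1+p\alpha)/(1+p)}$, which reproduces (\ref{eqImprovement}) exactly.

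I do not anticipate a serious obstacle: the heavy lifting has already been done by Lemmas~\ref{LemAlphabetSize} and~\ref{LemSupportXY}, and what remains is a clean combinatorial encoding. The two details to check carefully are the injectivity of the tuple encoding (which relies on a canonical ordering of the support so that the placeholders occupy the tail unambiguously) and the arithmetic of re-grouping the constants into the precise form of $c$ and $\tilde{c}$ stated in the lemma.
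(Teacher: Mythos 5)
Your proposal is correct and uses the same key insight as the paper's proof: leverage the support-size bound of Lemma~\ref{LemSupportXY} (rather than the much larger alphabet size) to control the exponent, then substitute Lemma~\ref{LemAlphabetSize} and regroup constants. The paper's own proof counts $\big|{\cal P}_n({\cal X}_n, c_X)\big| \leq \binom{k}{\ell}(n+1)^{\ell} \leq k^{\ell}(n+1)^{\ell}$ with $k = |{\cal X}_n(c_X)|$ and $\ell$ the maximal support size; your ordered-tuple-of-pairs encoding produces $(kn+1)^{\ell} \leq (k(n+1))^{\ell}$, which is a cosmetic variant of the same counting argument and leads to identical final algebra.
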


\bigskip

\begin{proof}
Denoting $k \, \triangleq \, |\, {\cal X}_{n}(c_{X})\,|$ and
$\ell \, \triangleq \, \max_{\,{P\mathstrut}_{\!X} \, \in \;{\cal P}_{n}({\cal X}_{n}, \; c_{X})} \,|\, {\cal S}({P\mathstrut}_{\!X}) \,|$,
we can upper-bound as follows
\begin{equation} \nonumber 
\big|\,  {\cal P}_{n}\big({\cal X}_{n}, \,c_{X}\big) \,\big| \;\; \leq \;\;
\tbinom{k}{\ell} (n + 1)^{\ell} \;\; \leq \;\;
k^{\ell}(n + 1)^{\ell}.
\end{equation}
Substituting 
for
$k$ and $\ell$ their upper bounds of Lemma~\ref{LemAlphabetSize} (with $n + 1$) and Lemma~\ref{LemSupportXY},
we obtain (\ref{eqImprovement}).
\end{proof}


\section{Converse lemma}\label{ConvLemma}


In this section 
we establish a converse Lemma~\ref{LemConvLem}, which is then used for the error exponent in Section~\ref{ErrExp}.

In order to determine exponents in channel probabilities, 
it is convenient to 
take hold of
the exponent in the channel probability {\em density}.
Let ${\bf x} = (x_{1}, x_{2}, .\,.\,. \,, x_{n})\in \mathbb{R}{\mathstrut}^{n}$
be a vector of $n$ channel inputs
and let ${\bf x}^{*} = (x_{1}^{*}, x_{2}^{*}, .\,.\,. \,, x_{n}^{*}) \in {\cal X}_{n}^{n}$
be its quantized version, with components
\begin{equation} \label{eqQuantizer}
x_{k}^{*} \; = \; Q_{\alpha}(x_{k}) \; \triangleq \; \Delta_{\alpha,\,n}\cdot \lfloor x_{k}/\Delta_{\alpha,\,n} + 1/2\rfloor,
\;\;\;\;\;\; k = 1, .\,.\,.\,, n.
\end{equation}
Similarly,
let ${\bf y} = (y_{1}, y_{2}, .\,.\,. \,, y_{n}) \in \mathbb{R}{\mathstrut}^{n}$
be a vector of $n$ channel outputs
and let ${\bf y}^{*} = (y_{1}^{*}, y_{2}^{*}, .\,.\,. \,, y_{n}^{*}) \in {\cal Y}_{n}^{n}$
be its quantized version, with $y_{k}^{*} = Q_{\beta}(y_{k})$ for all $k = 1, .\,.\,.\,, n$.
Then we have the following

\bigskip

\begin{lemma}[PDF exponent] \label{LemPDFExponent}
{\em Let ${\bf x}\in \mathbb{R}{\mathstrut}^{n}$ and ${\bf y}\in \mathbb{R}{\mathstrut}^{n}$
be two channel input and output vectors,
with their respective quantized versions $({\bf x}^{*}, {\bf y}^{*}) \in T({P\mathstrut}_{\!XY})$,
such that $\mathbb{E}_{{P\mathstrut}_{\!XY}}\!\big[{|Y-X|\mathstrut}^{q}\big]\leq c_{XY}$ with $q$ of (\ref{eqChannel}). 
Then
}
\begin{align}
&
\big|
-\tfrac{1}{n}\,{\log\mathstrut}_{\!b}\, w({\bf y} \, | \, {\bf x})
\, + \, 
\tfrac{1}{n}\,{\log\mathstrut}_{\!b}\, w({\bf y}^{*} \, | \, {\bf x}^{*})
\,\big|
\;\; \leq \;\; 
\tfrac{q\nu}{2\ln b}(\Delta_{1,\,n} \, + \, \Delta_{2,\,n})
\Big[c_{XY}^{1/q} \, + \, 
\tfrac{1}{2}(\Delta_{1,\,n} \, + \, \Delta_{2,\,n}) 
\Big]^{q\,-\,1}.
\nonumber
\end{align}
\end{lemma}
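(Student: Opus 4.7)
The plan is to exploit the explicit exponential form of the PDF in (\ref{eqChannel}) so that the normalizing constant cancels and the difference reduces to a sum of gaps between $q$th-power distances. Since $w(\mathbf{y}\,|\,\mathbf{x}) = \prod_{k} w(y_{k}\,|\,x_{k})$, one obtains
$$-\tfrac{1}{n}{\log\mathstrut}_{\!b} w(\mathbf{y}\,|\,\mathbf{x}) + \tfrac{1}{n}{\log\mathstrut}_{\!b} w(\mathbf{y}^{*}\,|\,\mathbf{x}^{*}) \;=\; \frac{\nu}{n\ln b}\sum_{k=1}^{n}\Big(|y_{k}-x_{k}|^{q} - |y_{k}^{*}-x_{k}^{*}|^{q}\Big),$$
so it suffices to bound this averaged gap.

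Next I would apply the elementary mean-value inequality $\big||a|^{q}-|b|^{q}\big| \leq q\,|a-b|\,\max(|a|,|b|)^{q-1}$, valid for $q\geq 1$, to each summand. The quantizer (\ref{eqQuantizer}) and its analogue on $\mathcal{Y}_{n}$ give $|x_{k}-x_{k}^{*}| \leq \Delta_{\alpha,n}/2$ and $|y_{k}-y_{k}^{*}| \leq \Delta_{\beta,n}/2$, hence $\big|(y_{k}-x_{k})-(y_{k}^{*}-x_{k}^{*})\big| \leq (\Delta_{\alpha,n}+\Delta_{\beta,n})/2$, which matches the factor $(\Delta_{1,n}+\Delta_{2,n})/2$ appearing outside the bracket.

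To aggregate across $k$, I would feed the $\max(|y_{k}-x_{k}|,|y_{k}^{*}-x_{k}^{*}|)^{q-1}$ terms into the power-mean (Jensen) inequality to convert the $(q-1)$-average into a $q$-average, and then invoke Minkowski's inequality with the pointwise estimate $\max(|y_{k}-x_{k}|,|y_{k}^{*}-x_{k}^{*}|) \leq |y_{k}^{*}-x_{k}^{*}| + (\Delta_{\alpha,n}+\Delta_{\beta,n})/2$. The type hypothesis $(\mathbf{x}^{*},\mathbf{y}^{*}) \in T(P_{XY})$ then supplies $\frac{1}{n}\sum_{k}|y_{k}^{*}-x_{k}^{*}|^{q} = \mathbb{E}_{P_{XY}}[|Y-X|^{q}] \leq c_{XY}$, so the $L^{q}$-norm is at most $c_{XY}^{1/q} + (\Delta_{\alpha,n}+\Delta_{\beta,n})/2$. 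Raising to the $(q-1)$th power yields exactly the bracketed factor in the lemma.

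The main subtlety is the choice of Minkowski's inequality (the $L^{q}$-triangle inequality) for the aggregation step: a cruder estimate such as $(a+b)^{q} \leq 2^{q-1}(a^{q}+b^{q})$ would produce a bound with an additive constant in place of the clean $c_{XY}^{1/q} + \tfrac{1}{2}(\Delta_{1,n}+\Delta_{2,n})$, and would fail to tend to $c_{XY}^{(q-1)/q}$ as $\Delta_{1,n},\Delta_{2,n}\to 0$. The case $q=1$ degenerates smoothly: the $\max(\cdot)^{q-1}$ factor becomes $1$, and the whole argument collapses to a direct triangle-inequality application on the absolute values, consistent with the stated bound.
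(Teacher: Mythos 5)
Your proof is correct and reaches exactly the stated constant, but it takes a slightly different organizational route than the paper's. The paper first applies the $\ell^q$ triangle inequality to the whole vector to get ${\|{\bf y}-{\bf x}\|\mathstrut}_q \leq {\|{\bf y}^*-{\bf x}^*\|\mathstrut}_q + {\|{\bf v}\|\mathstrut}_q$, and then applies a mean-value estimate on $f(t)=t^q$ to the aggregate quantities $\tfrac{1}{n}{\|\cdot\|\mathstrut}_q^q$; for the lower bound it then splits into two cases according to the sign of ${\|{\bf y}^*-{\bf x}^*\|\mathstrut}_q - {\|{\bf v}\|\mathstrut}_q$. You instead apply the mean-value inequality $\big||a|^{q}-|b|^{q}\big| \leq q|a-b|\max(|a|,|b|)^{q-1}$ \emph{componentwise} first, then aggregate via the power-mean/Jensen step and Minkowski. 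The ingredients (quantization error $|v_k|\le\tfrac{1}{2}(\Delta_{\alpha,n}+\Delta_{\beta,n})$, the $q$-norm triangle inequality, the mean-value bound on $t^q$, the type constraint $\tfrac{1}{n}\sum_k|y_k^*-x_k^*|^q \le c_{XY}$) are identical; the difference is the order in which Minkowski and the mean-value inequality are invoked. What your ordering buys you is that the componentwise inequality is already two-sided in absolute value, so no case split is needed for the lower bound --- the whole argument bounds $\tfrac{1}{n}\sum_k\big||y_k-x_k|^q-|y_k^*-x_k^*|^q\big|$ directly. This is a genuine (if mild) simplification over the paper's treatment. One small nit: your observation that Minkowski is essential (over cruder estimates like $(a+b)^q\le 2^{q-1}(a^q+b^q)$) is well taken, and in fact exactly parallels why the paper also invokes the $\ell^q$ triangle inequality rather than a cruder decoupling at that step.
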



\begin{proof}
The exponent can be equivalently rewritten as
\begin{equation} \label{eqRealW}
-\tfrac{1}{n}\,{\log\mathstrut}_{\!b}\, w({\bf y} \, | \, {\bf x})
\;\; \equiv \;\;
{\log\mathstrut}_{\!b}\,\Big(\tfrac{2\Gamma(1/q)}{q \nu^{1/q}}\Big) \; + \;
\tfrac{\nu}{\ln b}\cdot
\tfrac{1}{n}{\|{\bf y} - {\bf x}\|\mathstrut}^{q}_{q}.
\end{equation}
Defining $v_{k} \, \triangleq \, (y_{k} - x_{k}) - (y_{k}^{*} - x_{k}^{*})$,
with the triangle inequality we obtain:
\begin{align}
{\|{\bf y} - {\bf x}\|\mathstrut}_{q} \; & \leq \; {\|{\bf y}^{*} - {\bf x}^{*}\|\mathstrut}_{q} \, + \,
{\|{\bf v}\|\mathstrut}_{q}\,,
\nonumber \\
\tfrac{1}{n}{\|{\bf y} - {\bf x}\|\mathstrut}_{q}^{q} \; & \leq \; 
\Big[\big(\tfrac{1}{n}{\|{\bf y}^{*} - {\bf x}^{*}\|\mathstrut}_{q}^{q}\big)^{1/q} \, + \,
\big(\tfrac{1}{n}{\|{\bf v}\|\mathstrut}_{q}^{q}\big)^{1/q}
\Big]^{q},
\nonumber \\
\tfrac{1}{n}{\|{\bf y} - {\bf x}\|\mathstrut}_{q}^{q} \; & \leq \;
\tfrac{1}{n}{\|{\bf y}^{*} - {\bf x}^{*}\|\mathstrut}_{q}^{q} \, + \,
q\Big[\big(\tfrac{1}{n}{\|{\bf y}^{*} - {\bf x}^{*}\|\mathstrut}_{q}^{q}\big)^{1/q} \, + \,
\big(\tfrac{1}{n}{\|{\bf v}\|\mathstrut}_{q}^{q}\big)^{1/q}
\Big]^{q\,-\,1}
\big(\tfrac{1}{n}{\|{\bf v}\|\mathstrut}_{q}^{q}\big)^{1/q},
\label{eqDerivative} \\
\tfrac{1}{n}{\|{\bf y} - {\bf x}\|\mathstrut}_{q}^{q} \; & \leq \;
\tfrac{1}{n}{\|{\bf y}^{*} - {\bf x}^{*}\|\mathstrut}_{q}^{q} \, + \,
q\Big[c_{XY}^{1/q} \, + \,
\tfrac{1}{2}(\Delta_{1,\,n} + \Delta_{2,\,n})
\Big]^{q\,-\,1}\cdot
\tfrac{1}{2}(\Delta_{1,\,n} + \Delta_{2,\,n}),
\label{eqUpperB}
\end{align}
where in (\ref{eqDerivative}) we use the monotonically increasing derivative of the function $f(t) = |t|^{q}$, $t > 0$;
and then (\ref{eqUpperB}) follows because $|\, v_{k} \,| \,\leq\, \tfrac{1}{2}(\Delta_{\alpha,\,n} + \Delta_{\beta,\,n})$
and by the condition of the lemma.

For the bound from below:
\begin{align}
{\|{\bf y} - {\bf x}\|\mathstrut}_{q} \; & \geq \; {\|{\bf y}^{*} - {\bf x}^{*}\|\mathstrut}_{q} \, - \,
{\|{\bf v}\|\mathstrut}_{q}\,.
\label{eqTri}
\end{align}
If the RHS of (\ref{eqTri}) is negative:
\begin{align}
{\|{\bf y}^{*} - {\bf x}^{*}\|\mathstrut}_{q} \; & \leq \;
{\|{\bf v}\|\mathstrut}_{q}\,,
\nonumber \\
\tfrac{1}{n}{\|{\bf y}^{*} - {\bf x}^{*}\|\mathstrut}_{q}^{q} \; & \leq \;
\tfrac{1}{n}{\|{\bf v}\|\mathstrut}_{q}^{q}\,,
\nonumber \\
-\tfrac{1}{n}{\|{\bf v}\|\mathstrut}_{q}^{q} \, + \, \tfrac{1}{n}{\|{\bf y}^{*} - {\bf x}^{*}\|\mathstrut}_{q}^{q} \; & \leq \;
0,
\nonumber \\
-\big[\tfrac{1}{2}(\Delta_{\alpha,\,n} + \Delta_{\beta,\,n})\big]^{q} \, + \, \tfrac{1}{n}{\|{\bf y}^{*} - {\bf x}^{*}\|\mathstrut}_{q}^{q} \; & \leq \;
\tfrac{1}{n}{\|{\bf y} - {\bf x}\|\mathstrut}_{q}^{q}\,.
\label{eqNegative}
\end{align}
If the RHS of (\ref{eqTri}) is positive, we proceed as before:
\begin{align}
{\|{\bf y} - {\bf x}\|\mathstrut}_{q} \; & \geq \; {\|{\bf y}^{*} - {\bf x}^{*}\|\mathstrut}_{q} \, - \,
{\|{\bf v}\|\mathstrut}_{q} \; \geq \; 0,
\nonumber \\
\tfrac{1}{n}{\|{\bf y} - {\bf x}\|\mathstrut}_{q}^{q} \; & \geq \; 
\Big[\big(\tfrac{1}{n}{\|{\bf y}^{*} - {\bf x}^{*}\|\mathstrut}_{q}^{q}\big)^{1/q} \, - \,
\big(\tfrac{1}{n}{\|{\bf v}\|\mathstrut}_{q}^{q}\big)^{1/q}
\Big]^{q},
\nonumber \\
\tfrac{1}{n}{\|{\bf y} - {\bf x}\|\mathstrut}_{q}^{q} \; & \geq \;
\tfrac{1}{n}{\|{\bf y}^{*} - {\bf x}^{*}\|\mathstrut}_{q}^{q} \, - \,
q\Big[\big(\tfrac{1}{n}{\|{\bf y}^{*} - {\bf x}^{*}\|\mathstrut}_{q}^{q}\big)^{1/q}\Big]^{q\,-\,1}
\big(\tfrac{1}{n}{\|{\bf v}\|\mathstrut}_{q}^{q}\big)^{1/q},
\label{eqDerivative2} \\
\tfrac{1}{n}{\|{\bf y} - {\bf x}\|\mathstrut}_{q}^{q} \; & \geq \;
\tfrac{1}{n}{\|{\bf y}^{*} - {\bf x}^{*}\|\mathstrut}_{q}^{q} \, - \,
q \big[c_{XY}^{1/q}\big]^{q\,-\,1}\cdot
\tfrac{1}{2}(\Delta_{1,\,n} + \Delta_{2,\,n}),
\label{eqLowerB}
\end{align}
where in (\ref{eqDerivative2}) we use the monotonically increasing derivative of the function $f(t) = |t|^{q}$, $t > 0$;
and in (\ref{eqLowerB}) we use $|\, v_{k} \,| \,\leq\, \tfrac{1}{2}(\Delta_{\alpha,\,n} + \Delta_{\beta,\,n})$
and the condition of the lemma again.

The exponent with the quantized versions
$-\frac{1}{n}\,{\log\mathstrut}_{\!b}\, w({\bf y}^{*} \, | \, {\bf x}^{*})$, in turn,
can also be rewritten similarly to (\ref{eqRealW}). 
Then the result of the lemma follows from comparison of the minimum of the two lower bounds
(\ref{eqLowerB}) and (\ref{eqNegative}) with the upper bound (\ref{eqUpperB}).
\end{proof}

The following lemma will be used for the upper bound on the error exponent.

\bigskip

\begin{lemma}[Conditional probability of correct decoding] 
\label{LemConvLem}
{\em Let ${P\mathstrut}_{\!XY} \in {\cal P}_{n}({\cal X}_{n}\times {\cal Y}_{n})$ be a joint type,
such that $\mathbb{E}_{{P\mathstrut}_{\!X}}\!\big[{|X|\mathstrut}^{p}\big] \leq c_{X}$, $\mathbb{E}_{{P\mathstrut}_{\!Y}}\!\big[{|Y|\mathstrut}^{p}\big] \leq c_{Y}$, $p \geq 1$, and 
$\mathbb{E}_{{P\mathstrut}_{\!XY}}\!\big[{|Y-X|\mathstrut}^{q}\big]\leq c_{XY}$. 
Let ${\cal C}$ be a codebook, 
such that the quantized versions (\ref{eqQuantizer}) of 
its codewords ${\bf x}(m)$, $m = 1, 2, .\,.\,.\, , \, M(n, R)$,
are all of the 
type ${P\mathstrut}_{\!X}$: 
\begin{displaymath} 
{\bf x}^{*}(m) \; = \; Q_{\alpha}({\bf x}(m))
\; = \;
\big(Q_{\alpha}(x_{1}(m)), Q_{\alpha}(x_{2}(m)), .\,.\,.\, , \,Q_{\alpha}(x_{n}(m))\big)
\; \in \;
T({P\mathstrut}_{\!X}), \;\;\; \forall m.
\end{displaymath}
Let $J \sim \text{Unif}\,\big(\{1, 2, .\,.\,.\, , \, M\}\big)$ be a random variable,
independent of the channel noise, and let ${\bf x}(J) \rightarrow {\bf Y}$ be the random channel-input and channel-output vectors, respectively.
Let ${\bf Y}^{*} = Q_{\beta}({\bf Y})\in {\cal Y}_{n}^{n}$. Then}
\begin{displaymath}
\Pr \Big\{
g({\bf Y}) = J \; \big| \;
\big({\bf x}^{*}(J), \,{\bf Y}^{*}\big) \, \in \, T({P\mathstrut}_{\!XY})
\Big\}
\;\; \leq \;\;
b^{\,-n\big(\widetilde{R} \, - \, I({P\mathstrut}_{\!XY})  
\, + \, o(1)\big)},
\end{displaymath}
{\em where $\widetilde{R} = \frac{1}{n}\,{\log\mathstrut}_{\!b}\,M(n, R)$,
and $o(1)\rightarrow 0$, as $n\rightarrow \infty$, 
depending only on 
$\alpha$, $\beta$, $c_{X}$, $c_{Y}$, $c_{XY}$, $p$, $q$, $\nu$.}
\end{lemma}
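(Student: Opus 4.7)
\medskip

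\noindent\textbf{Proof proposal.}
The plan is to write the conditional probability as a ratio of two integrals and bound them separately, exploiting three ingredients: (i) Lemma~\ref{LemPDFExponent}, which says that $w(\mathbf{y}\,|\,\mathbf{x})$ and $w(\mathbf{y}^{*}\,|\,\mathbf{x}^{*})$ differ by at most a sub-exponential factor $b^{\pm n\varepsilon_{n}}$ whenever $(\mathbf{x}^{*},\mathbf{y}^{*})\in T({P\mathstrut}_{\!XY})$; (ii) the fact that, for the AWGGN channel, $w(\mathbf{y}^{*}\,|\,\mathbf{x}^{*})$ depends on $(\mathbf{x}^{*},\mathbf{y}^{*})$ only through $\tfrac{1}{n}\|\mathbf{y}^{*}-\mathbf{x}^{*}\|_{q}^{q}$, which is a function of the joint type ${P\mathstrut}_{\!XY}$ alone --- call the resulting common value $\omega_{n}({P\mathstrut}_{\!XY})$; and (iii) the type-size estimates of Lemmas~\ref{LemTypeSize}--\ref{LemCondTypeSize}. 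Writing $A_{m}\triangleq\{\mathbf{y}:Q_{\beta}(\mathbf{y})\in T({P\mathstrut}_{\!Y|X}\,|\,\mathbf{x}^{*}(m))\}$, the conditional probability equals
\begin{equation*}
\frac{\sum_{m}\int_{\mathcal{D}_{m}\cap A_{m}}w(\mathbf{y}\,|\,\mathbf{x}(m))\,d\mathbf{y}}{\sum_{m}\int_{A_{m}}w(\mathbf{y}\,|\,\mathbf{x}(m))\,d\mathbf{y}}.
\end{equation*}

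\medskip

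For the denominator, I would apply Lemma~\ref{LemPDFExponent} to replace $w(\mathbf{y}\,|\,\mathbf{x}(m))$ by $\omega_{n}({P\mathstrut}_{\!XY})\cdot b^{-n\varepsilon_{n}}$ on $A_{m}$, note that $\mathrm{vol}(A_{m})=|T({P\mathstrut}_{\!Y|X}\,|\,\mathbf{x}^{*}(m))|\cdot\Delta_{\beta,n}^{\,n}$ (since all $\mathbf{x}^{*}(m)$ lie in $T({P\mathstrut}_{\!X})$ by hypothesis), and then invoke the lower bound in (\ref{eqYgivenX}) to obtain
\begin{equation*}
\mathrm{denominator}\;\geq\;\omega_{n}({P\mathstrut}_{\!XY})\cdot b^{-n\varepsilon_{n}}\cdot b^{\,n(H(Y\,|\,X)\,-\,o(1))}\cdot\Delta_{\beta,n}^{\,n}.
\end{equation*}
For the numerator, the same application of Lemma~\ref{LemPDFExponent} gives the upper bound $\omega_{n}({P\mathstrut}_{\!XY})\cdot b^{n\varepsilon_{n}}$, after which the key step is: since the $\mathcal{D}_{m}$ are pairwise disjoint and each $A_{m}$ is contained in $\{\mathbf{y}:Q_{\beta}(\mathbf{y})\in T({P\mathstrut}_{\!Y})\}$, we have the pointwise inequality $\sum_{m}\mathbf{1}\{\mathbf{y}\in\mathcal{D}_{m}\cap A_{m}\}\leq\mathbf{1}\{Q_{\beta}(\mathbf{y})\in T({P\mathstrut}_{\!Y})\}$. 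Integrating and using the upper bound in (\ref{eqMargExpY}) yields
\begin{equation*}
\sum_{m}\mathrm{vol}(\mathcal{D}_{m}\cap A_{m})\;\leq\;|T({P\mathstrut}_{\!Y})|\cdot\Delta_{\beta,n}^{\,n}\;\leq\;b^{\,n H(Y)}\cdot\Delta_{\beta,n}^{\,n}.
\end{equation*}

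\medskip

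Dividing, the $\omega_{n}({P\mathstrut}_{\!XY})$ factors and the $\Delta_{\beta,n}^{\,n}$ factors cancel, and the $M=b^{n\widetilde{R}}$ in front of the numerator sum (but cancelled by the same factor in the denominator after the sum collapses by symmetry since every summand in the denominator gives the same value up to the $b^{\pm n\varepsilon_{n}}$ slack) produces exactly $b^{-n(\widetilde{R}-I({P\mathstrut}_{\!XY})+o(1))}$, where the $o(1)$ lumps together $\varepsilon_{n}$ from Lemma~\ref{LemPDFExponent} and the $O(\log n / n^{\text{positive}})$ terms from Lemmas~\ref{LemTypeSize}--\ref{LemCondTypeSize}; all of these vanish and depend only on the listed parameters $\alpha,\beta,c_{X},c_{Y},c_{XY},p,q,\nu$. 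The main technical care is in the numerator step: one must verify that the \emph{sum} of volumes over the possibly overlapping $A_{m}$'s is bounded, without paying a factor of $M$, using only the disjointness of $\mathcal{D}_{m}$. The rest is bookkeeping of the sub-exponential error terms and checking that they collect into a single $o(1)$ with the stated parameter dependence.
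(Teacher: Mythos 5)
Your ratio-of-integrals decomposition, the use of Lemma~\ref{LemPDFExponent} to equate $w$ with the type-constant $\omega_n({P\mathstrut}_{\!XY})$ up to $b^{\pm n\varepsilon_n}$, the packing step $\sum_m \mathrm{vol}(\mathcal{D}_m\cap A_m) \leq |T({P\mathstrut}_{\!Y})|\,\Delta_{\beta,n}^{\,n}$ via disjointness, and the type-class-size lower bound on the denominator are exactly the paper's argument; invoking (\ref{eqMargExpY}) and (\ref{eqYgivenX}) rather than the cited (\ref{eqMargExpX}) and (\ref{eqXgivenY}) is a purely cosmetic $X\leftrightarrow Y$ symmetry, since $|T({P\mathstrut}_{\!Y})|/|T({P\mathstrut}_{\!Y|X}\,|\,{\bf x})| = |T({P\mathstrut}_{\!X})|/|T({P\mathstrut}_{\!X|Y}\,|\,{\bf y})|$. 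Two small slips to fix: your displayed denominator lower bound should carry an extra factor of $M$ (it arises from summing the $M$ equal-volume sets $A_m$, all with ${\bf x}^*(m)\in T({P\mathstrut}_{\!X})$), and there is no $M$ ``in front of the numerator sum''---the $b^{-n\widetilde{R}}$ in the final exponent comes precisely from that factor of $M$ in the denominator.
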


\bigskip

The proof relies on Lemma~\ref{LemPDFExponent}
and on the method of types in the form of 
(\ref{eqMargExpX}) of Lemma~\ref{LemTypeSize}, and (\ref{eqXgivenY}) of Lemma~\ref{LemCondTypeSize}
of the current paper.
Otherwise, the proof is the same as for Lemma~10 in \cite{TridenskiSomekhBaruh23}.

In the next section we derive a converse bound on the error exponent in terms of types.


\section{Error exponent}\label{ErrExp}

The end result of this section is given by Lemma~\ref{LemAllCodebooks} and represents a converse bound on the error exponent by the method of types.

\begin{lemma}[Error exponent of 
mono-composition codebooks]
\label{LemConstComp}
{\em 
Let ${P\mathstrut}_{\!X} \in {\cal P}_{n}({\cal X}_{n})$ be a type,
such that $\mathbb{E}_{{P\mathstrut}_{\!X}}\!\big[{|X|\mathstrut}^{r}\big] \leq \tilde{c}_{X}$ with $r \geq 1$,
and
let ${\cal C}$ be a codebook, 
such that the quantized versions (\ref{eqQuantizer}) of 
its codewords ${\bf x}(m)$, $m = 1, 2, .\,.\,.\, , \, M(n, R)$,
are all of the 
type ${P\mathstrut}_{\!X}$, that is:
\begin{displaymath}
{\bf x}^{*}(m) \; = \; Q_{\alpha}({\bf x}(m))
\; = \;
\big(Q_{\alpha}(x_{1}(m)), Q_{\alpha}(x_{2}(m)), .\,.\,.\, , \,Q_{\alpha}(x_{n}(m))\big)
\; \in \;
T({P\mathstrut}_{\!X}), \;\;\; \forall m.
\end{displaymath}
Let $J \sim \text{Unif}\,\big(\{1, 2, .\,.\,.\, , \, M\}\big)$ be a random variable,
independent of the channel noise, and let ${\bf x}(J) \rightarrow {\bf Y}$ be the random channel-input and channel-output vectors, respectively.
Then for any parameter $c_{XY}$
}
\begin{equation} \label{eqCCBound}
- \frac{1}{n}\,{\log\mathstrut}_{\!b}\Pr \big\{
g({\bf Y}) \neq J \big\}
\;\; \leq \;\;
\min_{\substack{\\{P\mathstrut}_{\!Y|X}:\\
{P\mathstrut}_{\!XY}\,\in \, {\cal P}_{n}({\cal X}_{n}\,\times\, {\cal Y}_{n}),
\\
\mathbb{E}[{|Y-X|\mathstrut}^{q}] \; \leq \; c_{XY},
\\ I({P\mathstrut}_{\!X}, \, {P\mathstrut}_{\!Y|X}) \; \leq \; \widetilde{R} \, - \, o(1)
}}
\Big\{
D\big({P\mathstrut}_{\!Y|X}\,\|\, {W\mathstrut}_{\!n} \,|\,  {P\mathstrut}_{\!X}\big)
\Big\} \; + \; o(1),
\end{equation}
{\em where  
$\widetilde{R} = \frac{1}{n}\,{\log\mathstrut}_{\!b}\,M(n, R)$,
and
$o(1)\rightarrow 0$, as $n\rightarrow \infty$, 
depending only on 
$\alpha$, $\beta$, $\tilde{c}_{X}$, $c_{XY}$, $r$, $q$, $\nu$.}
\end{lemma}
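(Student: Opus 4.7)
The plan is to lower-bound $\Pr\{g({\bf Y})\neq J\}$ by intersecting with a single well-chosen joint-type event and then decomposing the resulting joint probability as a product. Fix any $P_{Y|X}$ that is feasible in the minimization on the RHS of (\ref{eqCCBound}), so that $I(P_X,P_{Y|X})\leq \widetilde R - o(1)$ and $\mathbb{E}_{P_{XY}}[|Y-X|^q]\leq c_{XY}$. Then write
$$\Pr\{g({\bf Y})\neq J\}\;\geq\;\Pr\{({\bf x}^*(J),{\bf Y}^*)\in T(P_{XY})\}\cdot\Pr\{g({\bf Y})\neq J\,|\,({\bf x}^*(J),{\bf Y}^*)\in T(P_{XY})\}.$$
By Lemma~\ref{LemConvLem} the conditional factor is at least $1 - b^{-n\cdot o(1)} = 1 - o(1)$, provided the rate-slack parameter hidden in the constraint $I(P_X,P_{Y|X})\leq \widetilde R - o(1)$ is taken slightly larger than the slack furnished by Lemma~\ref{LemConvLem}. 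To invoke that lemma I would first derive uniform marginal moment bounds $\mathbb{E}_{P_X}[|X|^p]\leq c_X$ and $\mathbb{E}_{P_Y}[|Y|^p]\leq c_Y$ (for some suitable $p\in [1,\min\{r,q\}]$) by applying Jensen's and Minkowski's inequalities to the given constraints $\mathbb{E}_{P_X}[|X|^r]\leq \tilde c_X$ and $\mathbb{E}_{P_{XY}}[|Y-X|^q]\leq c_{XY}$.

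Next I would lower-bound the marginal type probability. Since every codeword satisfies ${\bf x}^*(j)\in T(P_X)$, conditional on $J=j$,
$$\Pr\{({\bf x}^*(j),{\bf Y}^*)\in T(P_{XY})\,|\,J=j\}\;=\;\sum_{{\bf y}^*\in T(P_{Y|X}|{\bf x}^*(j))}\prod_{k=1}^n\int_{y_k^*-\Delta_{\beta,n}/2}^{y_k^*+\Delta_{\beta,n}/2}\!\!w(y_k\,|\,x_k(j))\,dy_k.$$
Using the Lipschitz property of $w$ underlying (\ref{eqLipschitz}), each one-dimensional integral is at least $W_n(y_k^*\,|\,x_k(j))\cdot(1-o(1))$, and Lemma~\ref{LemPDFExponent} then lets me replace $x_k(j)$ by $x_k^*(j)$ inside $\log_b$ of the $n$-fold product at an exponential cost of $o(1)$. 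Since every ${\bf y}^*\in T(P_{Y|X}|{\bf x}^*(j))$ yields the same value $\prod_k W_n(y_k^*|x_k^*(j)) = b^{-n[H(Y|X)+D(P_{Y|X}\|W_n|P_X)]}$, and since the conditional class contains at least $b^{n(H(Y|X)-o(1))}$ elements by (\ref{eqYgivenX}), the marginal type probability is at least $b^{-n[D(P_{Y|X}\|W_n|P_X)+o(1)]}$. Combining this with the conditional factor and taking $-\tfrac{1}{n}\log_b$ of the resulting lower bound on $\Pr\{g({\bf Y})\neq J\}$, then minimizing over admissible $P_{Y|X}$, yields (\ref{eqCCBound}).

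The main technical obstacle is the uniformity of all the $o(1)$ terms: the rates of convergence coming from Lemma~\ref{LemPDFExponent}, from Lemma~\ref{LemCondTypeSize}, and from the Lipschitz correction on $w$ must depend only on $\alpha,\beta,\tilde c_X,c_{XY},r,q,\nu$ and not on the particular $P_{Y|X}$ ranged over in the minimization. The constraint $\mathbb{E}_{P_{XY}}[|Y-X|^q]\leq c_{XY}$ controls exactly the $q$-norm deviation that appears in Lemma~\ref{LemPDFExponent}, while the induced uniform bound on $\mathbb{E}_{P_Y}[|Y|^p]$ makes the method-of-types estimates of Section~\ref{MOT} applicable with universal constants. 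The argument closely mirrors the proof of Lemma~10 in \cite{TridenskiSomekhBaruh23}, which is the same template cited in the sketch following Lemma~\ref{LemConvLem}.
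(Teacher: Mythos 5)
Your overall decomposition and choice of lemmas match the paper's proof precisely: you factor $\Pr\{g({\bf Y})\neq J\}$ into the probability that the quantized input--output pair lands in a chosen type class times the conditional probability of error on that class, bound the latter with Lemma~\ref{LemConvLem}, and bound the former with Lemma~\ref{LemPDFExponent} together with the conditional type-class size estimate (\ref{eqYgivenX}) of Lemma~\ref{LemCondTypeSize}. The Jensen-type derivation of the intermediate $p$-moment bounds and the discussion of why all $o(1)$ terms depend only on the listed parameters are also the right moves; this is exactly the paper's template.

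The one step that actually fails is the per-component estimate ``each one-dimensional integral is at least $W_n(y_k^*\,|\,x_k(j))\cdot(1-o(1))$.'' Lipschitz continuity gives only $\int_{y_k^*-\Delta_{\beta,n}/2}^{y_k^*+\Delta_{\beta,n}/2} w(y\,|\,x_k(j))\,dy \geq \Delta_{\beta,n}\big(w(y_k^*\,|\,x_k(j)) - K\Delta_{\beta,n}/2\big)$, so the multiplicative factor you need is $1 - K\Delta_{\beta,n}/\big(2\,w(y_k^*\,|\,x_k(j))\big)$, which is not uniformly close to $1$: whenever $y_k^*$ is far enough from $x_k(j)$ that $w(y_k^*\,|\,x_k(j)) < K\Delta_{\beta,n}/2$, the bound degenerates and the claimed inequality is false, and the type constraint $\mathbb{E}_{P_{XY}}[|Y-X|^q]\leq c_{XY}$ does not forbid such coordinates. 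The fix is not to approximate the integral multiplicatively at all. Instead, pick $\tilde y_k$ to minimize $w(\cdot\,|\,x_k(j))$ over the $\Delta_{\beta,n}$-window so that $\int \geq \Delta_{\beta,n}\, w(\tilde y_k\,|\,x_k(j))$ exactly, observe that $Q_\beta(\tilde{\bf y}) = {\bf y}^*$, and then apply Lemma~\ref{LemPDFExponent} once to the full vector pair $(\tilde{\bf y}, {\bf x}(j))$ whose quantized versions are $({\bf y}^*, {\bf x}^*(j))\in T(P_{XY})$: the $q$-norm form of that lemma converts the aggregated deviation into a uniform $o(1)$ in the exponent precisely because of the constraint on $\mathbb{E}_{P_{XY}}[|Y-X|^q]$. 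Using Lemma~\ref{LemPDFExponent} merely to ``replace $x_k(j)$ by $x_k^*(j)$'' after a per-component multiplicative step is not what the lemma furnishes; with this single replacement your argument reduces to the paper's proof.
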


\bigskip

\begin{proof}
Let $p = \min\{r, q\}\geq 1$.
For a joint type ${P\mathstrut}_{\!XY} \in {\cal P}_{n}({\cal X}_{n}\,\times\, {\cal Y}_{n})$ with the marginal type ${P\mathstrut}_{\!X}$,
and such that
$\mathbb{E}_{{P\mathstrut}_{\!XY}}\!\big[{|Y-X|\mathstrut}^{q}\big]\leq c_{XY}$, through 
various applications of Jensen's inequality we also obtain:
\begin{align}
\mathbb{E}_{{P\mathstrut}_{\!X}}\big[{|X|\mathstrut}^{p}\big]
\;\;
& \leq
\;\;
\big(\mathbb{E}_{{P\mathstrut}_{\!X}}\big[{|X|\mathstrut}^{r}\big]\big)^{p/r}
\;\;
\leq
\;\;
\tilde{c}_{X}^{\,p/r}
\;\; \triangleq \;\;
c_{X},
\nonumber \\
\mathbb{E}_{{P\mathstrut}_{\!XY}}\!\big[{|Y-X|\mathstrut}^{p}\big]
\;\;
& \leq
\;\;
\big(\mathbb{E}_{{P\mathstrut}_{\!XY}}\big[{|Y-X|\mathstrut}^{q}\big]\big)^{p/q}
\;\;
\leq
\;\;
c_{XY}^{\,p/q},
\nonumber \\
\mathbb{E}_{{P\mathstrut}_{\!Y}}\!\big[{|Y|\mathstrut}^{p}\big]
\;\; 
& \leq
\;\;
{2\mathstrut}^{p}
\mathbb{E}_{{P\mathstrut}_{\!XY}}\!\big[\tfrac{1}{2}
{|X|\mathstrut}^{p}
\,+\,
\tfrac{1}{2}{|Y-X|\mathstrut}^{p}\big]
\;\; \leq \;\;
{2\mathstrut}^{p\,-\,1}
\big(\tilde{c}_{X}^{\,p/r} + c_{XY}^{\,p/q}\big)
\;\; \triangleq \;\;
c_{Y}.
\label{eqBoundp}
\end{align}
Then with $\,{\bf Y}^{*} = Q_{\beta}({\bf Y})\in {\cal Y}_{n}^{n}\,$ for any $1 \leq j \leq M$ we obtain
\begin{align}
\Pr \Big\{
\big({\bf x}^{*}(J), \,{\bf Y}^{*}\big) \, \in \, T({P\mathstrut}_{\!XY}) \; \big| \;
J = j
\Big\}
\;\; & \overset{a}{\geq} \;\;
\big|\, T\big({P\mathstrut}_{\!Y|X}\,|\, {\bf x}^{*}(j)\big) \,\big| \cdot \Delta_{\beta,\,n}^{n}\cdot
b^{\,n\big(\mathbb{E}_{{P\mathstrut}_{\!XY}}\![\,{\log\mathstrut}_{\!b}\, w(Y \, | \, X)\,] \; + \; o(1)\big)}
\nonumber \\
& \overset{b}{\geq} \;\;
b^{\,-n\big(D({P\mathstrut}_{\!Y|X}\,\|\, {W\mathstrut}_{\!n} \,|\,  {P\mathstrut}_{\!X}) \; + \; o(1)\big)},
\;\;\;\;\;\; \forall j,
\label{eqConTypeExp}
\end{align}
where ($a$) follows by Lemma~\ref{LemPDFExponent}, and ($b$) follows by (\ref{eqYgivenX}) of Lemma~\ref{LemCondTypeSize} and (\ref{eqChanApprox}).
This gives
\begin{align}
\Pr \Big\{
\big({\bf x}^{*}(J), \,{\bf Y}^{*}\big) \, \in \, T({P\mathstrut}_{\!XY})
\Big\}
\;\; & \geq \;\;
b^{\,-n\big(D({P\mathstrut}_{\!Y|X}\,\|\, {W\mathstrut}_{\!n} \,|\,  {P\mathstrut}_{\!X}) \; + \; o(1)\big)}.
\label{eqPrior}
\end{align}
Now we are ready to apply Lemma~\ref{LemConvLem}:
\begin{align}
\Pr \big\{
g({\bf Y}) \neq J \big\}
\;\; & \geq \;\;
\Pr \Big\{
\big({\bf x}^{*}(J), \,{\bf Y}^{*}\big) \, \in \, T({P\mathstrut}_{\!XY})
\Big\}\cdot
\Pr \Big\{
g({\bf Y}) \neq J \; \big| \;
\big({\bf x}^{*}(J), \,{\bf Y}^{*}\big) \, \in \, T({P\mathstrut}_{\!XY})
\Big\}
\nonumber \\
& \overset{a}{\geq} \;\;
b^{\,-n\big(D({P\mathstrut}_{\!Y|X}\,\|\, {W\mathstrut}_{\!n} \,|\,  {P\mathstrut}_{\!X}) \; + \; o(1)\big)}
\cdot
\Big[
1 \, - \,
b^{\,-n\big(\widetilde{R} \; - \; I({P\mathstrut}_{\!X}, \, {P\mathstrut}_{\!Y|X})  
\; + \; o(1)\big)}
\Big]
\nonumber \\
& \overset{b}{\geq} \;\;
b^{\,-n\big(D({P\mathstrut}_{\!Y|X}\,\|\, {W\mathstrut}_{\!n} \,|\,  {P\mathstrut}_{\!X}) \; + \; o(1)\big)}
\cdot 1/2,
\nonumber
\end{align}
where ($a$) follows by (\ref{eqPrior}) and Lemma~\ref{LemConvLem},
and ($b$) holds for $I({P\mathstrut}_{\!X}, {P\mathstrut}_{\!Y|X}) \,\leq \, \widetilde{R} - {\log\mathstrut}_{\!b}(2)/n + o(1)$.
\end{proof}

\bigskip

\begin{lemma}[Type constraint] \label{LemTypeConstraint}
{\em
For any $\epsilon > 0$ there exists $n_{0} = n_{0}(\alpha, s, r, \epsilon) \in \mathbb{N}$,
such that for any $n > n_{0}\,$ for any codeword ${\bf x} \in \mathbb{R}{\mathstrut}^{n}$,
satisfying the generalized power constraint (\ref{eqPowerConstraint}), its quantized version (\ref{eqQuantizer})
satisfies (\ref{eqPowerConstraint}) within $\epsilon$, that is
with $s^{r}$ replaced by $s^{r} \!+ \epsilon$.
}
\end{lemma}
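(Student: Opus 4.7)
The plan is to use Minkowski's inequality (the $\ell_r$ triangle inequality) together with a pointwise bound on the quantization error, and then to invoke the continuity of $t \mapsto t^r$ at $t = s$.

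First, I would observe that by the definition of the quantizer $Q_{\alpha}$ in (\ref{eqQuantizer}), each coordinate of the error vector ${\bf v} \triangleq {\bf x}^{*} - {\bf x}$ satisfies $|v_{k}| \leq \Delta_{\alpha,\,n}/2 = 1/(2n^{\alpha})$. Consequently,
\begin{equation} \nonumber
\|{\bf x}^{*} - {\bf x}\|_{r} \;\; \leq \;\; \tfrac{n^{1/r}}{2n^{\alpha}}.
\end{equation}

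Next, applying Minkowski's inequality and the hypothesis (\ref{eqPowerConstraint}), which reads $\|{\bf x}\|_{r} \leq s\, n^{1/r}$, I would obtain
\begin{equation} \nonumber
\|{\bf x}^{*}\|_{r} \;\; \leq \;\; \|{\bf x}\|_{r} + \|{\bf x}^{*} - {\bf x}\|_{r} \;\; \leq \;\; n^{1/r}\Big(s + \tfrac{1}{2n^{\alpha}}\Big),
\end{equation}
and raising both sides to the $r$th power and dividing by $n$ gives
\begin{equation} \nonumber
\frac{1}{n}\sum_{k \, = \, 1}^{n}|\,x_{k}^{*}\,|^{\,r} \;\; \leq \;\; \Big(s + \tfrac{1}{2n^{\alpha}}\Big)^{r}.
\end{equation}

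To finish, since $\alpha > 0$, the RHS converges to $s^{r}$ as $n\to\infty$ by continuity of $t \mapsto t^{r}$, so for any $\epsilon > 0$ there is an $n_{0}(\alpha, s, r, \epsilon)$ such that $\big(s + \tfrac{1}{2n^{\alpha}}\big)^{r} \leq s^{r} + \epsilon$ for all $n > n_{0}$, which is precisely the claim. There is no genuine obstacle here: the argument is purely algebraic via Minkowski plus continuity, and the only quantitative point is noting that $\alpha > 0$ is what makes $\Delta_{\alpha,\,n} \to 0$.
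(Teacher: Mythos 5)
Your proof is correct and follows essentially the same route as the paper's: the paper's one-line proof refers back to the argument for (\ref{eqUpperB}) in Lemma~\ref{LemPDFExponent}, which is precisely the Minkowski/triangle-inequality step followed by a bound on the $\ell_r$ norm of the quantization error, with the deviation vanishing because $\Delta_{\alpha,\,n} \to 0$. The only cosmetic difference is that the paper's template gives the quantitative bound $r\big(s + \tfrac{1}{2n^{\alpha}}\big)^{r-1}\cdot\tfrac{1}{2n^{\alpha}}$ via a derivative estimate, whereas you leave the RHS as $\big(s + \tfrac{1}{2n^{\alpha}}\big)^{r}$ and invoke continuity; both are fine and the substance is identical.
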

The proof is similar to the proof of (\ref{eqUpperB}), with the quantized version put on the LHS of that inequality.

\bigskip

\begin{lemma}[Error exponent] \label{LemAllCodebooks}
{\em
Let $J \sim \text{Unif}\,\big(\{1, 2, .\,.\,.\, , \, M\}\big)$ be a random variable,
independent of the channel noise, and let ${\bf x}(J) \rightarrow {\bf Y}$ be the random channel-input and channel-output vectors. 
Then for any 
$c_{XY}$ and $\epsilon > 0$ there exists
$n_{0} = n_{0}(\alpha, \,\beta, \,s, \,c_{XY}, \, r, \,q, \,\nu, 
\,\epsilon) \in \mathbb{N}$,
such that for any $n > n_{0}$
}
\begin{equation} \label{eqBoundTypes}
- \frac{1}{n}\,{\log\mathstrut}_{\!b}\Pr \big\{
g({\bf Y}) \neq J \big\}
\;\; \leq \;\;
\max_{\substack{\\{P\mathstrut}_{\!X{\color{white}|}}\!\!:\\
{P\mathstrut}_{\!X}\,\in \, {\cal P}_{n}({\cal X}_{n}),
\\
\mathbb{E}[{|X|\mathstrut}^{r}] \; \leq \; s^{r} + \, \epsilon}}
\;\;
\min_{\substack{\\{P\mathstrut}_{\!Y|X}:\\
{P\mathstrut}_{\!XY}\,\in \, {\cal P}_{n}({\cal X}_{n}\,\times\, {\cal Y}_{n}),
\\
\mathbb{E}[{|Y-X|\mathstrut}^{q}] \; \leq \; c_{XY},
\\ I({P\mathstrut}_{\!X}, \, {P\mathstrut}_{\!Y|X}) \; \leq \; R \, - \, \epsilon
}}
\Big\{
D\big({P\mathstrut}_{\!Y|X}\,\|\, {W\mathstrut}_{\!n} \,|\,  {P\mathstrut}_{\!X}\big)
\Big\} \; + \; o(1),
\end{equation}
{\em where $o(1)\rightarrow 0$, as $n\rightarrow \infty$, 
depending only on the parameters
$\alpha$, $\beta$, $s^{r} \!+ \epsilon$, $c_{XY}$, $r$, $q$, $\nu$.}
\end{lemma}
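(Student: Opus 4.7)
The plan is to reduce the general-codebook case to the mono-composition case of Lemma~\ref{LemConstComp} via a pigeonhole argument on quantized types, exploiting the sub-exponential type-counting bound of Lemma~\ref{LemNumofTypes}. By Lemma~\ref{LemTypeConstraint}, for $n$ sufficiently large every quantized codeword ${\bf x}^{*}(m) = Q_{\alpha}({\bf x}(m))$ lies in a type class $T({P\mathstrut}_{\!X})$ with ${P\mathstrut}_{\!X} \in {\cal P}_{n}({\cal X}_{n}, \, s^{r} + \epsilon)$. Partitioning ${\cal C}$ by the quantized type of its codewords therefore produces at most $N \triangleq \big|{\cal P}_{n}({\cal X}_{n}, \, s^{r} + \epsilon)\big|$ mono-composition sub-codebooks, and Lemma~\ref{LemNumofTypes} applied with $p = r$ and $c_{X} = s^{r} + \epsilon$ yields $\tfrac{1}{n}\,{\log\mathstrut}_{\!b}\,N = o(1)$. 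A pigeonhole argument then furnishes a sub-codebook ${\cal C}' \subseteq {\cal C}$, all of whose quantized codewords share a common type ${P\mathstrut}_{\!X}^{\,*}$, with $M' \geq M/N$ codewords; consequently $\tilde{R}' \triangleq \tfrac{1}{n}\,{\log\mathstrut}_{\!b}\,M' \geq R - o(1)$, using also $M = \lfloor b^{nR}\rfloor \geq b^{nR} - 1$.

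Next I would consider the restricted system obtained from ${\cal C}'$ together with the decoder $g'({\bf y})$ that coincides with $g({\bf y})$ whenever $g({\bf y})$ is a label in ${\cal C}'$ and equals $0$ otherwise; note that $g'$ still satisfies (\ref{eqDec}) because its decision regions are the subfamily of the open regions of $g$ indexed by ${\cal C}'$. Since $J$ is uniform and the error events under $g$ and $g'$ coincide on $\{J \in {\cal C}'\}$, we have $\Pr\{g({\bf Y}) \neq J\} \geq (M'/M)\, P_{e}'$, where $P_{e}'$ is the error probability of the restricted system. Hence $-\tfrac{1}{n}\,{\log\mathstrut}_{\!b}\, P_{e}' \geq -\tfrac{1}{n}\,{\log\mathstrut}_{\!b}\,\Pr\{g({\bf Y}) \neq J\} - \tfrac{1}{n}\,{\log\mathstrut}_{\!b}(M/M')$, with the correction term $o(1)$. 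Applying Lemma~\ref{LemConstComp} to ${\cal C}'$ with $\tilde{c}_{X} = s^{r} + \epsilon$ upper-bounds $-\tfrac{1}{n}\,{\log\mathstrut}_{\!b}\, P_{e}'$ by the inner minimum of (\ref{eqCCBound}) evaluated at ${P\mathstrut}_{\!X} = {P\mathstrut}_{\!X}^{\,*}$ with rate constraint $I({P\mathstrut}_{\!X}^{\,*}, {P\mathstrut}_{\!Y|X}) \leq \tilde{R}' - o(1)$, up to an additive $o(1)$. Combining the two displays yields the same bound (with an extra $o(1)$) for $-\tfrac{1}{n}\,{\log\mathstrut}_{\!b}\,\Pr\{g({\bf Y}) \neq J\}$.

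To recover exactly (\ref{eqBoundTypes}) I would then perform two monotonicity substitutions. First, since $R - \epsilon \leq R - o(1) \leq \tilde{R}' - o(1)$ for $n$ large, tightening the mutual-information constraint from $I \leq \tilde{R}' - o(1)$ to $I \leq R - \epsilon$ shrinks the feasible region of the inner minimum, and so can only enlarge its value. Second, since the particular type ${P\mathstrut}_{\!X}^{\,*}$ produced by the pigeonhole step is not under our control, replacing it by the maximum over all ${P\mathstrut}_{\!X} \in {\cal P}_{n}({\cal X}_{n}, \, s^{r} + \epsilon)$ likewise only enlarges the bound. The main obstacle here is purely the bookkeeping of the various $o(1)$ terms and the verification that the resulting threshold $n_{0}$ depends only on the parameters listed in the lemma; this is routine, because each contributing $o(1)$ coming from Lemmas~\ref{LemConstComp}, \ref{LemNumofTypes}, and \ref{LemTypeConstraint} depends only on $\alpha$, $\beta$, $s$, $c_{XY}$, $r$, $q$, $\nu$, and $\epsilon$.
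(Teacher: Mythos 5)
Your proposal is correct and follows essentially the same route as the paper: a pigeonhole argument over quantized types (using Lemma~\ref{LemTypeConstraint} to get the relaxed constraint $s^{r}+\epsilon$ and Lemma~\ref{LemNumofTypes} to get the sub-exponential type count), extraction of a mono-composition sub-codebook with a restricted decoder of the form (\ref{eqDec}), application of Lemma~\ref{LemConstComp}, and then the two monotonicity relaxations in rate and in the maximizing type. The paper phrases the sub-codebook step slightly differently---it lower-bounds $\Pr\{g({\bf Y})\neq J\}$ directly by $b^{n\cdot o(1)}$ times the error probability of the restricted system $(\widetilde{g},\widetilde{\bf Y},\widetilde{J})$, rather than introducing an explicit $M'/M$ factor---but this is the same calculation.
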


\bigskip

\begin{proof}
For a type ${P\mathstrut}_{\!X} \in {\cal P}_{n}({\cal X}_{n})$ let us define $\,M({P\mathstrut}_{\!X}) \, \triangleq \,
\big|\,
\big\{
j : \,
{\bf x}^{*}(j) \in T({P\mathstrut}_{\!X})
\big\}\,\big|$.
Then for any $n$ greater than $n_{0}$ of Lemma~\ref{LemTypeConstraint}
there exists at least one type ${P\mathstrut}_{\!X}$ such that
\begin{equation} \label{eqFrequentType}
M({P\mathstrut}_{\!X})
\;\; \geq \;\;
\frac{M}{\big|\,  {\cal P}_{n}\big({\cal X}_{n}, \,s^{r} \!+ \epsilon\big) \,\big|}
\;\; \geq \;\;
\frac{M}{\big((n + 1)c\big)^{\tilde{c}\, n^{(1\,+\, r\alpha)/(1\,+\,r)}}},
\end{equation}
where the second inequality follows by Lemma~\ref{LemNumofTypes} applied with $c_{X} = s^{r} \!+ \epsilon$ and $p = r$.
Then we can use such a type for a bound:
\begin{align}
& \Pr \big\{
g({\bf Y}) \neq J \big\} \;\; = \;\; \frac{1}{M}\sum_{j \, = \, 1}^{M} \Pr \big\{{\bf Y} \notin {\cal D}{\mathstrut}_{J} \; | \; J = j\big\}
\;\; \geq \;\;
\frac{1}{M}\sum_{\substack{1\,\leq\,j\,\leq\,M \, : \\
{\bf x}^{*}(j) \, \in \, T({P\mathstrut}_{\!X})
}} \Pr \big\{{\bf Y} \notin {\cal D}{\mathstrut}_{J} \; | \; J = j\big\}
\nonumber \\
&
\;\;\;\;\;\;\;\;\;\;\;\;\;\;\;\;\;\;\;\;\;\;\;\;\;\;\;\;\;\;\;\;\;\;\;\;\;\;\;\;\;\;\;\;\;\;\;
\;\;\;\;\;\;\;\;\;\;\;\,\,\,\,
\overset{a}{\geq} \;\;
b^{\,n \,\cdot \, o(1)}\,
\frac{1}{M({P\mathstrut}_{\!X})}\sum_{\substack{1\,\leq\,j\,\leq\,M \, : \\
{\bf x}^{*}(j) \, \in \, T({P\mathstrut}_{\!X})
}} \Pr \big\{{\bf Y} \notin {\cal D}{\mathstrut}_{J} \; | \; J = j\big\}
\nonumber \\
&
\;\;\;\;\;\;\;\;\;\;\;\;\;\;\;\;\;\;\;\;\;\;\;\;\;\;\;\;\;\;\;\;\;\;\;\;\;\;\;\;\;\;\;\;\;\;\;
\;\;\;\;\;\;\;\;\;\;\;\,\,\,\,
\overset{b}{=} \;\;
b^{\,n \,\cdot \, o(1)}\,
\Pr \big\{\,
\widetilde{g}(\widetilde{\bf Y}) \,\neq\, \widetilde{J} \,\big\},
\label{eqSmallerCodebook}
\end{align}
where ($a$) follows by (\ref{eqFrequentType}),
and ($b$) holds for the random variable $\widetilde{J} \sim \text{Unif}\,\big(\{1, 2, .\,.\,.\, , \, M({P\mathstrut}_{\!X})\}\big)$,
independent of the channel noise,
and for the 
channel input/output
random vectors
${\bf x}\big(m_{\widetilde{J}}\big) \rightarrow \widetilde{\bf Y}$
with the decoder
\begin{displaymath} 
\widetilde{g}({\bf y})
\;\; \triangleq \;\;
\left\{
\begin{array}{r l}
0, & \;\;\; {\bf y} \in \bigcap_{\,j \, = \, 1}^{\,M({P\mathstrut}_{\!X})} {\cal D}{\mathstrut}_{m_{j}}^{c}, \\
j, & \;\;\; {\bf y} \in {\cal D}{\mathstrut}_{m_{j}}, \;\;\; 
j \in \{1, 2, .\,.\,.\, , \, M({P\mathstrut}_{\!X})\},
\end{array}
\right.
\end{displaymath}
where $m_{1} < .\,.\,. < m_{M({P\mathstrut}_{\!X})}$ are the indices of the codewords in ${\cal C}$ with
their quantized versions in $T({P\mathstrut}_{\!X})$.

It follows now from (\ref{eqSmallerCodebook}) that the LHS of (\ref{eqBoundTypes}) can be upper-bounded by
(\ref{eqCCBound}) of Lemma~\ref{LemConstComp}
with $\widetilde{R} = \frac{1}{n}\,{\log\mathstrut}_{\!b}\,M({P\mathstrut}_{\!X})$.
Substituting then (\ref{eqFrequentType}) in place of $M({P\mathstrut}_{\!X})$
we obtain a stricter condition under the minimum of (\ref{eqCCBound}), leading to 
an upper bound with
a condition $I({P\mathstrut}_{\!X}, {P\mathstrut}_{\!Y|X}) \,\leq \, R - o(1)$
and to (\ref{eqBoundTypes}).
\end{proof}



\section{PDF to type}\label{PDFtypePDF}


Lemma~\ref{LemPDFtoT} of this section relates between minimums over types and over PDFs.
The next Lemma~\ref{LemQuant}, which has a long proof, is required in the proof of 
Lemma~\ref{LemPDFtoT}, used in turn for Theorem~\ref{thmErrorExp}.

\bigskip

\begin{lemma}[Quantization of PDF] \label{LemQuant}
{\em Let ${\cal X}_{n}$ be an alphabet defined as in (\ref{eqAlphabets}), (\ref{eqDelta})
with $\alpha \in \big(0, \tfrac{r}{1\,+\,2r}\big)$, $r\geq 1$.
Let ${P\mathstrut}_{\!X} \in {\cal P}_{n}({\cal X}_{n})$ be a type and ${p\mathstrut}_{Y|X}(\,\cdot\,|\,x) \in {\cal L}$, $\forall x \in {\cal X}_{n}\,$,
be a collection of functions from (\ref{eqLipschitz}), such that $\mathbb{E}_{{P\mathstrut}_{\!X}}\!\big[{|X|\mathstrut}^{r}\big] \leq c_{X}$,
 $\mathbb{E}_{{P\mathstrut}_{\!X}{p\mathstrut}_{Y|X}}\!\big[{|Y|\mathstrut}^{p}\big]\leq c_{Y}$, $1\leq p \leq r$, and 
$\mathbb{E}_{{P\mathstrut}_{\!X}{p\mathstrut}_{Y|X}}\!\big[{|Y-X|\mathstrut}^{q}\big]\leq c_{XY}$.
Then
for any alphabet ${\cal Y}_{n}$ defined as in (\ref{eqAlphabets}), (\ref{eqDelta})
with $\alpha < \beta < \tfrac{r}{1\,+\,r}(1-\alpha)$,
there exists a joint type ${P\mathstrut}_{\!XY} \in {\cal P}_{n}({\cal X}_{n}\times {\cal Y}_{n})$ with the marginal type ${P\mathstrut}_{\!X}$,
such that}
\begin{align}
\sum_{x\,\in\,{\cal X}_{n}}{P\mathstrut}_{\!X}(x)\int_{\mathbb{R}}{p\mathstrut}_{Y|X}(y\,|\,x)
{\log\mathstrut}_{\!b}\,{p\mathstrut}_{Y|X}(y\,|\,x)dy
\;\; & \geq \;\;
\sum_{\substack{x\,\in\,{\cal X}_{n}\\
y\,\in\,{\cal Y}_{n}}}
{P\mathstrut}_{\!XY}(x, y){\log\mathstrut}_{\!b}\,\frac{{P\mathstrut}_{\!Y|X}(y\,|\,x)}{\Delta_{\beta,\,n}}
\, + \, o(1),
\label{eqCondEntropy} \\
\sum_{x\,\in\,{\cal X}_{n}}{P\mathstrut}_{\!X}(x)\int_{\mathbb{R}}{p\mathstrut}_{Y|X}(y\,|\,x)\,
{|\,y - x\,|\mathstrut}^{q}dy
\;\; & \geq \;\;
\sum_{\substack{x\,\in\,{\cal X}_{n}\\
y\,\in\,{\cal Y}_{n}}}
{P\mathstrut}_{\!XY}(x, y)\,{|\,y - x\,|\mathstrut}^{q} \, + \, o(1),
\label{eqLogGaussian} \\
\int_{\mathbb{R}}{p\mathstrut}_{Y}(y)
{\log\mathstrut}_{\!b}\,{p\mathstrut}_{Y}(y)dy
\;\; & \leq \;\;
\sum_{y\,\in\,{\cal Y}_{n}}
{P\mathstrut}_{\!Y}(y){\log\mathstrut}_{\!b}\,\frac{{P\mathstrut}_{\!Y}(y)}{\Delta_{\beta,\,n}}
\, + \, o(1),
\label{eqMargEntropy}
\end{align}
{\em where ${p\mathstrut}_{Y}(y) = \sum_{\,x\,\in\,{\cal X}_{n}}{P\mathstrut}_{\!X}(x){p\mathstrut}_{Y|X}(y\,|\,x)$, $\forall y \in \mathbb{R}$,
and $o(1)\rightarrow 0$, as $n\rightarrow \infty$, depending only on the parameters $\alpha$, $\beta$, $c_{X}$, $c_{Y}$, $c_{XY}$, $r$, 
$p$, $q$,
and $K$ of (\ref{eqLipschitz}).}
\end{lemma}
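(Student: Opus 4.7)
The plan is to construct the joint type $P_{XY}$ by binning the continuous density $P_X(x) p_{Y|X}(y\,|\,x)$ onto ${\cal X}_n \times {\cal Y}_n$ with bin width $\Delta_{\beta,n}$ in the $y$-coordinate, rounding down to integer multiples of $1/n$, and then redistributing the leftover mass so that the marginal is exactly $P_X$. Concretely, for each $x \in {\cal S}(P_X)$ one would first set
\begin{displaymath}
\widetilde{P}_{Y|X}(y\,|\,x) \;\; = \;\; \int_{y \, - \,\Delta_{\beta,n}/2}^{y \, + \,\Delta_{\beta,n}/2} p_{Y|X}(u\,|\,x)\,du, \qquad y \in {\cal Y}_n,\;\; |y| \leq T_n,
\end{displaymath}
where $T_n \to \infty$ is a truncation radius chosen so that $T_n \Delta_{\beta,n} \to 0$, while Markov's inequality applied to the $p$th moment bound $c_Y$ ensures that the discarded tail $|y| > T_n$ contributes only $o(1)$ to every functional considered. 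One then rounds $\widetilde{P}_{Y|X}(y\,|\,x) \cdot n P_X(x)$ down to integers and restores the missing mass by incrementing a bounded number of coordinates, obtaining the rounded-down $\widehat{p}_{XY}$ flagged in the notation section; the mechanical details would be relegated to Appendix~B.

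Inequality (\ref{eqCondEntropy}) then follows from Jensen's inequality applied bin-by-bin to the convex function $t \log t$: for each $y$,
\begin{displaymath}
\int_{y \, - \,\Delta_{\beta,n}/2}^{y \, + \,\Delta_{\beta,n}/2} p_{Y|X}(u\,|\,x) \log p_{Y|X}(u\,|\,x)\, du \;\;\geq\;\; \widetilde{P}_{Y|X}(y\,|\,x) \log \!\bigl(\widetilde{P}_{Y|X}(y\,|\,x)/\Delta_{\beta,n}\bigr),
\end{displaymath}
and summing over $y$ and averaging against $P_X$ gives the desired lower bound, modulo truncation and rounding corrections. Inequality (\ref{eqLogGaussian}) is obtained similarly: on each bin $|u-x|^q$ differs from $|y-x|^q$ by at most $q \Delta_{\beta,n}(|y-x| + \Delta_{\beta,n}/2)^{q-1}$, so after summing and using $\mathbb{E}\,|Y-X|^{q-1} \leq c_{XY}^{(q-1)/q}$ by Jensen, the total discrepancy is $O(\Delta_{\beta,n}) = o(1)$. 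Each coordinate is perturbed by at most $1/(nP_X(x))$ during rounding, and the support size is controlled by Lemma~\ref{LemSupportXY}; combined with the elementary estimate $|t \log t - s \log s| \leq |t - s|(|\log \min(t, s)| + 1)$, this keeps the perturbation to entropy-like functionals within $O(\log n)$ times the perturbation to linear functionals, and the exponent conditions $\alpha \in (0, r/(1+2r))$ and $\alpha < \beta < r(1-\alpha)/(1+r)$ guarantee that all such contributions vanish as $n \to \infty$.

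The main obstacle is (\ref{eqMargEntropy}), which reverses the direction of Jensen's inequality and therefore cannot be handled by convexity alone; here the hypothesis $p_{Y|X}(\cdot\,|\,x) \in {\cal L}$ is essential, since by convex combination it transfers to a uniform Lipschitz constant $K$ and uniform pointwise bound $\sqrt{K}$ on $p_Y$. I would split the bins of ${\cal Y}_n$ into \emph{large} bins, where $\bar{p}_y \triangleq P_Y(y)/\Delta_{\beta,n} \geq 2K\Delta_{\beta,n}$, and \emph{small} bins. On large bins $p_Y(u) \geq K \Delta_{\beta,n}$ everywhere in the bin, so a second-order Taylor expansion of $t \log t$ about $\bar{p}_y$ yields
\begin{displaymath}
\int_{y - \Delta_{\beta,n}/2}^{y + \Delta_{\beta,n}/2} p_Y(u) \log p_Y(u)\, du \;-\; P_Y(y) \log\!\bigl(P_Y(y)/\Delta_{\beta,n}\bigr) \;\;\leq\;\; K\, \Delta_{\beta,n}^{\,2}/2,
\end{displaymath}
because the first-order term vanishes (by definition of $\bar{p}_y$ as the bin average) and the quadratic remainder satisfies $f''(\xi) = 1/\xi \leq 1/(K\Delta_{\beta,n})$ throughout the bin. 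On small bins, $p_Y \leq 3K\Delta_{\beta,n}$ uniformly, and both $\int p_Y \log p_Y$ and $P_Y(y)\log(P_Y(y)/\Delta_{\beta,n})$ are individually $O(\Delta_{\beta,n}^{\,2} \log n)$, so their difference is also $O(\Delta_{\beta,n}^{\,2} \log n)$ per bin. Summing over the $O(T_n/\Delta_{\beta,n})$ bins in the truncated support yields a total error of $O(T_n \Delta_{\beta,n} \log n) = o(1)$. The hardest piece will be ensuring that the rounding step of paragraph~1 preserves the direction of the inequality in (\ref{eqMargEntropy}) uniformly across bins of very different sizes, and packaging all three error sources (truncation, binning, rounding) into a single $o(1)$ depending only on the parameters listed in the lemma statement.
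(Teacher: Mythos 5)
Your high-level plan — quantize the conditional density onto bins, round down to a type, restore the missing mass — is the same as the paper's, and some of your local arguments are genuinely cleaner than theirs (the Jensen-in-each-bin inequality for (\ref{eqCondEntropy}), and the Lipschitz/Taylor split into large and small bins for (\ref{eqMargEntropy}), neither of which the paper uses; it instead thresholds the density at a level $h$ and invokes a max-entropy bound for the sub-threshold region, plus Lemma~\ref{Lemxlogx}). However there are two concrete gaps that you would need to close.

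First, the truncation step. Markov's inequality applied to $\mathbb{E}[|Y|^{p}]\leq c_{Y}$ controls the probability $\Pr\{|Y|>T_{n}\}$, but not the \emph{differential entropy} contributed by the tail: the integral $\int_{|y|>T_{n}} p_{Y}(y)\log(1/p_{Y}(y))\,dy$ can be large even when the tail probability is small, because $\log(1/p_{Y})$ is unbounded there. The paper avoids truncation entirely; it controls the low-density region $A^{c}=\{p_{XY}\leq h\}$ by first bounding its mass $p_{0}$ in terms of $h$ and the moment constraints, and then bounding the entropy of the conditional density on $A^{c}$ by the entropy of a generalized Gaussian with the matching moments (the step at (\ref{eqLocalEntropy})--(\ref{eqLocalp})). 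You would need an analogous max-entropy argument for the tail $|y|>T_{n}$ before your truncation step is legitimate; Markov alone is not enough.

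Second, the redistribution of residual mass, which you correctly flag as the hardest piece but leave open. The paper's resolution has two interlocking ingredients you do not supply. (i) It quantizes the \emph{density values} downward to multiples of $\Delta_{\gamma,n}$, so that by $\Delta_{\alpha,n}\Delta_{\beta,n}\Delta_{\gamma,n}=1/n$ each resulting bin mass $\widehat{P}_{XY}(x,y)=\widehat{p}_{XY}(x,y)\Delta_{\alpha,n}\Delta_{\beta,n}$ is automatically an integer multiple of $1/n$; the per-bin floor error is then absorbed into the uniform bound $h$ of (\ref{eqDifference}) rather than appearing as a separate rounding-of-integers correction. (ii) It places all of the residual mass $\widehat{P}^{\,c}_{XY}$ on the diagonal $y=Q_{\beta}(x)$ (eq.~(\ref{eqDiagonal})), which keeps the contribution to (\ref{eqLogGaussian}) at most $({\Delta_{\beta,n}}/2)^{q}$ per unit of mass and makes the contribution to the two entropy inequalities controllable through Lemma~\ref{Lemxlogx}, crucially exploiting that every nonzero type mass is $\geq 1/n$. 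Your elementary estimate $|t\log t - s\log s|\leq|t-s|(|\log\min(t,s)|+1)$ only yields a useful $O(\log n)$ factor once you invoke exactly that $\geq 1/n$ fact, and even then you must additionally check that the residual mass placement does not wreck (\ref{eqMargEntropy}), which is \emph{not} protected by convexity. Without specifying where the residual goes and proving all three inequalities survive that step, the argument is incomplete precisely at the point where you flag it as hard.
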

The proof is given in the Appendix B.

\bigskip

\begin{lemma}[PDF to type]\label{LemPDFtoT}
{\em Let ${\cal X}_{n}$ and ${\cal Y}_{n}$ be two alphabets defined as in (\ref{eqAlphabets}), (\ref{eqDelta})
with $\alpha \in \big(0, \tfrac{r}{1\,+\,2r}\big)$ 
and $\alpha < \beta < \tfrac{r}{1\,+\,r}(1-\alpha)$, $r\geq 1$.
Then for any $c_{X}$, $c_{XY}$, $\epsilon > 0$ there exists 
$n_{0} = n_{0}(\alpha, \,\beta, \,c_{X}, \,c_{XY}, \,r, \,q, \,K, \,\epsilon) \in \mathbb{N}$,
such that for any $n > n_{0}$ and
for any type ${P\mathstrut}_{\!X} \in {\cal P}_{n}({\cal X}_{n})$ with $\mathbb{E}_{{P\mathstrut}_{\!X}}\!
\big[{|X|\mathstrut}^{r}\big] \leq c_{X}$:
}
\begin{equation} \label{eqTransitionB}
\inf_{\substack{
{p\mathstrut}_{Y|X}:
\\
{p\mathstrut}_{Y|X}(\, \cdot \, \,|\, x) \, \in \, {\cal L}, \; \forall x,
\\
\mathbb{E}_{{P\mathstrut}_{\!X}{p\mathstrut}_{Y|X}}[{|Y-X|\mathstrut}^{q}] \; \leq \; c_{XY},
\\
I({P\mathstrut}_{\!X}, \,\, {p\mathstrut}_{Y|X}) \; \leq \; R \, - \, 2\epsilon
}}
\!\!\!\!
\Big\{
D\big(\,{p\mathstrut}_{Y|X}\,\|\, \, w \, \,|\, {P\mathstrut}_{\!X}\big)
\Big\}
\; \;
\geq
\min_{\substack{\\{P\mathstrut}_{\!Y|X}:\\
{P\mathstrut}_{\!XY}\,\in \, {\cal P}_{n}({\cal X}_{n}\,\times\, {\cal Y}_{n}),
\\
\mathbb{E}_{{P\mathstrut}_{\!XY}}[{|Y-X|\mathstrut}^{q}] \; \leq \; c_{XY} + \, \epsilon,
\\ I({P\mathstrut}_{\!X}, \, {P\mathstrut}_{\!Y|X}) \; \leq \; R \, - \, \epsilon
}}
\!\!\!\!
\Big\{
D\big({P\mathstrut}_{\!Y|X}\,\|\, {W\mathstrut}_{\!n} \,|\,  {P\mathstrut}_{\!X}\big)
\Big\} \; + \; o(1),
\end{equation}
{\em where $o(1)\rightarrow 0$, as $n\rightarrow \infty$, 
and depends only on the parameters $\alpha$, $\beta$, $c_{X}$, $c_{XY}$, $r$, $q$, $\nu$.}
\end{lemma}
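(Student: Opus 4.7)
The plan is to invoke Lemma~\ref{LemQuant} on every PDF in the feasible set of the infimum, producing a joint type $P_{XY}$ with marginal $P_X$ that (a) lies in the feasible set of the minimum on the RHS of (\ref{eqTransitionB}) and (b) satisfies $D\big(P_{Y|X}\,\|\,W_n\,|\,P_X\big)\leq D\big(p_{Y|X}\,\|\,w\,|\,P_X\big)+o(1)$ with the $o(1)$ uniform over the feasible set. Taking the infimum on the dominant PDF side then yields the claim (and if the infimum is $+\infty$ there is nothing to prove).

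First I would fix the auxiliary parameter $c_Y$ needed to apply Lemma~\ref{LemQuant}. Picking $p=1$, Jensen's inequality together with the constraints of the infimum gives $\mathbb{E}_{{P\mathstrut}_{\!X}{p\mathstrut}_{Y|X}}[|Y|]\leq \mathbb{E}_{{P\mathstrut}_{\!X}}[|X|]+\mathbb{E}_{{P\mathstrut}_{\!X}{p\mathstrut}_{Y|X}}[|Y-X|]\leq c_{X}^{1/r}+c_{XY}^{1/q}\triangleq c_Y$, a bound that depends only on $c_X,c_{XY},r,q$. Consequently the $o(1)$ term furnished by Lemma~\ref{LemQuant} is the same for every feasible $p_{Y|X}$ and depends only on the parameters listed in the statement of the lemma to be proved.

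Given any such $p_{Y|X}$, Lemma~\ref{LemQuant} produces $P_{XY}\in\mathcal{P}_{n}(\mathcal{X}_n\times\mathcal{Y}_n)$ with marginal $P_X$ and
\begin{align}
\mathbb{E}_{{P\mathstrut}_{\!XY}}[|Y-X|^{q}] \;&\leq\; \mathbb{E}_{{P\mathstrut}_{\!X}{p\mathstrut}_{Y|X}}[|Y-X|^{q}] \,+\, o(1),\nonumber\\
I(P_X,P_{Y|X}) \;&\leq\; I(P_X,p_{Y|X}) \,+\, o(1),\nonumber
\end{align}
where the second line is obtained by adding (\ref{eqCondEntropy}) and (\ref{eqMargEntropy}) after writing the mutual information as the difference of an entropy of the joint part and the output entropy. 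For $n$ large enough both $o(1)$ terms are smaller than $\epsilon$, so the distortion constraint $\leq c_{XY}+\epsilon$ and the rate constraint $\leq R-\epsilon$ on the RHS of (\ref{eqTransitionB}) are met. For the divergence itself, using $W_n(y|x)=w(y|x)\Delta_{\beta,\,n}$ and $-\log w(y|x)=\text{const}+\tfrac{\nu}{\ln b}|y-x|^{q}$, we split
\begin{displaymath}
D\big(p_{Y|X}\|w\,|\,P_X\big)-D\big(P_{Y|X}\|W_n\,|\,P_X\big)
\end{displaymath}
into an ``entropy difference'' that is $\geq -o(1)$ by (\ref{eqCondEntropy}) and a ``noise-log'' difference equal to $\tfrac{\nu}{\ln b}$ times $\mathbb{E}_{{P\mathstrut}_{\!X}{p\mathstrut}_{Y|X}}[|Y-X|^{q}]-\mathbb{E}_{{P\mathstrut}_{\!XY}}[|Y-X|^{q}]$, which is $\geq -o(1)$ by (\ref{eqLogGaussian}). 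This yields $D(P_{Y|X}\|W_n|P_X)\leq D(p_{Y|X}\|w|P_X)+o(1)$.

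Combining, for every feasible $p_{Y|X}$ the constructed $P_{XY}$ is feasible on the RHS, and the minimum on the RHS is at most $D(p_{Y|X}\|w|P_X)+o(1)$. Taking the infimum over $p_{Y|X}$ on the right gives (\ref{eqTransitionB}). The only potentially delicate point is ensuring that the $o(1)$ from Lemma~\ref{LemQuant} is independent of the specific $p_{Y|X}$; the uniform bound on $c_Y$ derived above takes care of this, so this is the main conceptual step but not a technical obstacle, since all the heavy lifting has already been done inside Lemma~\ref{LemQuant}.
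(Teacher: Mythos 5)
Your proof is correct and follows essentially the same strategy as the paper's: apply Lemma~\ref{LemQuant} to each feasible $p_{Y|X}$, combine (\ref{eqCondEntropy}) with the noise-log bound to control the divergence, combine (\ref{eqCondEntropy}) with (\ref{eqMargEntropy}) to control the mutual information, use (\ref{eqLogGaussian}) for the distortion constraint, and take the infimum with a uniform $o(1)$. The only small deviation is that you take $p=1$ (with the simpler triangle-inequality bound $c_Y=c_X^{1/r}+c_{XY}^{1/q}$), whereas the paper uses $p=\min\{r,q\}$ with $c_Y={2\mathstrut}^{p-1}(c_X^{p/r}+c_{XY}^{p/q})$; since Lemma~\ref{LemQuant} allows any $p\in[1,r]$ and its $o(1)$ vanishes for every such fixed $p$, both choices are valid. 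One minor wording slip: the mutual-information step is a \emph{difference} of (\ref{eqCondEntropy}) and (\ref{eqMargEntropy}) (as you effectively use after writing $I=h_c-h_m$), not literally an addition of the two inequalities as stated; the conclusion you draw is nonetheless correct.
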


\bigskip

\begin{proof}
Consider 
a type ${P\mathstrut}_{\!X} \in {\cal P}_{n}({\cal X}_{n})$ with a collection of ${p\mathstrut}_{Y|X} \in {\cal L}$ such that $\mathbb{E}_{{P\mathstrut}_{\!X}}\!
\big[{|X|\mathstrut}^{r}\big] \leq c_{X}$ 
and $\mathbb{E}_{{P\mathstrut}_{\!X}{p\mathstrut}_{Y|X}}\!\big[{|Y-X|\mathstrut}^{q}\big]\leq c_{XY}$.  
Using $p = \min\{r, q\}\geq 1$ as in (\ref{eqBoundp}) 
we obtain also $\mathbb{E}_{{P\mathstrut}_{\!X}{p\mathstrut}_{Y|X}}\!\big[{|Y|\mathstrut}^{p}\big] \leq
{2\mathstrut}^{p\,-\,1}\!\big(c_{X}^{\,p/r} + c_{XY}^{\,p/q}\big) \triangleq c_{Y}$. 
Then by Lemma~\ref{LemQuant} there exists a joint type 
${P\mathstrut}_{\!XY} \in {\cal P}_{n}({\cal X}_{n}\times {\cal Y}_{n})$
with the marginal type ${P\mathstrut}_{\!X}$, such that simultaneously
the three inequalities (\ref{eqCondEntropy})-(\ref{eqMargEntropy}) are satisfied. 
It also follows by (\ref{eqChanApprox}) and (\ref{eqLogGaussian}) that
\begin{equation} \label{eqConseq4}
-\mathbb{E}_{{P\mathstrut}_{\!X}{p\mathstrut}_{Y|X}}\!\big[{\log\mathstrut}_{\!b}\,w(Y\,|\,X)\big]
\;\; \geq \;\;
-\mathbb{E}_{{P\mathstrut}_{\!XY}}\!\big[{\log\mathstrut}_{\!b}\,{W\mathstrut}_{\!n}(Y\,|\,X)\big]
 \, + \,
{\log\mathstrut}_{\!b}\,\Delta_{\beta,\,n}
\, + \, \nu\cdot o(1).
\end{equation}
Then the sum of (\ref{eqCondEntropy}) and (\ref{eqConseq4}) gives
\begin{equation} \label{eqObjective}
D\big(\,{p\mathstrut}_{Y|X}\,\|\, \, w \, \,|\, {P\mathstrut}_{\!X}\big)
\;\; \geq \;\;
D\big({P\mathstrut}_{\!Y|X}\,\|\, {W\mathstrut}_{\!n} \,|\,  {P\mathstrut}_{\!X}\big)
\, + \, o(1),
\end{equation}
while the difference of
(\ref{eqCondEntropy})
and
(\ref{eqMargEntropy}) gives
\begin{equation} \label{eqMutualIneq}
I\big({P\mathstrut}_{\!X}, {P\mathstrut}_{\!Y|X}\big)
\;\; \leq \;\;
I\big({P\mathstrut}_{\!X}, \, {p\mathstrut}_{Y|X}\big)
\, + \, o(1).
\end{equation}
Note that all $o(1)$ in the above relations are independent 
of the joint type ${P\mathstrut}_{\!XY}$
and the functions
${p\mathstrut}_{Y|X}$. Therefore
by (\ref{eqObjective}), (\ref{eqMutualIneq}), and (\ref{eqLogGaussian})
we conclude, that
given any $\epsilon > 0$ for $n$ sufficiently large 
for every type ${P\mathstrut}_{\!X}$ with the prerequisites of this lemma and every collection of ${p\mathstrut}_{Y|X}$
which satisfy the conditions under the infimum on the LHS of (\ref{eqTransitionB})
there exists a joint type ${P\mathstrut}_{\!XY}$
such that simultaneously
\begin{align}
I\big({P\mathstrut}_{\!X}, {P\mathstrut}_{\!Y|X}\big)
\;\; & \leq \;\;
I\big({P\mathstrut}_{\!X}, \, {p\mathstrut}_{Y|X}\big)
\, + \, \epsilon,
\nonumber \\
\mathbb{E}_{{P\mathstrut}_{\!XY}}\!\big[{|Y - X|\mathstrut}^{q}\big]
\;\; & \leq \;\;
\mathbb{E}_{{P\mathstrut}_{\!X}{p\mathstrut}_{Y|X}}\!\big[{|Y - X|\mathstrut}^{q}\big]
\, + \, \epsilon,
\nonumber
\end{align}
and (\ref{eqObjective}) 
holds with a uniform $o(1)$, i.e., independent of ${P\mathstrut}_{\!XY}$ and ${p\mathstrut}_{Y|X}$.
It follows that such
${P\mathstrut}_{\!XY}$
satisfies also the conditions under the {\em minimum} on the RHS of (\ref{eqTransitionB})
and results in the objective function of (\ref{eqTransitionB}) 
satisfying (\ref{eqObjective}) with the uniform $o(1)$.
Then the minimum itself, which can only possibly be taken over 
a greater variety of ${P\mathstrut}_{\!XY}$,
satisfies the inequality (\ref{eqTransitionB}).
\end{proof}

We use Lemma~\ref{LemPDFtoT} in Section~\ref{Main}
in the derivation of Theorem~\ref{thmErrorExp}.





\section*{Appendix A}\label{AppendixA}

For generality,
let us define $k \geq 1$ countable alphabets and consider their product alphabet:
\begin{align}
{\cal X}_{n}^{(j)} \; & \triangleq \;
\textstyle\bigcup_{\,i \, \in \, \mathbb{Z}}\big\{i \Delta_{j,\,n}\big\}, \;\;\;
\Delta_{j,\,n} \triangleq n^{-\alpha_{j}}, \;\;\;
\alpha_{j}\geq 0, \;\;\;
j = 1, \, . \, . \, . \,, k;
\;\;\;\;\;\;
{\cal A}_{n} \; \triangleq \; \textstyle \prod_{j\,=\,1}^{k}{\cal X}_{n}^{(j)} \; \subset \; {\mathbb{R}\mathstrut}^{k}.
\nonumber
\end{align}
Let then ${P\mathstrut}_{\!{\bf X}} \in {\cal P}_{n}({\cal A}_{n})$
be a {\em joint type} of $k$ symbols with the denominator $n$. Its support is defined as
\begin{align}
{\cal S}({P\mathstrut}_{\!{\bf X}}) \;\; & \triangleq \;\;
\big\{
{\bf x} \in {\cal A}_{n} : \; 
{P\mathstrut}_{\!{\bf X}}({\bf x}) > 0
\big\}.
\nonumber
\end{align}

\begin{lemma}[Support of a joint type] \label{LemSupportXk}\newline
{\em 
Let ${P\mathstrut}_{\!{\bf X}} \in {\cal P}_{n}({\cal A}_{n})$
be a joint type, such that 
$\mathbb{E}_{{P\mathstrut}_{\!{\bf X}}}\!\big[{\|{\bf X}\|\mathstrut}^{p}_{p}\big] \leq c$ with a real $p\geq 1$.
Then
}
\begin{align}
\big|\, {\cal S}({P\mathstrut}_{\!{\bf X}}) \,\big| \;\; \leq \;\;\, &
\sigma(k, p, c)
\cdot
n^{\frac{k\,+\,p\sum_{j=1}^{k}\alpha_{j}}{k\,+\,p}},
\nonumber \\
& \sigma(k, p, c)
\; \triangleq \;
\big[V_{k}(1)\big]^{\frac{p}{k\,+\,p}}
\big[
{\textstyle
\frac{k\,+\,1}{k}
\big(\max\,\{1, \, k^{1/2\,-\,1/p}\}\cdot c^{1/p} \, + \, \min\,\{k/4, \,\sqrt{k/12}\}\big)
}
\big]^{\frac{kp}{k\,+\,p}},
\nonumber
\end{align}
\end{lemma}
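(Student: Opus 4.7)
The plan is to prove the bound by a sphere-packing argument that plays two constraints against each other: every support point carries probability at least $1/n$, which combined with the moment constraint $\mathbb{E}_{P_{\bf X}}\!\big[\|{\bf X}\|_p^p\big]\le c$ produces a global bound on the $p$-th moments of the support points; and every support point occupies a disjoint lattice cell of volume $\prod_j\Delta_{j,n}=n^{-\sum_j\alpha_j}$, which forces the support to spread out in $\mathbb{R}^k$.

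First I would enumerate $\mathcal{S}(P_{\bf X})=\{{\bf x}_{(1)},\ldots,{\bf x}_{(N)}\}$ in non-decreasing order of $r_i\triangleq \|{\bf x}_{(i)}\|_p$, where $N=|\mathcal{S}(P_{\bf X})|$. The lower bound $P_{\bf X}(x)\ge 1/n$ gives at once $\sum_{i=1}^N r_i^p\le cn$. For the packing lower bound, I would use the elementary inequality $\|x\|_2\le A\|x\|_p$ with $A=\max\{1,k^{1/2-1/p}\}$ to embed $\{x:\|x\|_p\le r\}$ into a Euclidean ball of radius $Ar$, and then enlarge this ball by a Euclidean cell ``radius'' $B'$ so that the cells of all lattice points in $\{x:\|x\|_p\le r_i\}$ fit inside. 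Comparing the total (disjoint) cell volume against the ball volume yields
\[
u_i\triangleq Ar_i+B' \;\ge\; \left(\frac{i}{V_k(1)\,n^{\sum_j\alpha_j}}\right)^{1/k},\qquad i=1,\ldots,N.
\]

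To combine the two estimates I would invoke Minkowski's inequality on the finite sequence $(u_i)$,
\[
\Big(\sum_{i=1}^N u_i^p\Big)^{1/p} \le A\Big(\sum_{i=1}^N r_i^p\Big)^{1/p}+N^{1/p}B' \;\le\; A(cn)^{1/p}+N^{1/p}B',
\]
and contrast it with the integral-comparison lower bound $\sum_i u_i^p \ge \tfrac{k}{k+p}\,N^{(k+p)/k}(V_k(1)n^{\sum_j\alpha_j})^{-p/k}$. Solving the resulting inequality for $N$, and using the trivial $N\le n$ to fold the $N^{1/p}B'$ correction into a clean multiplicative expression of the form $Ac^{1/p}+B'$, produces an upper bound of the advertised shape $|\mathcal{S}(P_{\bf X})|\le \sigma(k,p,c)\cdot n^{(k+p\sum_j\alpha_j)/(k+p)}$.

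The main obstacle is pinning the numerical prefactor down to exactly $\bigl[\tfrac{k+1}{k}(Ac^{1/p}+B)\bigr]^{kp/(k+p)}$ with $B=\min\{k/4,\sqrt{k/12}\}$, independent of $p$ and of the $\Delta_{j,n}$'s. The value $\sqrt{k/12}$ is the root-mean-square Euclidean distance from a uniformly random point in the unit cube to its center, and is the correct $B'$ when one inflates the Euclidean ball in an integrated (moment) sense rather than by worst-case diameter; the competing $k/4$ arises from a coordinate-wise $\ell_1$-type diameter argument, and since both are simultaneously valid one may take their minimum. The $(k+1)/k$ factor, which is slightly looser than the $((k+p)/k)^{k/(k+p)}$ one would obtain by integrating against $x^{p/k}$ directly, emerges most cleanly if one instead applies the integral estimate $\sum_i u_i\ge \tfrac{k}{k+1}N^{(k+1)/k}(V_k(1)n^{\sum_j\alpha_j})^{-1/k}$ to the $\ell^1$-sum and then passes to $\sum r_i^p$ via H\"older's inequality, as this form is uniform in $p$. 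Arranging the algebra so that these conventions collapse to the stated closed form is where the care lies.
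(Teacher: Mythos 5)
Your approach shares the paper's scaffolding -- each support point carries probability at least $1/n$ (so the $p$-th moments sum to at most $cn$), the norm inequality $\|x\|_2\le A\|x\|_p$ with $A=\max\{1,k^{1/2-1/p}\}$, the comparison against a Euclidean ball, and the $\ell^1$-sum + H\"older route to the uniform $(k+1)/k$ prefactor. Your algebra is set up correctly and yields the right $n$-exponent $(k+p\sum_j\alpha_j)/(k+p)$. However, the route through a per-rank packing bound cannot deliver the stated constant. For $u_i = Ar_i + B' \ge \big(i/(V_k(1)n^{\sum_j\alpha_j})\big)^{1/k}$ to hold pointwise you must have the Euclidean ball of radius $Ar_i+B'$ contain the $i$ disjoint cells around $x_{(1)},\dots,x_{(i)}$, which forces $B'$ to be the worst-case cell half-diagonal $\tfrac12\sqrt{\sum_j\Delta_{j,n}^2}$, i.e. $B'=\tfrac12\sqrt{k}$ when all $\alpha_j=0$. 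This is strictly larger than $\sqrt{k/12}$ for every $k\ge 1$ (by a factor $\sqrt{3}$), and larger than $k/4$ for $k<4$; in particular at $k=2$, the only case the lemma is actually used, your constant $1/\sqrt{2}$ versus the stated $1/\sqrt{6}$.

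The paper gets the smaller constant precisely because it never invokes a per-rank worst-case packing: it draws ${\bf D}$ uniform on the cell, so that $\widetilde{\bf U}={\bf U}+{\bf D}$ is uniform on the union of cells $A$, applies $\mathbb{E}\|\widetilde{\bf U}\|_2 \le \mathbb{E}\|{\bf U}\|_2 + \mathbb{E}\|{\bf D}\|_2$ so that only the \emph{mean} cell displacement $\mathbb{E}\|{\bf D}\|_2 \le \min\{k/4,\sqrt{k/12}\}$ matters, and then lower-bounds $\mathbb{E}\|\widetilde{\bf U}\|_2$ against the average over a ball of equal volume by a rearrangement inequality (splitting $A$ into $A\cap B$ and $A\cap B^c$). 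You correctly identify $\sqrt{k/12}$ as an averaged quantity and flag that the prefactor is ``where the care lies,'' but the modification you gesture at -- replacing the worst-case inflation with an ``integrated'' one -- is not compatible with your rank-by-rank framework: any per-$i$ inequality requires the worst case. Once you switch to integrating over the cell noise for each support point, the per-rank structure dissolves into exactly the paper's volume comparison. So your proposal proves a genuine but quantitatively weaker version of the lemma (correct exponent, larger $\sigma$), and to obtain the stated constant you would need to abandon the packing step in favor of the averaged rearrangement argument.
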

{\em where $V_{k}(1)$ is the volume of the $k$-dimensional ball of radius $1$.
}

\bigskip

\begin{proof}
Starting from the condition on ${P\mathstrut}_{\!{\bf X}}$, we obtain the following succession of inequalities:
\begin{align}
\sum_{{\bf x} \, \in \, {\cal S}({P\mathstrut}_{\!{\bf X}})}
{P\mathstrut}_{\!{\bf X}}({\bf x}) {\|{\bf x}\|\mathstrut}^{p}_{p}
\;\; & \leq \;\;
c,
\nonumber \\
\sum_{{\bf x} \, \in \, {\cal S}({P\mathstrut}_{\!{\bf X}})}
\tfrac{1}{n} {\|{\bf x}\|\mathstrut}^{p}_{p}
\;\; & \leq \;\;
c,
\nonumber \\
\sum_{{\bf x} \, \in \, {\cal S}({P\mathstrut}_{\!{\bf X}})}
\tfrac{1}{|\, {\cal S}({P\mathstrut}_{\!{\bf X}}) \,|} {\|{\bf x}\|\mathstrut}^{p}_{p}
\;\; & \leq \;\;
\tfrac{nc}{|\, {\cal S}({P\mathstrut}_{\!{\bf X}}) \,|},
\nonumber \\
\mathbb{E} \big[{\|{\bf U}\|\mathstrut}^{p}_{p}\big]
\;\; & \leq \;\;
\tfrac{nc}{|\, {\cal S}({P\mathstrut}_{\!{\bf X}}) \,|},
\nonumber \\
\big(\mathbb{E} \big[{\|{\bf U}\|\mathstrut}_{p}\big]\big)^{p}
\;\; & \leq \;\;
\tfrac{nc}{|\, {\cal S}({P\mathstrut}_{\!{\bf X}}) \,|},
\label{eqJensenpk} \\
\big(\mathbb{E} \big[k^{\min\{0, \; 1/p\, - \, 1/2\}}{\|{\bf U}\|\mathstrut}_{2}\big]\big)^{p}
\;\; & \leq \;\;
\tfrac{nc}{|\, {\cal S}({P\mathstrut}_{\!{\bf X}}) \,|},
\label{eqMonotNormk}
\end{align}
where ${\bf U} \sim \text{Discrete-Uniform}\big({\cal S}({P\mathstrut}_{\!{\bf X}})\big)$, 
(\ref{eqJensenpk}) follows by Jensen's inequality for $p\geq 1$,
while (\ref{eqMonotNormk}) follows 
by the relation between the norms of a real vector of length $k$:
${\|{\bf U}\|\mathstrut}_{p} \geq k^{\min\{0, \; 1/p\, - \, 1/2\}}{\|{\bf U}\|\mathstrut}_{2}$. 
To the discrete random vector ${\bf U}$ let us add a {\em continuously} distributed random vector
\begin{displaymath}
{\bf D} \; \sim \; \text{Continuous-Uniform}\big(\,
\textstyle\prod_{j\,=\,1}^{k}[-\Delta_{j,\,n}/2, \, \Delta_{j,\,n}/2)
\,\big),
\end{displaymath}
independent of ${\bf U}$.
Then $\widetilde{\bf U} \, = \, {\bf U} + {\bf D} \, \sim \, \text{Continuous-Uniform}(A)$,
where
\begin{displaymath}
A \;\; \triangleq \;\; \bigcup_{{\bf x} \, \in \, {\cal S}({P\mathstrut}_{\!{\bf X}})}
\big\{
{\bf x} \, + \,\textstyle\prod_{j\,=\,1}^{k}
[-\Delta_{j,\,n}/2, \, \Delta_{j,\,n}/2)
\big\}.
\end{displaymath}
Using the property ${\|{\bf D}\|\mathstrut}_{2} \leq {\|{\bf D}\|\mathstrut}_{1}$ and Jensen's inequality,
we write:
\begin{align}
\mathbb{E}\big[{\|{\bf D}\|\mathstrut}_{2}\big] \; \leq \;
\min_{r\,\in\,\{1, \,2\}} \mathbb{E}\big[{\|{\bf D}\|\mathstrut}_{r}\big]
\; \leq \;
\min_{r\,\in\,\{1, \,2\}} \big(\mathbb{E}\big[{\|{\bf D}\|\mathstrut}_{r}^{r}\big]\big)^{1/r}
\; & = \;
\min\Big\{
\tfrac{1}{4}\textstyle\sum_{j \, = \, 1}^{k}\Delta_{j,\,n}, \;
\sqrt{\tfrac{1}{12}\textstyle\sum_{j \, = \, 1}^{k}\Delta_{j,\,n}^{2}}
\Big\}
\nonumber \\
\; & \leq \;
\min\Big\{
\tfrac{k}{4}, \; \sqrt{\tfrac{k}{12}}
\Big\} \; \triangleq \; m_{1}(k).
\label{eqD2min}
\end{align}
Using bounds (\ref{eqMonotNormk}) and (\ref{eqD2min}), we obtain
\begin{align}
k^{m_{2}(p)}\tfrac{(nc)^{1/p}}{|\, {\cal S}({P\mathstrut}_{\!{\bf X}}) \,|^{1/p}}
\, + \,
m_{1}(k)
\;\; & \geq \;\;
\mathbb{E} \big[{\|{\bf U}\|\mathstrut}_{2}\big] \, + \, \mathbb{E} \big[{\|{\bf D}\|\mathstrut}_{2}\big]
\;\; = \;\;
\mathbb{E} \big[{\|{\bf U}\|\mathstrut}_{2} + {\|{\bf D}\|\mathstrut}_{2}\big]
\;\; \geq \;\;
\mathbb{E} \big[{\|\widetilde{\bf U}\|\mathstrut}_{2}\big]
\nonumber \\
& = \;\;
\tfrac{1}{\text{vol}(A)}
\int_{A}{\|\widetilde{\bf u}\|\mathstrut}_{2}\,d\widetilde{\bf u}
\nonumber \\
& = \;\;
\tfrac{1}{\text{vol}(A)}
\bigg[
\int_{A\,\cap\,B}{\|\widetilde{\bf u}\|\mathstrut}_{2}\,d\widetilde{\bf u}
\, +
\int_{A\,\cap\,B{\mathstrut}^{c}}{\|\widetilde{\bf u}\|\mathstrut}_{2}\,d\widetilde{\bf u}
\bigg]
\nonumber \\
& \overset{a}{\geq} \;\;
\tfrac{1}{\text{vol}(A)}
\bigg[
\int_{A\,\cap\,B}{\|\widetilde{\bf u}\|\mathstrut}_{2}\,d\widetilde{\bf u}
\, +
\int_{A{\mathstrut}^{c}\,\cap\,B}{\|{\bf t}\|\mathstrut}_{2}\,d{\bf t}
\bigg]
\nonumber \\
& = \;\;
\tfrac{1}{\text{vol}(A)}
\int_{B}{\|{\bf t}\|\mathstrut}_{2}\,d{\bf t}
\;\; = \;\; \tfrac{V_{k}(1)}{\text{vol}(A)}\int_{0}^{\left(\text{vol}(A)/V_{k}(1)\right)^{1/k}}
rdr^{k},
\label{eqCenteredBallk}
\end{align}
where $m_{2}(p) \triangleq \max\big\{0, \, 1/2 - 1/p\big\}$ 
and for ($a$) we use the $k$-dimensional ball region $B \, \triangleq \, \big\{\widetilde{\bf u}: \; V_{k}({\|\widetilde{\bf u}\|\mathstrut}_{2}) \, \leq \, \text{vol}(A)\big\}$,
centered around zero and of the same total volume as $A$,
so that
\begin{align}
\text{vol}(A\cap B) \, + \, \text{vol}(A\cap B{\mathstrut}^{c})
\;\; = \;\;
\text{vol}(A)
\;\; & = \;\;
\text{vol}(B)
\;\; = \;\;
\text{vol}(A\cap B) \, + \, \text{vol}(A{\mathstrut}^{c}\cap B),
\nonumber \\
\text{vol}(A\cap B{\mathstrut}^{c})
\;\; & = \;\;
\text{vol}(A{\mathstrut}^{c}\cap B),
\nonumber \\
\int_{A\,\cap\,B{\mathstrut}^{c}}{\|\widetilde{\bf u}\|\mathstrut}_{2}\,d\widetilde{\bf u}
\;\; \geq \;\;
\Big[\tfrac{\text{vol}(A)}{V_{k}(1)}\Big]^{1/k}\int_{A\,\cap\,B{\mathstrut}^{c}}d\widetilde{\bf u}
\;\; & = \;\;
\Big[\tfrac{\text{vol}(A)}{V_{k}(1)}\Big]^{1/k}\int_{A{\mathstrut}^{c}\,\cap\,B}d{\bf t}
\;\; \geq \;\;
\int_{A{\mathstrut}^{c}\,\cap\,B}{\|{\bf t}\|\mathstrut}_{2}\,d{\bf t}.
\nonumber
\end{align}
Completing the integration on the RHS of (\ref{eqCenteredBallk}), we obtain
\begin{align}
\tfrac{k}{k \,+\, 1}
\Big[\tfrac{\text{vol}(A)}{V_{k}(1)}\Big]^{1/k}
 = \;\;
\tfrac{k}{k \,+\, 1}
\Big[
\tfrac{|\, {\cal S}({P\mathstrut}_{\!{\bf X}}) \,| \,\cdot\, \prod_{j \, = \, 1}^{k}\Delta_{j,\,n}}{V_{k}(1)}
\Big]^{1/k}
 & \leq \;\;
k^{m_{2}(p)}\tfrac{(nc)^{1/p}}{|\, {\cal S}({P\mathstrut}_{\!{\bf X}}) \,|^{1/p}}
\, + \,
m_{1}(k),
\nonumber \\
\tfrac{k}{k \,+\, 1}\,
|\, {\cal S}({P\mathstrut}_{\!{\bf X}}) \,|^{1/p\,+\,1/k}
\Big[
\tfrac{\prod_{j \, = \, 1}^{k}\Delta_{j,\,n}}{V_{k}(1)}
\Big]^{1/k}
 & \leq \;\;
k^{m_{2}(p)}(nc)^{1/p}
\, + \,
m_{1}(k)
{\underbrace{|\, {\cal S}({P\mathstrut}_{\!{\bf X}}) \,|}_{\leq \; n}}^{\,1/p},
\nonumber \\
\tfrac{k}{k \,+\, 1}\,
|\, {\cal S}({P\mathstrut}_{\!{\bf X}}) \,|^{1/p\,+\,1/k}
\Big[
\tfrac{\prod_{j \, = \, 1}^{k}\Delta_{j,\,n}}{V_{k}(1)}
\Big]^{1/k}
 & \leq \;\;
\big(k^{m_{2}(p)}c^{1/p} + m_{1}(k)\big)n^{1/p},
\nonumber \\
|\, {\cal S}({P\mathstrut}_{\!{\bf X}}) \,|^{1/p\,+\,1/k}
\;\; & \leq \;\;
\tfrac{k \,+\, 1}{k}
\big[V_{k}(1)\big]^{1/k}
\big(k^{m_{2}(p)}c^{1/p} + m_{1}(k)\big)
n^{1/p \, + \, (1/k)\sum_{j=1}^{k}\alpha_{j}}
.
\nonumber
\end{align}
\end{proof}

\section*{Appendix B}\label{AppendixB}


\subsection*{Proof of Lemma~\ref{LemQuant}:}


Using ${P\mathstrut}_{\!X}$ and ${p\mathstrut}_{Y|X}$, it is convenient to define a joint probability density function over $\mathbb{R}^{2}$ as
\begin{align}
{p\mathstrut}_{XY}(x, y) \;\; & \triangleq \;\;
\frac{{P\mathstrut}_{\!X}(i\Delta_{\alpha,\,n})}{\Delta_{\alpha,\,n}}
{p\mathstrut}_{Y|X}(y\,|\,i\Delta_{\alpha,\,n}),
\nonumber \\
& \;\;\;\;\;\;\;\;\;\;\;\;\;\,
\forall x \in \big[(i - 1/2)\Delta_{\alpha,\,n}, \;  (i + 1/2)\Delta_{\alpha,\,n}\big),
\;\forall i \in \mathbb{Z}, \;\forall y \in \mathbb{R},
\label{eqJointPDF}
\end{align}
which is changing only stepwise in $x$-direction.
Note that ${p\mathstrut}_{Y}$ 
of (\ref{eqMargEntropy}) is the $y$-marginal of ${p\mathstrut}_{XY}$.
This gives
\begin{equation} \label{eqJPDFVariance}
\mathbb{E}_{{p\mathstrut}_{XY}}\!\big[{|Y|\mathstrut}^{p}\big] \; = \;
\mathbb{E}_{{P\mathstrut}_{\!X}{p\mathstrut}_{Y|X}}\!\big[{|Y|\mathstrut}^{p}\big]
\; \leq \; c_{Y}.
\end{equation}
Using Jensen's inequalities 
for $p \leq r$, and 
for $r\geq 1$ of the form 
$|x + t{|\mathstrut}^{r} \leq {2\mathstrut}^{r}\big(\tfrac{1}{2}|x{|\mathstrut}^{r} + \tfrac{1}{2}|t{|\mathstrut}^{r}\big)$,
we obtain
\begin{align}
&
\mathbb{E}_{{p\mathstrut}_{XY}}\!\big[{|X|\mathstrut}^{p}\big]
\, \leq \,
\big(
\mathbb{E}_{{p\mathstrut}_{XY}}\!\big[{|X|\mathstrut}^{r}\big]
\big)^{p/r}
\leq \,
\Big(
{2\mathstrut}^{r\,-\,1}\mathbb{E}_{{P\mathstrut}_{\!X}}\!\big[{|X|\mathstrut}^{r}\big]
+
\tfrac{\Delta_{\alpha,\,n}^{r}}{2(r\,+\,1)}
\Big)^{\!p/r}
\leq \,
\Big(
{2\mathstrut}^{p\,-\,1}c_{X} + \tfrac{1}{2(p\,+\,1)} 
\Big)^{\!p/r}
\, \triangleq \, \tilde{c}_{X}.
\label{eqJPDFVarianceX}
\end{align}
We proceed in two stages. First we quantize ${p\mathstrut}_{XY}(x, y)$ by rounding it {\em down} and check
the effect of this on the LHS of (\ref{eqCondEntropy})-(\ref{eqMargEntropy}).
Then we complement the total probability back to $1$, so that the type ${P\mathstrut}_{\!X}$ is conserved,
and check the effect of this on the RHS of (\ref{eqCondEntropy})-(\ref{eqMargEntropy}).

The quantization of ${p\mathstrut}_{XY}(x, y)$ is done by first replacing it with its infimum in each
rectangle $$\big[(i-1/2)\Delta_{\alpha,\,n}, \; (i+1/2)\Delta_{\alpha,\,n}\big) \; \times \;
\big[(j-1/2)\Delta_{\beta,\,n}, \; (j+1/2)\Delta_{\beta,\,n}\big):$$
\begin{align}
{p\mathstrut}_{XY}^{\inf}(x, y)
\;\; & \triangleq \;\;
\inf_{(j \, - \, 1/2)\Delta_{\beta,\,n} \; \leq \; \tilde{y} \; < \; (j \, + \, 1/2)\Delta_{\beta,\,n}}
{p\mathstrut}_{XY}(x, \tilde{y})
,
\nonumber \\
&
\;\;\;\;\;\;\;\;\;\;\;\;\;\;\;\;\;\;\;\;\;\;\;\;
\forall y \in \big[(j - 1/2)\Delta_{\beta,\,n}, \;  (j + 1/2)\Delta_{\beta,\,n}\big),
\;\forall j \in \mathbb{Z}, \;\forall x \in \mathbb{R},
\label{eqInf}
\end{align}
and then quantizing this infimum
down
to the nearest value $k\Delta_{\gamma,\,n}$, $k = 0, 1, 2, .\,.\,.\,$:  
\begin{align}
\widehat{p}{\mathstrut}_{XY}(x, y)
\;\; & \triangleq \;\;
\left\lfloor
{p\mathstrut}_{XY}^{\inf}(x, y)/\Delta_{\gamma,\,n}\right\rfloor\cdot\Delta_{\gamma,\,n},
\;\;\;\;\;\;
\forall (x, y) \in \mathbb{R}^{2}.
\label{eqQuantized}
\end{align}
Due to (\ref{eqCube}), the integral of $\widehat{p}{\mathstrut}_{XY}$ over $\mathbb{R}^{2}$ can be smaller than $1$ only by an integer 
multiple
of $1/n$.
The resulting difference from ${p\mathstrut}_{XY}(x, y)$ at each point $(x, y) \in \mathbb{R}^{2}$ can be bounded 
by a sum of two terms as
\begin{align}
0 \;\; \leq \;\;
{p\mathstrut}_{XY}(x, y) \, - \, \widehat{p}{\mathstrut}_{XY}(x, y)
\;\; & \leq \;\;
\underbrace{(K{P\mathstrut}_{\!X}(i\Delta_{\alpha,\,n})/\Delta_{\alpha,\,n})\cdot\Delta_{\beta,\,n}}_{\text{minimization}} \; + \!\!
\underbrace{\Delta_{\gamma,\,n}}_{\text{quantization}} \; \triangleq \;\; h(x)
\label{eqDifferencex} \\
& \leq \;\;
K n^{\alpha \, - \, \beta}
\, + \, n^{\alpha \, + \, \beta \, - \, 1}
\;\; \leq \;\;
(K + 1)n^{-\delta}
\;\; \triangleq \;\; h,
\label{eqDifference} \\
& 
\forall x \in \big[(i - 1/2)\Delta_{\alpha,\,n}, \;  (i + 1/2)\Delta_{\alpha,\,n}\big),
\;\forall i \in \mathbb{Z}, \;\forall y \in \mathbb{R},
\nonumber
\end{align}
where $\delta \,\triangleq\, \min\,\{\beta, \, 1-\beta\} \, - \, \alpha$ and $K$ is the parameter from (\ref{eqLipschitz}).

For (\ref{eqMargEntropy}) we will require the $y$-marginal of $\widehat{p}{\mathstrut}_{XY}$ from (\ref{eqQuantized}), defined in the usual manner:
\begin{align}
\widehat{p}{\mathstrut}_{Y}(y) \;\; \triangleq \;\;
\int_{\mathbb{R}}\widehat{p}{\mathstrut}_{XY}(x, y)dx
\;\; & \underset{a}{=} \;\;
\sum_{x\,\in\, {\cal S}({P\mathstrut}_{\!X})}\widehat{p}{\mathstrut}_{XY}(x, y)\Delta_{\alpha,\,n}
\;\; \underset{b}{\geq} \;\;
\sum_{x\,\in\, {\cal S}({P\mathstrut}_{\!X})}\big({p\mathstrut}_{XY}(x, y) - h(x)\big)\Delta_{\alpha,\,n}
\label{eqMargUsual} \\
& = \;\;
\sum_{x\,\in\, {\cal S}({P\mathstrut}_{\!X})}{p\mathstrut}_{XY}(x, y)\Delta_{\alpha,\,n} \,\,\,\, - \;\; \sum_{x\,\in\, {\cal S}({P\mathstrut}_{\!X})}\big[K{P\mathstrut}_{\!X}(x)\Delta_{\beta,\,n}
+ \Delta_{\gamma,\,n}\Delta_{\alpha,\,n}\big]
\nonumber \\
& = \;\; {p\mathstrut}_{Y}(y) \; - \; 
K\Delta_{\beta,\,n}
\; - \; |\, {\cal S}({P\mathstrut}_{\!X}) \,|\,\Delta_{\gamma,\,n}\Delta_{\alpha,\,n},
\;\;\;\;\;\; \forall y \in \mathbb{R},
\nonumber
\end{align}
where
the equality
($a$) follows from (\ref{eqJointPDF}), (\ref{eqInf}), (\ref{eqQuantized}),
and the inequality ($b$) follows by (\ref{eqDifferencex}).
Then
\begin{align}
0 \;\; \leq \;\;
{p\mathstrut}_{Y}(y) \, - \, \widehat{p}{\mathstrut}_{Y}(y)
\;\; & \leq \;\;
K\Delta_{\beta,\,n}
\, + \, |\, {\cal S}({P\mathstrut}_{\!X}) \,|\,\Delta_{\gamma,\,n}\Delta_{\alpha,\,n}
\nonumber \\
& \overset{a}{\leq} \;\;
Kn^{-\beta} \, + \,
(4c_{X}^{1/r} + 1)^{r/(1\,+\,r)}n^{-1 \, + \, \beta \, + \,
(1 \, + \, r\alpha)/(1\,+\,r)}
\nonumber \\
& \overset{b}{\leq} \;\;
\big[K\, + \,
(4c_{X}^{1/r} + 1)^{r/(1\,+\,r)}\big]n^{-\delta_{1}}
\;\; \triangleq \;\; h_{1},
\;\;\;\;\;\; \forall y \in \mathbb{R},
\label{eqYDifference}
\end{align}
where ($a$) follows by (\ref{eqSX}) of Lemma~\ref{LemSupportXY} and (\ref{eqDelta}),
and ($b$) holds with $\delta_{1} \,\triangleq\, \min\,\{\beta, \; (1 -\alpha)r/(1 + r) - \beta\}$. 
Comparing two parameters $\delta$ and $\delta_{1}$ of 
the bounds (\ref{eqDifference}) and (\ref{eqYDifference}), respectively, we find that simultaneously
$(\delta > 0) \land (\delta_{1} > 0)$  
if and only if $\alpha \in \big(0, \tfrac{r}{1\,+\,2r}\big)$ and
$\alpha < \beta < \tfrac{r}{1\,+\,r}(1-\alpha)$.

\bigskip

{\em The LHS of (\ref{eqCondEntropy})}


Now 
consider
the LHS of (\ref{eqCondEntropy}).
Note that each function ${p\mathstrut}_{Y|X}(\,\cdot\,|\,x)$ in (\ref{eqCondEntropy})
is bounded and cannot exceed $\sqrt{K}$. Then its differential entropy is {\em lower-bounded} by $-{\log\mathstrut}_{\!b}\,\sqrt{K}$.
On the other hand, the condition 
of the lemma $\mathbb{E}_{{P\mathstrut}_{\!X}{p\mathstrut}_{Y|X}}\!\big[{|Y-X|\mathstrut}^{q}\big]\leq c_{XY}$
implies that 
each function ${p\mathstrut}_{Y|X}(\,\cdot\,|\,x)$ in (\ref{eqCondEntropy}) at ${P\mathstrut}_{\!X}(x) > 0$ 
satisfies $\mathbb{E}_{{P\mathstrut}_{\!X}{p\mathstrut}_{Y|X}}\!\big[{|Y-x|\mathstrut}^{q}\, | \, X = x\big]\leq nc_{XY}$.
Then, similarly as in \cite[Theorem.~8.6.5]{CoverThomas}, its differential entropy is {\em upper-bounded}
by the entropy of a generalized Gaussian PDF with the power parameter $q$. 
It follows that each ${p\mathstrut}_{Y|X}(\,\cdot\,|\,x)$ in (\ref{eqCondEntropy}) has 
a finite
differential entropy. 
With (\ref{eqJointPDF}) we can rewrite the LHS of (\ref{eqCondEntropy}) as
\begin{align}
\sum_{x\,\in\,{\cal X}_{n}}{P\mathstrut}_{\!X}(x)\int_{\mathbb{R}}{p\mathstrut}_{Y|X}(y\,|\,x)
{\log\mathstrut}_{\!b}\,{p\mathstrut}_{Y|X}(y\,|\,x)dy
\;\; = \;\; &
\int\!\!\!\!\int_{\mathbb{R}^{2}}{p\mathstrut}_{XY}(x, y)
{\log\mathstrut}_{\!b}\,{p\mathstrut}_{XY}(x, y)dxdy
\nonumber \\
&
- \sum_{x\,\in\,{\cal X}_{n}}
{P\mathstrut}_{\!X}(x){\log\mathstrut}_{\!b}\,\frac{{P\mathstrut}_{\!X}(x)}{\Delta_{\alpha,\,n}}.
\label{eqCondEntIdeal}
\end{align}
Let us examine the possible increase in (\ref{eqCondEntIdeal})
when ${p\mathstrut}_{XY}$ is replaced with $\widehat{p}{\mathstrut}_{XY}$ defined by (\ref{eqInf})-(\ref{eqQuantized}).
For this, let us define a set in $\mathbb{R}^{2}$ with respect to the parameter $h$ of (\ref{eqDifference}):
\begin{align}
A \;\; & \triangleq \;\;
\big\{
(x, y): \;\; {p\mathstrut}_{XY}(x, y) > h
\big\},
\label{eqCountableUnion}
\end{align}
which is a countable union of disjoint rectangles
by the definition of ${p\mathstrut}_{XY}$ in (\ref{eqJointPDF}). Then
\begin{align}
&
\;\;\;\;
\int\!\!\!\!\int_{\mathbb{R}^{2}}\widehat{p}{\mathstrut}_{XY}(x, y)
{\log\mathstrut}_{\!b}\,\widehat{p}{\mathstrut}_{XY}(x, y)dxdy
\, - \,
\int\!\!\!\!\int_{\mathbb{R}^{2}}{p\mathstrut}_{XY}(x, y)
{\log\mathstrut}_{\!b}\,{p\mathstrut}_{XY}(x, y)dxdy
\nonumber \\
=  &
\;\;
\bigg[\int\!\!\!\!\int_{A{\mathstrut}^{c}}\widehat{p}{\mathstrut}_{XY}(x, y)
{\log\mathstrut}_{\!b}\,\widehat{p}{\mathstrut}_{XY}(x, y)dxdy
\, - \,
\int\!\!\!\!\int_{A{\mathstrut}^{c}}{p\mathstrut}_{XY}(x, y)
{\log\mathstrut}_{\!b}\,{p\mathstrut}_{XY}(x, y)dxdy
\bigg]
\nonumber \\
&
+
\int\!\!\!\!\int_{A}
\big[\,\widehat{p}{\mathstrut}_{XY}(x, y)
{\log\mathstrut}_{\!b}\,\widehat{p}{\mathstrut}_{XY}(x, y)
\, - \,
{p\mathstrut}_{XY}(x, y)
{\log\mathstrut}_{\!b}\,{p\mathstrut}_{XY}(x, y)\big]
\, dxdy.
\label{eqByParts}
\end{align}
Note that the minimum of the function $f(t) = t\,{\log\mathstrut}_{\!b} \,t$ occurs at $t = 1/e$.
Then for $h \leq 1/e$
we have ${p\mathstrut}_{XY}(x, y)
{\log\mathstrut}_{\!b}\,{p\mathstrut}_{XY}(x, y) \, \leq \, \widehat{p}{\mathstrut}_{XY}(x, y)
{\log\mathstrut}_{\!b}\,\widehat{p}{\mathstrut}_{XY}(x, y) \, \leq \, 0$
for all $(x, y) \in A{\mathstrut}^{c}$
and the first of the two terms in (\ref{eqByParts}) is upper-bounded as
\begin{align}
&
\;\;\,\,\,
\int\!\!\!\!\int_{A{\mathstrut}^{c}}\widehat{p}{\mathstrut}_{XY}(x, y)
{\log\mathstrut}_{\!b}\,\widehat{p}{\mathstrut}_{XY}(x, y)dxdy
\, - \,
\int\!\!\!\!\int_{A{\mathstrut}^{c}}{p\mathstrut}_{XY}(x, y)
{\log\mathstrut}_{\!b}\,{p\mathstrut}_{XY}(x, y)dxdy
\label{eqLocalEntropy} \\
\overset{h\, \leq \, 1/e}{\leq} \;\; &
-\int\!\!\!\!\int_{A{\mathstrut}^{c}}{p\mathstrut}_{XY}(x, y)
{\log\mathstrut}_{\!b}\,{p\mathstrut}_{XY}(x, y)dxdy
\;\; \overset{(*)}{=} \;\;
-p_{0}
\int\!\!\!\!\int_{\mathbb{R}^{2}}{p\mathstrut}_{XY}^{c}(x, y)
{\log\mathstrut}_{\!b}\,{p\mathstrut}_{XY}^{c}(x, y)dxdy
\, - \,
p_{0}\,{\log\mathstrut}_{\!b}\,p_{0},
\nonumber
\end{align}
where the equality ($*$) is appropriate
for the case when the upper bound is positive, 
with the definitions:
\begin{align}
p_{0} \;\; & \triangleq \;\;
\int\!\!\!\!\int_{A{\mathstrut}^{c}}{p\mathstrut}_{XY}(x, y)dxdy,
\label{eqP} \\
{p\mathstrut}_{XY}^{c}(x, y)
\;\; & \triangleq \;\;
\Bigg\{
\begin{array}{r r}
{p\mathstrut}_{XY}(x, y)/p_{0}, & (x, y) \in  A{\mathstrut}^{c}, \\
0, & \text{o.w.}
\end{array}
\nonumber
\end{align}
Next we upper-bound the entropy of the probability density function ${p\mathstrut}_{XY}^{c}$ on the RHS of (\ref{eqLocalEntropy})
by that of 
a generalized Gaussian PDF, similarly as in \cite[Theorem.~8.6.5]{CoverThomas}, using (\ref{eqJPDFVariance}) and (\ref{eqJPDFVarianceX}). 
By (\ref{eqJPDFVariance}) we have
\begin{align}
c_{Y}
\; \geq \;
\int\!\!\!\!\int_{A{\mathstrut}^{c}}{p\mathstrut}_{XY}(x, y){|y|\mathstrut}^{p}dxdy
\; = \;
p_{0}\int\!\!\!\!\int_{\mathbb{R}^{2}}{p\mathstrut}_{XY}^{c}(x, y){|y|\mathstrut}^{p}dxdy
\; & = \;
p_{0} \int_{\mathbb{R}}{p\mathstrut}_{Y}^{c}(y){|y|\mathstrut}^{p}dy,
\nonumber \\
\tfrac{1}{p}\,
{\log\mathstrut}_{\!b}\big(
[2\Gamma(1/p)]^{p}p^{1\,-\,p}e c_{Y}/p_{0}
\big)
\; & \geq \;
-\int_{\mathbb{R}}
{p\mathstrut}_{Y}^{c}(y)
{\log\mathstrut}_{\!b}\,{p\mathstrut}_{Y}^{c}(y)
dy,
\nonumber
\end{align}
where the LHS of the last inequality represents the differential entropy of the generalized Gaussian PDF ${p\mathstrut}_{Y}^{G}(y)$
with the power parameter $p$, with zero mean and the scaling parameter such that 
$\mathbb{E}_{{p\mathstrut}_{Y}^{G}}\!\big[{|Y|\mathstrut}^{p}\big] = c_{Y}/p_{0}$. 
Repeating this with $\tilde{c}_{X}$ of (\ref{eqJPDFVarianceX}), we can write:
\begin{align}
\tfrac{2}{p}\,
{\log\mathstrut}_{\!b}\big(
[2\Gamma(1/p)]^{p}p^{1\,-\,p}e [\tilde{c}_{X}c_{Y}]^{1/2}/p_{0}
\big)
\; & \geq \;
-\int\!\!\!\!\int_{\mathbb{R}^{2}}{p\mathstrut}_{XY}^{c}(x, y)
{\log\mathstrut}_{\!b}\,{p\mathstrut}_{XY}^{c}(x, y)dxdy.
\nonumber
\end{align}
So we can rewrite the bound of (\ref{eqLocalEntropy}) in terms of $p_{0}$ defined by (\ref{eqP}):
\begin{align}
&
\int\!\!\!\!\int_{A{\mathstrut}^{c}}\widehat{p}{\mathstrut}_{XY}(x, y)
{\log\mathstrut}_{\!b}\,\widehat{p}{\mathstrut}_{XY}(x, y)dxdy
\, - \,
\int\!\!\!\!\int_{A{\mathstrut}^{c}}{p\mathstrut}_{XY}(x, y)
{\log\mathstrut}_{\!b}\,{p\mathstrut}_{XY}(x, y)dxdy
\nonumber \\
\overset{h\, \leq \, 1/e}{\leq} \;\; &
-\tfrac{p\,+\,2}{p}\,p_{0}\,{\log\mathstrut}_{\!b}\,p_{0}
\, + \, p_{0}\,{\log\mathstrut}_{\!b}\,c_{4},
\;\;\;\;\;\;\;\;\;
c_{4} \; \triangleq \; [2\Gamma(1/p)]^{2}[p^{1\,-\,p}e]^{2/p}[\tilde{c}_{X}c_{Y}]^{1/p}.
\label{eqLocalp}
\end{align}
From (\ref{eqP}) and (\ref{eqCountableUnion}) it is clear that $p_{0} \rightarrow 0$ as $h \rightarrow 0$.
In order to relate between them, 
we use (\ref{eqJPDFVariance}), (\ref{eqJPDFVarianceX}):
\begin{align}
& 
2^{\max\{0, \; p/2 \, - \, 1\}}(\tilde{c}_{X} + c_{Y})
\;\; \geq \;\;
2^{\max\{0, \; p/2 \, - \, 1\}}\!
\int\!\!\!\!\int_{A{\mathstrut}^{c}}{p\mathstrut}_{XY}(x, y)({|x|\mathstrut}^{p} + {|y|\mathstrut}^{p})dxdy
\nonumber \\
\overset{a}{\geq}
\;\; &
\int\!\!\!\!\int_{A{\mathstrut}^{c}}{p\mathstrut}_{XY}(x, y)(x^{2} + y^{2})^{p/2}dxdy
\nonumber \\
\overset{b}{=}
\;\; &
\int\!\!\!\!\int_{B{\mathstrut}_{1}}h\cdot(x^{2} + y^{2})^{p/2}dxdy
\; + \;
\int\!\!\!\!\int_{A{\mathstrut}^{c}\,\cap\,B{\mathstrut}^{c}_{1}}{p\mathstrut}_{XY}(x, y)(x^{2} + y^{2})^{p/2}dxdy
\; - \;
\int\!\!\!\!\int_{A\,\cap\,B{\mathstrut}_{1}}h\cdot(x^{2} + y^{2})^{p/2}dxdy
\nonumber \\
&
\;\;\;\;\;\;\;\;\;\;\;\;\;\;\;\;\;\;\;\;\;\;\;\;\;\;\;\;\;\;\;\;\;\;\;\;\;\;\;\;
\;\;\;\;\;\;\;\;\;\;\;\;\;\;\;\;\;\;\;\;\;\;\;\;\;\;\;\;\;\;\;\;\;\;\;\;\;\;\,\,
-\int\!\!\!\!\int_{A{\mathstrut}^{c}\,\cap\,B{\mathstrut}_{1}}\big(h - {p\mathstrut}_{XY}(x, y)\big)(x^{2} + y^{2})^{p/2}dxdy
\nonumber \\
\geq
\;\; &
\int\!\!\!\!\int_{B{\mathstrut}_{1}}h\cdot(x^{2} + y^{2})^{p/2}dxdy
\; + \;
\Big[\frac{p_{0}}{h\pi}\Big]^{p/2}
\int\!\!\!\!\int_{A{\mathstrut}^{c}\,\cap\,B{\mathstrut}^{c}_{1}}{p\mathstrut}_{XY}(x, y)dxdy
\; - \;
\Big[\frac{p_{0}}{h\pi}\Big]^{p/2}\int\!\!\!\!\int_{A\,\cap\,B{\mathstrut}_{1}}hdxdy
\nonumber \\
&
\;\;\;\;\;\;\;\;\;\;\;\;\;\;\;\;\;\;\;\;\;\;\;\;\;\;\;\;\;\;\;\;\;\;\;\;\;\;\;\;
\;\;\;\;\;\;\;\;\;\;\;\;\;\;\;\;\;\;\;\;\;\;\;\;\;\;\;\;\;\;\;\,
-\Big[\frac{p_{0}}{h\pi}\Big]^{p/2}\int\!\!\!\!\int_{A{\mathstrut}^{c}\,\cap\,B{\mathstrut}_{1}}\big(h - {p\mathstrut}_{XY}(x, y)\big)dxdy
\nonumber \\
= \;\; &
\int\!\!\!\!\int_{B{\mathstrut}_{1}}h\cdot(x^{2} + y^{2})^{p/2}dxdy \;\; = \;\; 
\tfrac{2}{p\,+\,2}{(h\pi)\mathstrut}^{-p/2}p_{0}^{(p\,+\,2)/2},
\nonumber
\end{align}
where in ($a$) we use the relation between the norms: 
${2\mathstrut}^{\max\{0, \; 1/2 \, - \, 1/p\}}({|x|\mathstrut}^{p} + {|y|\mathstrut}^{p}{)\mathstrut}^{1/p} \geq (x^{2} + y^{2}{)\mathstrut}^{1/2}$,
in ($b$) we use the disk set $B{\mathstrut}_{1} \triangleq \big\{ (x, y): \; h\pi(x^{2} + y^{2}) \, \leq \, p_{0}\big\}$,
centered around zero. 
This results in the following upper bound on $p_{0}$ in terms of $h$:
\begin{equation} \label{eqHP}
(c_{5}h)^{p/(p\,+\,2)}
\;\; \geq \;\; p_{0},
\end{equation}
where 
$c_{5} \triangleq 
\pi2^{\max\{0, \, 1 \, - \, 2/p \}}\big[\tfrac{p\,+\,2}{2}\,(\tilde{c}_{X} + c_{Y})\big]^{2/p}$. 
Substituting the LHS of (\ref{eqHP}) in (\ref{eqLocalp})
in place of $p_{0}$, we obtain the following upper bound on
the first half 
of (\ref{eqByParts}) in terms of $h$
of (\ref{eqDifference}), (\ref{eqCountableUnion}):
\begin{align}
&
\int\!\!\!\!\int_{A{\mathstrut}^{c}}\widehat{p}{\mathstrut}_{XY}(x, y)
{\log\mathstrut}_{\!b}\,\widehat{p}{\mathstrut}_{XY}(x, y)dxdy
\, - \,
\int\!\!\!\!\int_{A{\mathstrut}^{c}}{p\mathstrut}_{XY}(x, y)
{\log\mathstrut}_{\!b}\,{p\mathstrut}_{XY}(x, y)dxdy
\nonumber \\
\leq \;\; &
(c_{5}h)^{p/(p\,+\,2)}
\,{\log\mathstrut}_{\!b}\big(\!\max\{1, \, c_{4}\}/(c_{5}h)\big),
\;\;\;\;\;\;
h \,\leq \,
1/\max\{e, \, c_{5}e^{(p\,+\,2)/p}\}.
\label{eqFirstTerm}
\end{align}

In the second term of (\ref{eqByParts})
for $(x, y) \in A$ the integrand can be upper-bounded by Lemma~\ref{Lemxlogx} with its parameters $t$ and $t_{1}$ such that
\begin{align}
t_{1} \; = \; \widehat{p}{\mathstrut}_{XY}(x, y)
\; & \leq \; {p\mathstrut}_{XY}(x, y) \; = \; t_{1} + t,
\;\;\;\;\;\;\;\;\;
t \, \leq \, h \, \leq \, 1/e.
\nonumber
\end{align}
This gives
\begin{align}
&
\int\!\!\!\!\int_{A}
\big[\,\widehat{p}{\mathstrut}_{XY}(x, y)
{\log\mathstrut}_{\!b}\,\widehat{p}{\mathstrut}_{XY}(x, y)
\, - \,
{p\mathstrut}_{XY}(x, y)
{\log\mathstrut}_{\!b}\,{p\mathstrut}_{XY}(x, y)\big]
\,
dxdy
\nonumber \\
\;\; 
\leq
\;\; &
\text{vol}(A)\,
h\,
{\log\mathstrut}_{\!b}(1/h),
\;\;\;\;\;\; h \, \leq \, 1/e,
\label{eqVolh}
\end{align}
where
$\text{vol}(A)$ is the total area of $A$.
To find an upper bound on $\text{vol}(A)$, we use (\ref{eqJPDFVariance}), (\ref{eqJPDFVarianceX}):
\begin{align}
2^{\max\{0, \; p/2 \, - \, 1\}}(\tilde{c}_{X} + c_{Y})
\;\; & \geq \;\;
2^{\max\{0, \; p/2 \, - \, 1\}}
\!
\int\!\!\!\!\int_{A}{p\mathstrut}_{XY}(x, y)({|x|\mathstrut}^{p} + {|y|\mathstrut}^{p})dxdy
\nonumber \\
& \overset{a}{\geq} \;\;
\int\!\!\!\!\int_{A}{p\mathstrut}_{XY}(x, y)(x^{2} + y^{2})^{p/2}dxdy
\;\; \geq \;\;
\int\!\!\!\!\int_{A}h \cdot(x^{2} + y^{2})^{p/2}dxdy
\nonumber \\
& = \;\;
h\left[
\int\!\!\!\!\int_{A\,\cap\,B{\mathstrut}_{2}}(x^{2} + y^{2})^{p/2}dxdy
\, +
\int\!\!\!\!\int_{A\,\cap\,B{\mathstrut}^{c}_{2}}(x^{2} + y^{2})^{p/2}dxdy
\right]
\nonumber \\
& \overset{b}{\geq} \;\;
h\left[
\int\!\!\!\!\int_{A\,\cap\,B{\mathstrut}_{2}}(x^{2} + y^{2})^{p/2}dxdy
\, +
\int\!\!\!\!\int_{A{\mathstrut}^{c}\,\cap\,B{\mathstrut}_{2}}(x^{2} + y^{2})^{p/2}dxdy
\right]
\nonumber \\
& = \;\;
\int\!\!\!\!\int_{B{\mathstrut}_{2}}h \cdot(x^{2} + y^{2})^{p/2}dxdy
\;\; = \;\;
\tfrac{2}{p \, + \, 2}\,h\pi^{-p/2}(\text{vol}(A))^{(p\,+\,2)/2},
\nonumber
\end{align}
where in ($a$) we use the relation between the norms of a real vector of length $2$, 
and to achieve ($b$) we use the disk set
$B{\mathstrut}_{2} \, \triangleq \, \big\{(x, y): \; \pi(x^{2} + y^{2}) \, \leq \, \text{vol}(A)\big\}$,
centered around zero,
of the same total area as $A$, and the resulting property that
$\text{vol}(A{\mathstrut}^{c}\cap B{\mathstrut}_{2}) = \text{vol}(A\cap B{\mathstrut}^{c}_{2})$.
So that
\begin{equation} \label{eqVol}
c_{5}^{\,p/(p\,+\,2)}{h\mathstrut}^{-2/(p\,+\,2)}
\;\; \geq \;\;
\text{vol}(A).
\end{equation}
Continuing (\ref{eqVolh}), therefore we obtain the following
upper bound on the second term in (\ref{eqByParts}):
\begin{align}
\int\!\!\!\!\int_{A}
\big[\,\widehat{p}{\mathstrut}_{XY}(x, y)
{\log\mathstrut}_{\!b}\,\widehat{p}{\mathstrut}_{XY}(x, y)
\, - \,
{p\mathstrut}_{XY}(x, y)
{\log\mathstrut}_{\!b}\,{p\mathstrut}_{XY}(x, y)
\big]
dxdy
\;
& \overset{h\, \leq \, 1/e}{\leq}
\;
(c_{5}h)^{p/(p\,+\,2)}
{\log\mathstrut}_{\!b}(1/h).
\label{eqSecondTerm}
\end{align}
Putting (\ref{eqByParts}), (\ref{eqFirstTerm}) and (\ref{eqSecondTerm}) together:
\begin{align}
&
\int\!\!\!\!\int_{\mathbb{R}^{2}}\widehat{p}{\mathstrut}_{XY}(x, y)
{\log\mathstrut}_{\!b}\,\widehat{p}{\mathstrut}_{XY}(x, y)dxdy
\, - \,
\int\!\!\!\!\int_{\mathbb{R}^{2}}{p\mathstrut}_{XY}(x, y)
{\log\mathstrut}_{\!b}\,{p\mathstrut}_{XY}(x, y)dxdy
\nonumber \\
\leq \;\; &
(c_{5}h)^{p/(p\,+\,2)}
\,{\log\mathstrut}_{\!b}\big(\!\max\{1, \, c_{4}\}/(c_{5}h^{2})\big),
\;\;\;\;\;\;
h \,\leq \,
1/\max\{e, \, c_{5}e^{(p\,+\,2)/p}\},
\label{eqPutting}
\end{align}
where $c_{4}$, $c_{5}$,
and $h$ are such as in (\ref{eqLocalp}), (\ref{eqHP}),
and (\ref{eqDifference}), respectively.
So that if $\delta > 0$ in (\ref{eqDifference}), then the 
possible increase in (\ref{eqCondEntIdeal})
caused by substitution of $\widehat{p}{\mathstrut}_{XY}$ in place of ${p\mathstrut}_{XY}$ is at most $o(1)$.

Later on, for the RHS of (\ref{eqCondEntropy})-(\ref{eqMargEntropy}) we will require also
the loss in the total probability incurred in the replacement of ${p\mathstrut}_{XY}$ by $\widehat{p}{\mathstrut}_{XY}$.
This loss is strictly positive and tends to zero with $h$ of (\ref{eqDifference}): 
\begin{align}
0 \;\; < \;\; p_{1} \;\; & \triangleq \;\;
\underbrace{\int\!\!\!\!\int_{\mathbb{R}^{2}}{p\mathstrut}_{XY}(x, y)dxdy}_{= \; 1} \; - \;
\int\!\!\!\!\int_{\mathbb{R}^{2}}\widehat{p}{\mathstrut}_{XY}(x, y)dxdy
\nonumber \\
& \overset{a}{\leq} \;\;
\underbrace{\int\!\!\!\!\int_{A{\mathstrut}^{c}}{p\mathstrut}_{XY}(x, y)dxdy}_{= \; p_{0}} \; + \;
\int\!\!\!\!\int_{A}\underbrace{\big({p\mathstrut}_{XY}(x, y) - \widehat{p}{\mathstrut}_{XY}(x, y)\big)}_{\leq \; h}dxdy
\nonumber \\
& \overset{b}{\leq} \;\;
p_{0} \, + \, h\cdot\text{vol}(A) \;\; \overset{c}{\leq} \;\;
(c_{5}h)^{p/(p\,+\,2)} \, + \, (c_{5}h)^{p/(p\,+\,2)} \;\; = \;\; 2(c_{5}h)^{p/(p\,+\,2)},
\label{eqProbLoss}
\end{align}
where the set $A$ in ($a$) is defined in (\ref{eqCountableUnion}),
($b$) follows by (\ref{eqP}) and (\ref{eqDifference}),
and ($c$) follows by (\ref{eqHP}), 
(\ref{eqVol}).

\bigskip


{\em The LHS of (\ref{eqMargEntropy})}


Consider next the LHS of (\ref{eqMargEntropy}). Since ${p\mathstrut}_{Y} \in {\cal L}$
and $\mathbb{E}_{{p\mathstrut}_{Y}}\!\big[{|Y|\mathstrut}^{p}\big]\leq c_{Y}$, 
the differential entropy of ${p\mathstrut}_{Y}$ is finite.
Let us examine the possible decrease in the LHS of (\ref{eqMargEntropy})
when ${p\mathstrut}_{Y}$ is replaced with $\widehat{p}{\mathstrut}_{Y}$ defined in (\ref{eqMargUsual}).
For this, let us define a set in $\mathbb{R}$ with respect to the parameter $h_{1}$ of (\ref{eqYDifference}):
\begin{align}
{A\mathstrut}_{1} \;\; & \triangleq \;\;
\big\{
y: \;\; {p\mathstrut}_{Y}(y) > h_{1}
\big\},
\label{eqCountableUnion2}
\end{align}
which is a countable union of disjoint open intervals. Then
\begin{align}
&
\int_{\mathbb{R}}{p\mathstrut}_{Y}(y)
{\log\mathstrut}_{\!b}\,{p\mathstrut}_{Y}(y)dy
\, - \,
\int_{\mathbb{R}}\widehat{p}{\mathstrut}_{Y}(y)
{\log\mathstrut}_{\!b}\,\widehat{p}{\mathstrut}_{Y}(y)dy
\label{eqByParts2} \\
= \; &
\int_{{A\mathstrut}_{1}^{c}}
\big[\,{p\mathstrut}_{Y}(y)
{\log\mathstrut}_{\!b}\,{p\mathstrut}_{Y}(y)
\, - \,
\widehat{p}{\mathstrut}_{Y}(y)
{\log\mathstrut}_{\!b}\,\widehat{p}{\mathstrut}_{Y}(y)
\big]
\,dy
\; + \, 
\int_{{A\mathstrut}_{1}}
\big[\,
{p\mathstrut}_{Y}(y)
{\log\mathstrut}_{\!b}\,{p\mathstrut}_{Y}(y)
\, - \,
\widehat{p}{\mathstrut}_{Y}(y)
{\log\mathstrut}_{\!b}\,\widehat{p}{\mathstrut}_{Y}(y)
\big]
\,dy.
\nonumber
\end{align}
For $h_{1} \leq 1/e$
we have ${p\mathstrut}_{Y}(y)
{\log\mathstrut}_{\!b}\,{p\mathstrut}_{Y}(y) \, \leq \, \widehat{p}{\mathstrut}_{Y}(y)
{\log\mathstrut}_{\!b}\,\widehat{p}{\mathstrut}_{Y}(y) \, \leq \, 0$
for all $y \in {A\mathstrut}_{1}^{c}$
and the first of the two terms in (\ref{eqByParts2}) is non-positive:
\begin{align}
&
\int_{{A\mathstrut}_{1}^{c}}
\big[\,
{p\mathstrut}_{Y}(y)
{\log\mathstrut}_{\!b}\,{p\mathstrut}_{Y}(y)
\, - \,
\widehat{p}{\mathstrut}_{Y}(y)
{\log\mathstrut}_{\!b}\,\widehat{p}{\mathstrut}_{Y}(y)
\big]
\,dy
\;\; \leq \;\; 0.
\label{eqLocalEntropy2} 
\end{align}

In the second term of (\ref{eqByParts2})
for $y \in {A\mathstrut}_{1}$ the integrand can be upper-bounded by Lemma~\ref{Lemxlogx} with its parameters $t$ and $t_{1}$ such that
\begin{align}
t_{1} \; & = \; \widehat{p}{\mathstrut}_{Y}(y)
\; \leq \; t_{1} + t \; 
= \; {p\mathstrut}_{Y}(y)
\; \leq \; \sup_{y\, \in \, \mathbb{R}}{p\mathstrut}_{Y}(y)
\; \leq \; \sqrt{K},
\;\;\;\;\;\;\;\;\;
t \, 
\leq \, h_{1} \, \leq \, 1/e,
\nonumber
\end{align}
where $K$
is the parameter from (\ref{eqLipschitz}).
This gives
\begin{align}
\int_{{A\mathstrut}_{1}}
\big[\,
{p\mathstrut}_{Y}(y)
{\log\mathstrut}_{\!b}\,{p\mathstrut}_{Y}(y)
\, - \,
\widehat{p}{\mathstrut}_{Y}(y)
{\log\mathstrut}_{\!b}\,\widehat{p}{\mathstrut}_{Y}(y)
\big]
\,dy
\,
&
\overset{h_{1}\, \leq \, 1/e}{\leq}
\,
\text{vol}({A\mathstrut}_{1})\,
h_{1}
\max
\big\{
{\log\mathstrut}_{\!b}\,(1/h_{1}),
\;
{\log\mathstrut}_{\!b} (e\sqrt{K})
\big\},
\label{eqVolh2}
\end{align}
where $\text{vol}({A\mathstrut}_{1})$ is the total length of ${A\mathstrut}_{1}$.
It remains to find an upper bound on $\text{vol}({A\mathstrut}_{1})$. 
We use (\ref{eqJPDFVariance}):
\begin{align}
c_{Y}
\;\; \geq \;\;
\int_{{A\mathstrut}_{1}}{p\mathstrut}_{Y}(y){|y|\mathstrut}^{p}dy
\;\; 
\geq \;\;
\int_{{A\mathstrut}_{1}}h_{1} {|y|\mathstrut}^{p}dy
\;\; & = \;\;
h_{1}\left[
\int_{{A\mathstrut}_{1}\,\cap\,{B\mathstrut}_{3}}{|y|\mathstrut}^{p}dy
\, +
\int_{{A\mathstrut}_{1}\,\cap\,{B\mathstrut}_{3}^{c}}{|y|\mathstrut}^{p}dy
\right]
\nonumber \\
& \overset{a}{\geq} \;\;
h_{1}\left[
\int_{{A\mathstrut}_{1}\,\cap\,{B\mathstrut}_{3}}{|y|\mathstrut}^{p}dy
\, +
\int_{{A\mathstrut}_{1}^{c}\,\cap\,{B\mathstrut}_{3}}{|y|\mathstrut}^{p}dy
\right]
\nonumber \\
& = \;\;
\int_{{B\mathstrut}_{3}}h_{1} {|y|\mathstrut}^{p}dy
\;\; = \;\;
\tfrac{1}{{2\mathstrut}^{p}(p \, + \, 1)}\,h_{1}\cdot(\text{vol}({A\mathstrut}_{1}))^{p\,+\,1},
\nonumber
\end{align}
where in ($a$) we use the interval set
${B\mathstrut}_{3} \, \triangleq \, \big[
-\!\text{vol}({A\mathstrut}_{1})/2, \;
\text{vol}({A\mathstrut}_{1})/2
\,\big]$, centered around zero, and
of the same total length as ${A\mathstrut}_{1}$ with the resulting property that
$\text{vol}({A\mathstrut}_{1}^{c}\cap {B\mathstrut}_{3}) = \text{vol}({A\mathstrut}_{1}\cap {B\mathstrut}_{3}^{c})$.
So that
\begin{equation} \label{eqVol2}
c_{6}h^{-1/(p\,+\,1)}_{1}
\;\; \geq \;\;
\text{vol}({A\mathstrut}_{1}),
\end{equation}
where $c_{6} \triangleq \big[{2\mathstrut}^{p}(p + 1)c_{Y}\big]^{1/(p\,+\,1)}$.
Continuing (\ref{eqVolh2}), with (\ref{eqVol2}) we obtain the following
upper bound on the second term in (\ref{eqByParts2}), which is by (\ref{eqLocalEntropy2}) also
an upper bound
on both terms of (\ref{eqByParts2}):
\begin{align}
&
\int_{\mathbb{R}}
{p\mathstrut}_{Y}(y)
{\log\mathstrut}_{\!b}\,{p\mathstrut}_{Y}(y)dy
\, - \,
\!\!
\int_{\mathbb{R}}
\widehat{p}{\mathstrut}_{Y}(y)
{\log\mathstrut}_{\!b}\,\widehat{p}{\mathstrut}_{Y}(y)
dy
\overset{h_{1}\, \leq \, 1/e}{\leq}
c_{6}
h_{1}^{p/(p\,+\,1)}
\max
\big\{
{\log\mathstrut}_{\!b}\,(1/h_{1}),
\,
{\log\mathstrut}_{\!b} (e\sqrt{K})
\big\}.
\label{eqSecondTerm2}
\end{align}
So that if $\delta_{1} > 0$ in (\ref{eqYDifference}), then the possible decrease
caused by substitution of $\widehat{p}{\mathstrut}_{Y}$ in place of ${p\mathstrut}_{Y}$ on the LHS of (\ref{eqMargEntropy}) is at most $o(1)$.

\bigskip

{\em The LHS of (\ref{eqLogGaussian})}


With (\ref{eqQuantizer}) let us define a quantization error function 
\begin{equation} \label{eqRelationship}
v(y) \; \triangleq \; y - Q_{\beta}(y),
\;\;\;\;\;\;
|\,v(y)\,| \; \leq \; \tfrac{1}{2}\Delta_{\beta,\,n} \; \triangleq \; \Delta.
\end{equation}
Then, with $\widehat{p}{\mathstrut}_{XY}$ defined in (\ref{eqQuantized}), 
we can obtain a lower bound for the expression on the LHS of (\ref{eqLogGaussian}):
\begin{align}
&
\Delta_{\alpha,\,n}
\Delta_{\beta,\,n}
\sum_{\substack{x\,\in\,{\cal X}_{n}\\
y\,\in\,{\cal Y}_{n}}}
\widehat{p}{\mathstrut}_{XY}(x, y)\,
{|\,y - x\,|\mathstrut}^{q}
\;\; = \;\;
\Delta_{\alpha,\,n}
\sum_{x\,\in\,{\cal X}_{n}}
\int_{\mathbb{R}}
\widehat{p}{\mathstrut}_{XY}(x, y)\,
{|\,y - v(y) - x\,|\mathstrut}^{q}dy
\nonumber \\
\overset{a}{\leq} \;\; &
\mathbb{E}_{{P\mathstrut}_{\!X}{p\mathstrut}_{Y|X}}\!\Big[
{\big(\,|\,Y - X\,| + |\,v(Y)\,|\,\big)\mathstrut}^{q}
\Big]
\;\; \leq \;\;
\mathbb{E}_{{P\mathstrut}_{\!X}{p\mathstrut}_{Y|X}}\!\Big[
{\big(\,|\,Y - X\,| + \Delta \,\big)\mathstrut}^{q}
\Big]
\nonumber \\
\overset{b}{\leq} \;\; &
\mathbb{E}_{{P\mathstrut}_{\!X}{p\mathstrut}_{Y|X}}\!\Big[
\,{|\,Y - X\,|\mathstrut}^{q} \, + \, 
\Delta
\underbrace{q{\big(\,|\,Y - X\,| + \Delta \,\big)\mathstrut}^{q\,-\,1}}_{\text{derivative of $f(t) = t^{q}$}}
\Big]
\nonumber \\
\overset{c}{\leq} \;\; &
\mathbb{E}_{{P\mathstrut}_{\!X}{p\mathstrut}_{Y|X}}\!\big[
\,{|\,Y - X\,|\mathstrut}^{q}\big]
\, + \, 
\Delta q
\Big(\mathbb{E}_{{P\mathstrut}_{\!X}{p\mathstrut}_{Y|X}}\!\big[
{\big(\,|\,Y - X\,| + \Delta \,\big)\mathstrut}^{q}
\big]
\Big)^{\!(q\,-\,1)/q}
\nonumber \\
\overset{d}{\leq} \;\; &
\mathbb{E}_{{P\mathstrut}_{\!X}{p\mathstrut}_{Y|X}}\!\big[
\,{|\,Y - X\,|\mathstrut}^{q}\big]
\, + \, 
\Delta q
\Big({2\mathstrut}^{q}\mathbb{E}_{{P\mathstrut}_{\!X}{p\mathstrut}_{Y|X}}\!\big[
\tfrac{1}{2}\,{|\,Y - X\,|\mathstrut}^{q} + \tfrac{1}{2}\Delta^{q}
\big]
\Big)^{\!(q\,-\,1)/q}
\nonumber \\
\overset{e}{\leq} \;\; &
\mathbb{E}_{{P\mathstrut}_{\!X}{p\mathstrut}_{Y|X}}\!\big[
\,{|\,Y - X\,|\mathstrut}^{q}\big]
\, + \,
\underbrace{
\Delta q\,
{2\mathstrut}^{(q\,-\,1)^{2}/q}
\big(c_{XY} + \Delta^{q}
\big)^{\!(q\,-\,1)/q}}_{o(1)},
\label{eqLHS}
\end{align}
where ($a$) follows because $\widehat{p}{\mathstrut}_{XY}(x, y)\leq {p\mathstrut}_{XY}(x, y)$ and by (\ref{eqJointPDF});
in ($b$) we use the monotonically increasing derivative of $f(t) = t^{q}$; ($c$) follows by Jensen's inequality for a concave ($\cap$)
function $f(t) = t^{(q\,-\,1)/q}$; ($d$) follows by Jensen's inequality for a convex ($\cup$)
function $f(t) = t^{q}$, $q\geq 1$; and finally ($e$) follows by the condition of the lemma.

\bigskip

{\em Joint type ${P\mathstrut}_{\!XY}$}


Let us define two mutually-complementary probability masses for each $(x, y) \in {\cal X}_{n} \times {\cal Y}_{n}$:
\begin{align}
\,\widehat{\!P}{\mathstrut}_{\!XY}(x, y) \;\; & \triangleq \;\;
\widehat{p}{\mathstrut}_{XY}(x, y)\Delta_{\alpha,\,n}\Delta_{\beta,\,n},
\label{eqPDFtoType} \\
\,\widehat{\!P}{\mathstrut}_{\!XY}^{\,c}(x, y) \;\; & \triangleq \;\;
\Bigg\{
\begin{array}{r r}
{P\mathstrut}_{\!X}(x) -
\sum_{\tilde{y}\,\in\,{\cal Y}_{n}}\,\widehat{\!P}{\mathstrut}_{\!XY}(x, \tilde{y}), & \;\;\;
y =  Q_{\beta}(x), \\
0, & \;\;\; \text{o.w.},
\end{array}
\label{eqDiagonal} 
\end{align}
where $Q_{\beta}(\cdot)$ is defined as in (\ref{eqQuantizer}).
It follows from 
(\ref{eqQuantized}) and (\ref{eqCube}),
that each number $\,\widehat{\!P}{\mathstrut}_{\!XY}(x, y)$
is an integer multiple of $1/n$
and $\sum_{y\,\in\,{\cal Y}_{n}}\,\widehat{\!P}{\mathstrut}_{\!XY}(x, y) \leq {P\mathstrut}_{\!X}(x)$
for each $x \in {\cal X}_{n}$.
Then a joint type can be formed with the two definitions above:
\begin{equation} \label{eqJointType}
{P\mathstrut}_{\!XY}(x, y) \;\; \triangleq \;\; \,\widehat{\!P}{\mathstrut}_{\!XY}(x, y) \, + \, \,\widehat{\!P}{\mathstrut}_{\!XY}^{\,c}(x, y),
\;\;\;\;\;\; \forall (x, y) \in {\cal X}_{n}\times {\cal Y}_{n},
\end{equation}
such that ${P\mathstrut}_{\!XY} \in {\cal P}_{n}({\cal X}_{n}\times {\cal Y}_{n})$
and $\sum_{y\,\in\,{\cal Y}_{n}}{P\mathstrut}_{\!XY}(x, y) = {P\mathstrut}_{\!X}(x)$ for each $x \in {\cal X}_{n}$.

\bigskip

{\em The RHS of (\ref{eqLogGaussian})}


Having defined ${P\mathstrut}_{\!XY}$ and $\,\widehat{\!P}{\mathstrut}_{\!XY}$,
let us examine the possible decrease in the expression found on the RHS of (\ref{eqLogGaussian}) when 
${P\mathstrut}_{\!XY}$ inside that expression is replaced with $\,\widehat{\!P}{\mathstrut}_{\!XY}$:
\begin{align}
&
\sum_{\substack{x\,\in\,{\cal X}_{n}\\
y\,\in\,{\cal Y}_{n}}}
{P\mathstrut}_{\!XY}(x, y)\,
{|\,y - x\,|\mathstrut}^{q}
\; - \;
\sum_{\substack{x\,\in\,{\cal X}_{n}\\
y\,\in\,{\cal Y}_{n}}}
\,\widehat{\!P}{\mathstrut}_{\!XY}(x, y)\,
{|\,y - x\,|\mathstrut}^{q}
\;\;
\underset{a}{=} \;\;
\sum_{\substack{x\,\in\,{\cal X}_{n}\\
y\,\in\,{\cal Y}_{n}}}
\,\widehat{\!P}{\mathstrut}_{\!XY}^{\,c}(x, y)\,
{|\,y - x\,|\mathstrut}^{q}
\nonumber \\
\underset{b}{=} \;\; &
\sum_{\substack{x\,\in\,{\cal X}_{n}\\
y\,\in\,{\cal Y}_{n}}}
\,\widehat{\!P}{\mathstrut}_{\!XY}^{\,c}(x, y)\,
{|\,v(x)\,|\mathstrut}^{q}
\;\;
\underset{c}{\leq} \;\; 
p_{1}{2\mathstrut}^{-q}\Delta_{\beta,\,n}^{q}
\;\; \underset{d}{\leq} \;\;
\underbrace{(c_{5}h)^{p/(p\,+\,2)}{2\mathstrut}^{1\,-\,q}\Delta_{\beta,\,n}^{q}}_{o(1)},
\label{eqRHS}
\end{align}
where ($a$) follows by (\ref{eqJointType}),
($b$) follows according to the definitions (\ref{eqDiagonal}) and (\ref{eqRelationship}),
($c$) follows because $|\,v(x)\,| \, \leq \, \frac{1}{2}\Delta_{\beta,\,n}$ and because
\begin{align}
\sum_{\substack{x\,\in\,{\cal X}_{n}\\
y\,\in\,{\cal Y}_{n}}}
\,\widehat{\!P}{\mathstrut}_{\!XY}^{\,c}(x, y)
\;\; & \underset{(\ref{eqJointType})}{=} \;\;
1 \, - \,
\sum_{\substack{x\,\in\,{\cal X}_{n}\\
y\,\in\,{\cal Y}_{n}}}
\,\widehat{\!P}{\mathstrut}_{\!XY}(x, y)
\;\; \underset{(\ref{eqPDFtoType})}{=} \;\;
1 \, - \,
\Delta_{\alpha,\,n}
\Delta_{\beta,\,n}
\sum_{\substack{x\,\in\,{\cal X}_{n}\\
y\,\in\,{\cal Y}_{n}}}
\widehat{p}{\mathstrut}_{XY}(x, y)
\nonumber \\
& \overset{(\ref{eqQuantized})}{=} \;\;
1 \, - \,
\int\!\!\!\!\int_{\mathbb{R}^{2}}\widehat{p}{\mathstrut}_{XY}(x, y)dxdy
\;\; \overset{(\ref{eqProbLoss})}{=} \;\; p_{1} > 0,
\label{eqLostProb}
\end{align}
then ($d$) follows by the upper bound on $p_{1}$ of (\ref{eqProbLoss}).
Since by definition (\ref{eqPDFtoType})
we also have
\begin{displaymath}
\sum_{\substack{x\,\in\,{\cal X}_{n}\\
y\,\in\,{\cal Y}_{n}}}
\,\widehat{\!P}{\mathstrut}_{\!XY}(x, y)\,
{|\,y - x\,|\mathstrut}^{q} \;\;
=
\;\;
\Delta_{\alpha,\,n}
\Delta_{\beta,\,n}
\sum_{\substack{x\,\in\,{\cal X}_{n}\\
y\,\in\,{\cal Y}_{n}}}
\widehat{p}{\mathstrut}_{XY}(x, y)\,
{|\,y - x\,|\mathstrut}^{q},
\end{displaymath}
which is exactly the beginning of (\ref{eqLHS}),
then combining (\ref{eqLHS}) and (\ref{eqRHS}) we obtain (\ref{eqLogGaussian}).

The remainder of the proof for (\ref{eqCondEntropy}) and (\ref{eqMargEntropy}) will easily follow by Lemma~\ref{Lemxlogx}
applied to corresponding discrete entropy expressions with probability masses.

\bigskip

{\em The RHS of (\ref{eqCondEntropy})}


In order to upper-bound the expression on the RHS of (\ref{eqCondEntropy}), it is convenient to write:
\begin{align}
&
\sum_{\substack{x\,\in\,{\cal X}_{n}\\
y\,\in\,{\cal Y}_{n}}}
{P\mathstrut}_{\!XY}(x, y)\,{\log\mathstrut}_{\!b}\,\frac{{P\mathstrut}_{\!XY}(x, y)}
{\Delta_{\alpha,\,n}
\Delta_{\beta,\,n}}
\, - \,
\sum_{\substack{x\,\in\,{\cal X}_{n}\\
y\,\in\,{\cal Y}_{n}}}
\,\widehat{\!P}{\mathstrut}_{\!XY}(x, y)\,{\log\mathstrut}_{\!b}\,\frac{\,\widehat{\!P}{\mathstrut}_{\!XY}(x, y)}
{\Delta_{\alpha,\,n}
\Delta_{\beta,\,n}}
\nonumber \\
\underset{a}{=} \;\; &
(1-\gamma)
\sum_{\substack{x\,\in\,{\cal X}_{n}\\
y\,\in\,{\cal Y}_{n}}}
\,\widehat{\!P}{\mathstrut}_{\!XY}^{\,c}(x, y)\,{\log\mathstrut}_{\!b}\,n
\; + \;
\sum_{\substack{x\,\in\,{\cal X}_{n}\\
y\,\in\,{\cal Y}_{n}}}
\Big[\,
{P\mathstrut}_{\!XY}(x, y)\,{\log\mathstrut}_{\!b}\,{P\mathstrut}_{\!XY}(x, y)
\, - \,
\,\widehat{\!P}{\mathstrut}_{\!XY}(x, y)\,{\log\mathstrut}_{\!b}\,\,\widehat{\!P}{\mathstrut}_{\!XY}(x, y)
\,\Big]
\nonumber \\
\underset{b}{\leq} \;\; &
(1-\gamma)
\sum_{\substack{x\,\in\,{\cal X}_{n}\\
y\,\in\,{\cal Y}_{n}}}
\,\widehat{\!P}{\mathstrut}_{\!XY}^{\,c}(x, y)\,{\log\mathstrut}_{\!b}\,n
\; + \;
\sum_{\substack{x\,\in\,{\cal X}_{n}\\
y\,\in\,{\cal Y}_{n}}}
\max\Big\{\!
-\,\widehat{\!P}{\mathstrut}_{\!XY}^{\,c}(x, y)\,{\log\mathstrut}_{\!b}\,\,\widehat{\!P}{\mathstrut}_{\!XY}^{\,c}(x, y), \;\;
\,\widehat{\!P}{\mathstrut}_{\!XY}^{\,c}(x, y)\,{\log\mathstrut}_{\!b}\,e
\Big\}
\nonumber \\
\underset{c}{\leq} \;\; &
(1-\gamma)
\sum_{\substack{x\,\in\,{\cal X}_{n}\\
y\,\in\,{\cal Y}_{n}}}
\,\widehat{\!P}{\mathstrut}_{\!XY}^{\,c}(x, y)\,{\log\mathstrut}_{\!b}\,n
\; + \;
\sum_{\substack{x\,\in\,{\cal X}_{n}\\
y\,\in\,{\cal Y}_{n}}}
\,\widehat{\!P}{\mathstrut}_{\!XY}^{\,c}(x, y)\,{\log\mathstrut}_{\!b}\,n
\nonumber \\
\underset{d}{=} \;\; &
(2-\gamma)p_{1}\,{\log\mathstrut}_{\!b}\,n
\;\; \underset{e}{\leq} \;\;
\underbrace{4(c_{5}h)^{p/(p\,+\,2)}\,{\log\mathstrut}_{\!b}\,n}_{o(1)}
,
\label{eqDiscreteEntropy}
\end{align}
where ($a$) follows by (\ref{eqJointType}) and (\ref{eqDelta});
in ($b$) 
we apply the upper bound of Lemma~\ref{Lemxlogx} with its parameters $t_{1} = \,\widehat{\!P}{\mathstrut}_{\!XY}(x, y)$
and $t_{1} + t = {P\mathstrut}_{\!XY}(x, y) \leq 1$
with $t = \,\widehat{\!P}{\mathstrut}_{\!XY}^{\,c}(x, y)\leq p_{1}$ by (\ref{eqJointType}) and  
(\ref{eqLostProb}), and 
for $n$ sufficiently large such that $p_{1} \leq 1/e$ by (\ref{eqProbLoss}); 
($c$) follows for $n > e$ since $\,\widehat{\!P}{\mathstrut}_{\!XY}^{\,c}(x, y) \geq 1/n$ when positive; 
the equality ($d$) follows by (\ref{eqLostProb}) and the inequality ($e$) follows by (\ref{eqProbLoss}).
Now since
\begin{displaymath}
\sum_{\substack{x\,\in\,{\cal X}_{n}\\
y\,\in\,{\cal Y}_{n}}}
\,\widehat{\!P}{\mathstrut}_{\!XY}(x, y)\,{\log\mathstrut}_{\!b}\,\frac{\,\widehat{\!P}{\mathstrut}_{\!XY}(x, y)}
{\Delta_{\alpha,\,n}
\Delta_{\beta,\,n}}
\;\; = \;\;
\int\!\!\!\!\int_{\mathbb{R}^{2}}\widehat{p}{\mathstrut}_{XY}(x, y)
{\log\mathstrut}_{\!b}\,\widehat{p}{\mathstrut}_{XY}(x, y)dxdy,
\end{displaymath}
the inequality in (\ref{eqCondEntropy}) follows by comparing (\ref{eqCondEntIdeal}), (\ref{eqPutting}), and (\ref{eqDiscreteEntropy}).

\bigskip

{\em The RHS of (\ref{eqMargEntropy})}


With $\,\widehat{\!P}{\mathstrut}_{\!Y}(y) \triangleq \sum_{x\,\in \,{\cal X}_{n}}\,\widehat{\!P}{\mathstrut}_{\!XY}(x, y)$
and $\,\widehat{\!P}{\mathstrut}_{\!Y}^{\,c}(y) \triangleq \sum_{x\,\in \,{\cal X}_{n}}\,\widehat{\!P}{\mathstrut}_{\!XY}^{\,c}(x, y)$
we have
\begin{align}
&
\sum_{y\,\in\,{\cal Y}_{n}}
{P\mathstrut}_{\!Y}(y)\,{\log\mathstrut}_{\!b}\,\frac{{P\mathstrut}_{\!Y}(y)}
{\Delta_{\beta,\,n}}
\, - \,
\sum_{y\,\in\,{\cal Y}_{n}}
\,\widehat{\!P}{\mathstrut}_{\!Y}(y)\,{\log\mathstrut}_{\!b}\,\frac{\,\widehat{\!P}{\mathstrut}_{\!Y}(y)}
{\Delta_{\beta,\,n}}
\label{eqDiscreteMargEntropy} \\
\underset{a}{=} \;\; &
\beta
\sum_{y\,\in\,{\cal Y}_{n}}
\,\widehat{\!P}{\mathstrut}_{\!Y}^{\,c}(y)\,{\log\mathstrut}_{\!b}\,n
\; + \;
\sum_{y\,\in\,{\cal Y}_{n}}
\Big[\,
{P\mathstrut}_{\!Y}(y)\,{\log\mathstrut}_{\!b}\,{P\mathstrut}_{\!Y}(y)
\, - \,
\,\widehat{\!P}{\mathstrut}_{\!Y}(y)\,{\log\mathstrut}_{\!b}\,\,\widehat{\!P}{\mathstrut}_{\!Y}(y)
\,\Big]
\nonumber \\
\underset{b}{\geq} \;\; &
\beta
\sum_{y\,\in\,{\cal Y}_{n}}
\,\widehat{\!P}{\mathstrut}_{\!Y}^{\,c}(y)\,{\log\mathstrut}_{\!b}\,n
\; + \;
\sum_{y\,\in\,{\cal Y}_{n}}
\,\widehat{\!P}{\mathstrut}_{\!Y}^{\,c}(y)\,{\log\mathstrut}_{\!b}\,\,\widehat{\!P}{\mathstrut}_{\!Y}^{\,c}(y)
\nonumber \\
\underset{c}{\geq} \;\; &
\beta
\sum_{y\,\in\,{\cal Y}_{n}}
\,\widehat{\!P}{\mathstrut}_{\!Y}^{\,c}(y)\,{\log\mathstrut}_{\!b}\,n
\; + \;
\sum_{y\,\in\,{\cal Y}_{n}}
\,\widehat{\!P}{\mathstrut}_{\!Y}^{\,c}(y)\,{\log\mathstrut}_{\!b}\,(1/n)
\;\;
\underset{d}{=} \;\;
-(1 - \beta)p_{1}
\,{\log\mathstrut}_{\!b}\,n
\;\; \underset{e}{\geq} \;\;
\underbrace{-2(c_{5}h)^{p/(p\,+\,2)}\,{\log\mathstrut}_{\!b}\,n}_{o(1)},
\nonumber
\end{align}
where ($a$) follows by (\ref{eqJointType}) and (\ref{eqDelta});
in ($b$) we 
apply the lower bound of Lemma~\ref{Lemxlogx} with its parameters $t_{1} = \,\widehat{\!P}{\mathstrut}_{\!Y}(y)$
and $t_{1} + t = {P\mathstrut}_{\!Y}(y)$
with $t = \,\widehat{\!P}{\mathstrut}_{\!Y}^{\,c}(y)\leq p_{1}$ by (\ref{eqJointType}) and  
(\ref{eqLostProb}), 
and for $n$ sufficiently large such that $p_{1} \leq 1/e$ by (\ref{eqProbLoss}); 
($c$) follows because (\ref{eqLostProb}) is positive and 
$\,\widehat{\!P}{\mathstrut}_{\!Y}^{\,c}(y) \geq 1/n$ whenever positive; 
the equality ($d$) follows by (\ref{eqLostProb}) and the inequality ($e$) follows by (\ref{eqProbLoss}).
From (\ref{eqMargUsual}) and (\ref{eqPDFtoType}) we observe that $\,\widehat{\!P}{\mathstrut}_{\!Y}(y) = \widehat{p}{\mathstrut}_{Y}(y)\Delta_{\beta,\,n}$.
Since the function $\widehat{p}{\mathstrut}_{Y}(y)$
is piecewise constant in $\mathbb{R}$ by the definition of $\widehat{p}{\mathstrut}_{XY}$,
it follows that
\begin{displaymath}
\sum_{y\,\in\,{\cal Y}_{n}}
\,\widehat{\!P}{\mathstrut}_{\!Y}(y)\,{\log\mathstrut}_{\!b}\,\frac{\,\widehat{\!P}{\mathstrut}_{\!Y}(y)}
{\Delta_{\beta,\,n}}
\;\; = \;\;
\int_{\mathbb{R}}
\widehat{p}{\mathstrut}_{Y}(y)
{\log\mathstrut}_{\!b}\,\widehat{p}{\mathstrut}_{Y}(y)dy.
\end{displaymath}
Then the inequality (\ref{eqMargEntropy}) follows by comparing (\ref{eqSecondTerm2}),
(\ref{eqDiscreteMargEntropy}).
This concludes the proof of Lemma~\ref{LemQuant}.
$\square$

\bigskip

\begin{lemma}[\!\!\cite{TridenskiSomekhBaruh23}]\label{Lemxlogx}
{\em Let $f(x) = x\ln x\,$, then, for $0 < t \leq 1/e$ and $t_{1} > 0\,$,}
\begin{displaymath}
t \ln t
\;\; \leq \;\;
f(t_{1} + t) - f(t_{1})
\;\; \leq \;\; t\,\ln \max
\big\{
1/t, \; (t_{1} + t)e
\big\}.
\end{displaymath}
\end{lemma}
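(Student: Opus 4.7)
The plan is to establish the two inequalities separately using elementary calculus applied to $f(x) = x\ln x$, whose derivative $f'(x) = 1 + \ln x$ is increasing on $(0, \infty)$.

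For the upper bound, I would invoke the mean value theorem: $f(t_1+t) - f(t_1) = t\,f'(\xi)$ for some $\xi \in (t_1, t_1+t)$. Since $f'$ is increasing, the inequality $\xi < t_1+t$ gives
\[
f(t_1+t) - f(t_1) \;\leq\; t\big(1 + \ln(t_1+t)\big) \;=\; t\ln\!\big((t_1+t)e\big),
\]
which is trivially dominated by $t\ln\max\{1/t,\;(t_1+t)e\}$.

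For the lower bound, I would introduce the auxiliary function $h(t_1) \triangleq f(t_1+t) - f(t_1) - t\ln t$ and show that $h(t_1) \geq 0$ for all $t_1 > 0$. A direct computation yields $h'(t_1) = \ln(t_1+t) - \ln t_1 = \ln(1 + t/t_1) > 0$, so $h$ is strictly increasing on $(0, \infty)$. Evaluating the right-limit at zero using $\lim_{u\to 0^+} u\ln u = 0$,
\[
h(0^+) \;=\; \lim_{t_1 \to 0^+}\big[(t_1+t)\ln(t_1+t) - t_1 \ln t_1\big] - t\ln t \;=\; t\ln t - t\ln t \;=\; 0,
\]
so $h(t_1) \geq 0$ throughout.

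Neither step is delicate, and no obstacle is expected; in fact, the hypothesis $t \leq 1/e$ is not needed for either inequality in isolation. Its role is to make the upper bound useful in the application to Lemma~\ref{LemQuant}: in the regime of use one has $t_1 + t \leq 1$, whence $(t_1+t)e \leq e \leq 1/t$, so the maximum evaluates to $1/t$ and the upper bound collapses to the clean expression $-t\ln t$ that is actually invoked in steps ($b$) of (\ref{eqDiscreteEntropy}) and (\ref{eqDiscreteMargEntropy}).
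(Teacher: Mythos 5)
Your proof is correct. Since the paper attributes this lemma to the cited reference \cite{TridenskiSomekhBaruh23} and does not reproduce a proof in-text, there is no internal argument to compare against; but the two halves of your derivation are both sound and standard. The mean-value theorem plus monotonicity of $f'(x)=1+\ln x$ gives $f(t_1+t)-f(t_1)\leq t\ln\big((t_1+t)e\big)\leq t\ln\max\{1/t,\,(t_1+t)e\}$, and the lower bound follows from $h(t_1)\triangleq f(t_1+t)-f(t_1)-t\ln t$ being increasing with $h(0^+)=0$. Your closing remark is also accurate: the constraint $t\leq 1/e$ is not needed for either inequality per se, but in the applications in steps ($b$) of (\ref{eqDiscreteEntropy}) and (\ref{eqDiscreteMargEntropy}) one has $t_1+t\leq 1$ and $t\leq p_1\leq 1/e$, so $(t_1+t)e\leq e\leq 1/t$ and the upper bound reduces to $-t\ln t$, which is what is used there.
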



\bibliographystyle{IEEEtran}

\end{document}